\documentclass[journal, onecolumn]{IEEEtran}

\usepackage{cite}
\usepackage{version}
\usepackage{fullpage,epsfig,graphicx,psfrag,color,subfig}
\usepackage{amsmath}
\usepackage{amssymb}


\usepackage{xspace}
\usepackage{bbm}
\usepackage{mathrsfs}

%
%
%
%
%
%
%
%
%
%
%
%


\newcommand{\Bc}{\mathcal{B}}

\newcommand{\Ec}{\mathcal{E}}

\newcommand{\Rc}{\mathcal{R}}

\newcommand{\Uc}{\mathcal{U}}
\newcommand{\Vc}{\mathcal{V}}

\newcommand{\Xc}{\mathcal{X}}
\newcommand{\Yc}{\mathcal{Y}}
\newcommand{\Zc}{\mathcal{Z}}

\newcommand{\aep}{{\mathcal{T}_{\epsilon}^{(n)}}}



\newcommand{\Lh}{{\hat{L}}}
\newcommand{\Mh}{{\hat{M}}}

\newcommand{\Xh}{{\hat{X}}}

\newcommand{\Zh}{{\hat{Z}}}

\newcommand{\lh}{{\hat{l}}}
\newcommand{\mh}{{\hat{m}}}

\newcommand{\xh}{{\hat{x}}}

\newcommand{\zh}{{\hat{z}}}


\newcommand{\Lt}{{\tilde{L}}}
\newcommand{\Mt}{{\tilde{M}}}

\newcommand{\Rt}{{\tilde{R}}}

\newcommand{\lt}{{\tilde{l}}}


\def\d{\delta}
\def\e{\epsilon}

\DeclareMathOperator\E{\sf E}
\let\P\relax
\DeclareMathOperator\P{\sf P}







\def\textiid{i.i.d.\@\xspace}
\newcommand\iid{\ifmmode\text{ i.i.d. } \else \textiid \fi}






\newtheorem{lemma}{Lemma}
\newtheorem{theorem}{Theorem}

\newtheorem{corollary}{Corollary}


\begin{document}
\title{Cascade, Triangular and Two Way Source Coding with degraded side information at the second user\footnote{Material presented in part at Allerton Conference on Communication, Control and Computing 2010}}
\author{Yeow-Khiang Chia\IEEEauthorrefmark{1}, Haim Permuter\IEEEauthorrefmark{2} and Tsachy Weissman\IEEEauthorrefmark{3}
\thanks{\IEEEauthorrefmark{1} Yeow-Khiang Chia is with Stanford University, USA. Email: ykchia@stanford.edu}  \thanks{\IEEEauthorrefmark{2} Haim Permuter is with Ben Gurion University, Israel. Email haimp@bgu.ac.il}
\thanks{\IEEEauthorrefmark{3}Tsachy Weissman is with Stanford University and Technion, Israel Institute of Technology. Email: tsachy@stanford.edu}%
}

\maketitle

\begin{abstract}
We consider the Cascade and Triangular rate-distortion problems where the same side information is available at the source node and User 1, and the side information available at User 2 is a degraded version of the side information at the source node and User 1. We characterize the rate-distortion region for these problems. For the Cascade setup, we showed that, at User 1, decoding and re-binning the codeword sent by the source node for User 2 is optimum. We then extend our results to the Two way Cascade and Triangular setting, where the source node is interested in lossy reconstruction of the side information at User 2 via a rate limited link from User 2 to the source node. We characterize the rate distortion regions for these settings. Complete explicit characterizations for all settings are also given in the Quadratic Gaussian case. We conclude with two further extensions: A triangular source coding problem with a helper, and an extension of our Two Way Cascade setting in the Quadratic Gaussian case. 
\end{abstract}
\begin{keywords}
Cascade source coding, Triangular source coding, Two way source coding, Quadratic Gaussian, source coding with a helper 
\end{keywords}
\section{Introduction} \label{sect:1}
The problem of lossy source coding through a cascade was first considered by Yamamoto~\cite{Yamamoto}, where a source node (Node 0) sends a message to Node 1, which then sends a message to Node 2. Since Yamamoto's work, the cascade setting has been extended in recent years through incorporating side information at either Nodes 1 or 2. In~\cite{Vasudevan}, the authors considered the Cascade problem with side information $Y$ at Node 1 and $Z$ at Node 2, with the Markov Chain $X-Z-Y$. The authors provided inner and outer bounds for this setup and showed that the bounds coincide for the Gaussian case. In~\cite{Cuff}, the authors considered the Cascade problem where the side information is known only to the intermediate node and provided inner and outer bounds for this setup.

Of most relevance to this paper is the work in~\cite{Permuter}, where the authors considered the Cascade source coding problem with side information available at both Node 0 and Node 1 and established the rate distortion region for this setup. The Cascade setting was then extended to the Triangular source coding setting where an additional rate limited link is available from the source node to Node 2. 

\begin{figure} [!h] 
\begin{center}
\psfrag{x}{$X$}
\psfrag{y}{$Y$}
\psfrag{z}{$Z$}
\psfrag{r1}{$R_1$}
\psfrag{r2}{$R_2$}
\psfrag{x1}{$\Xh_1$}
\psfrag{x2}{$\Xh_2$}
\psfrag{n0}[l]{Node 0}
\psfrag{n1}[l]{Node 1}
\psfrag{n2}[l]{Node 2}
\scalebox{0.75}{\includegraphics{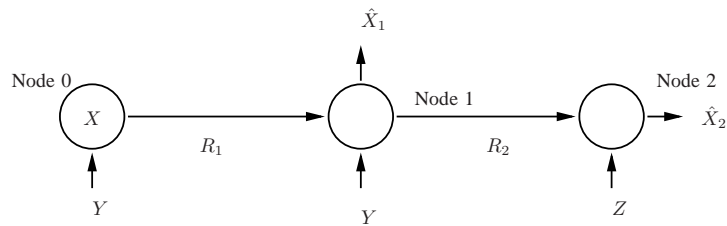}} 
\end{center}
\caption{Cascade source coding setting} \label{fig0}
\end{figure} 

\begin{figure} [!h] 
\begin{center}
\psfrag{x}{$X$}
\psfrag{y}{$Y$}
\psfrag{z}{$Z$}
\psfrag{r1}{$R_1$}
\psfrag{r2}{$R_2$}
\psfrag{r3}{$R_3$}
\psfrag{x1}{$\Xh_1$}
\psfrag{x2}{$\Xh_2$}
\psfrag{n0}[l]{Node 0}
\psfrag{n1}[l]{Node 1}
\psfrag{n2}[l]{Node 2}
\scalebox{0.75}{\includegraphics{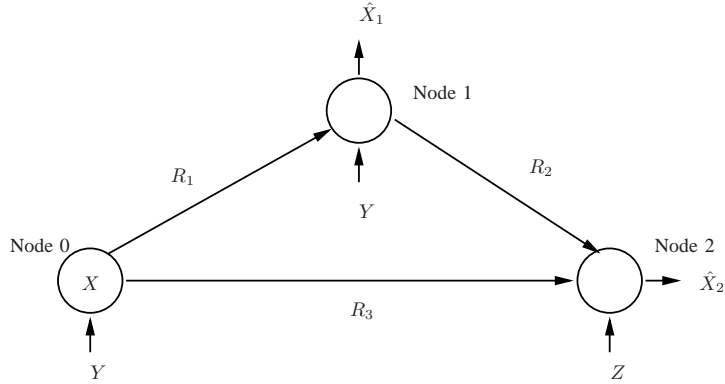}} 
\end{center}
\caption{Triangular source coding setting.} \label{fig1}
\end{figure} 

In this paper, we extend the Cascade and Triangular source coding setting in~\cite{Permuter} to include additional side information $Z$ at Node 2, with the constraint that the Markov chain $X-Y-Z$ holds. Under the Markov constraint, we establish the rate distortion regions for both the Cascade and Triangular setting. The Cascade and Triangular settings are shown in Figures~\ref{fig0} and~\ref{fig1}, respectively. In the Cascade case, we show that, at Node 1, decoding and re-binning the codeword sent by Node 0 to Node 2 is optimum. To our knowledge, this is the first setting where the decode and re-bin scheme at the Cascade is shown to be optimum. It appears to rely quite heavily on the fact that the side information at Node 2 is degraded: Since Node 1 can decode any codeword intended for Node 2, there is no need for Node 0 to send additional information for Node 1 to relay to Node 2 on the $R_1$ link. Node 0 can therefore tailor the transmission for Node 1 and rely on Node 1 to decode and minimize the rate required on the $R_2$ link. We also extend our results to two way source coding through a cascade, where Node 0 wishes to obtain a lossy version of $Z$ through a rate limited link from Node 2 to Node 0. This setup generalizes the two way source coding result found in~\cite{Kaspi}. The Two Way Cascade Source Coding and Two Way Triangular Source Coding are given in Figures~\ref{fig02} and~\ref{fig2}, respectively.

\begin{figure} [!h] 
\begin{center}
\psfrag{x}{$X$}
\psfrag{y}{$Y$}
\psfrag{z}{$Z$}
\psfrag{r1}{$R_1$}
\psfrag{r2}{$R_2$}
\psfrag{r4}{$R_3$}
\psfrag{x1}{$\Xh_1$}
\psfrag{x2}{$\Xh_2$}
\psfrag{z}{\large $Z$}
\psfrag{zh}{$\Zh$}
\psfrag{n0}[l]{Node 0}
\psfrag{n1}[l]{Node 1}
\psfrag{n2}[l]{Node 2}
\scalebox{0.75}{\includegraphics{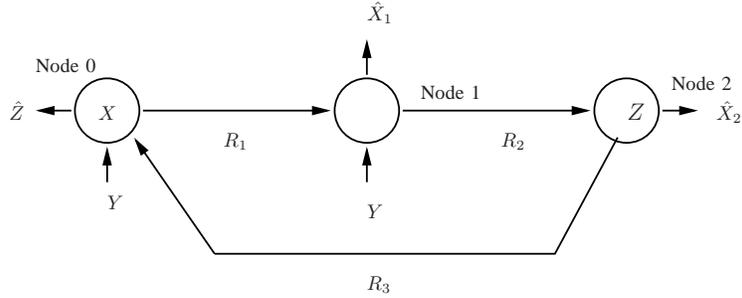}} 
\end{center}
\caption{Setup for two way cascade source coding.} \label{fig02}
\end{figure}

\begin{figure} [!h] 
\begin{center}
\psfrag{x}{$X$}
\psfrag{y}{$Y$}
\psfrag{z}{$Z$}
\psfrag{r1}{$R_1$}
\psfrag{r2}{$R_2$}
\psfrag{r3}{$R_3$}
\psfrag{r4}{$R_4$}
\psfrag{x1}{$\Xh_1$}
\psfrag{x2}{$\Xh_2$}
\psfrag{z}{\large $Z$}
\psfrag{zh}{$\Zh$}
\psfrag{n0}[l]{Node 0}
\psfrag{n1}[l]{Node 1}
\psfrag{n2}[l]{Node 2}
\scalebox{0.75}{\includegraphics{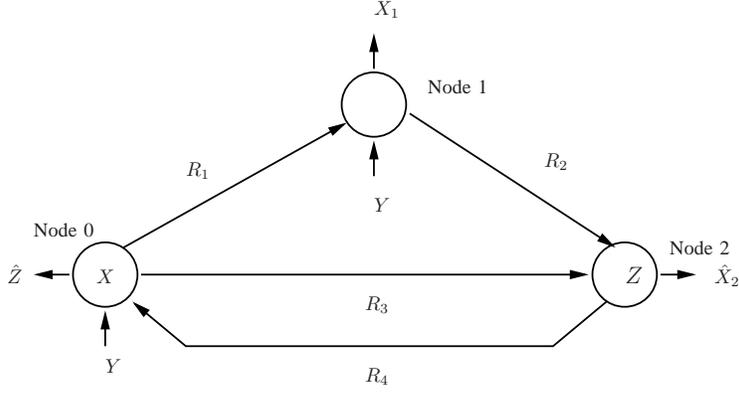}} 
\end{center}
\caption{Setup for two way triangular source coding.} \label{fig2}
\end{figure}
The rest of the paper is as follows. In section~\ref{sect:2}, we provide the formal definitions and problem setup. In section~\ref{sect:3}, we present and prove our results for the aforementioned settings. In section~\ref{sect:4}, we consider the Quadratic Gaussian case. We show that Gaussian auxiliary random variables suffice to exhaust the rate distortion regions and their parameters may be found through solving a tractable low dimensional optimization problem. We also showed that our Quadratic Gaussian settings may be transformed into equivalent settings in~\cite{Permuter} where explicit characterizations were given. In the Quadratic Gaussian case, we also extended our settings to solve a more general case of Two Way Cascade source coding. In section~\ref{sect:5}, we extend our triangular source coding setup to include a helper, which observes the side information $Y$, and has a rate limited link to Node 2. Our Two Way Cascade Quadratic Gaussian Extension is shown in Figure~\ref{fig4} (in section~\ref{sect:4}), while our helper extension is shown in Figure~\ref{fig3} (in section~\ref{sect:5}). We conclude the paper in section~\ref{sect:6}. 
\section{Problem Definition} \label{sect:2}
In this section, we give formal definitions for the setups under consideration. We will follow the notation of~\cite[Lecture 1]{El-Gamal--Kim2010}. Unless otherwise stated, all logarithms in this paper are taken to base 2. The source sequences under consideration, $\{X_i \in \Xc, i = 1, 2, \ldots\}$, $\{Y_i \in \Yc, i = 1, 2, \ldots\}$ and $\{Z_i \in \Zc, i = 1, 2, \ldots\}$, are drawn from finite alphabets $\Xc$, $\Yc$ and $\Zc$ respectively. For any $i \ge 1$, the random variables $(X_i, Y_i, Z_i)$ are independent and identically distributed according to $p(x,y,z) = p(x)p(y|x)p(z|y)$; i.e. $X- Y- Z$. The distortion measure between sequences is defined in the usual way. Let $d: \Xc \times \mathcal{\Xh} \rightarrow [0, \infty)$. Then, 
\begin{align*}
d(x^n, \xh^n) := \frac{1}{n} \sum_{i=1}^n d(x_i, \xh_i). 
\end{align*}

\subsection{Cascade and Triangular Source coding}
We give formal definition for the Triangular source coding setting (Figure~\ref{fig1}). The Cascade setting follows from specializing the definitions for the Triangular setting by setting $R_3 = 0$. A $(n, 2^{nR_1}, 2^{nR_{2}}, 2^{nR_3}, D_1, D_2)$ code for the Triangular setting consists of 3 encoders
\begin{align*}
f_1 \mbox{ (at Node 0) } &: \Xc^n \times \Yc^n \rightarrow M_1 \in [1:2^{nR_1}], \\
f_2 \mbox{ (at Node 1) } &: \Yc^n \times [1:2^{nR_1}] \rightarrow M_2 \in [1:2^{nR_2}], \\
f_3 \mbox{ (at Node 0) } &: \Xc^n \times \Yc^n \rightarrow M_3 \in [1:2^{nR_3}],
\end{align*}
and 2 decoders
\begin{align*}
g_1 \mbox{ (at Node 1) } &: \Yc^n \times [1:2^{nR_1}] \rightarrow \hat{\mathcal{X}}_1^n, \\
g_2 \mbox{ (at Node 2) } &: \Zc^n \times [1:2^{nR_2}]\times [1:2^{nR_3}] \rightarrow \hat{\mathcal{X}}_2^n,
\end{align*}
such that
\begin{align*}
\E\left[ \frac{1}{n}\sum_{i=1}^n d_j(X_i, \Xh_{j,i})\right] \le D_j, \quad \mbox{j=1,2},
\end{align*}
where $\hat{X}_1^n = g_1 (Y^n, f_1(X^n, Y^n))$ and $\hat{X}_2^n = g_2(Z^n, f_2 (Y^n, f_1(X^n, Y^n)), f_3(X^n, Y^n))$. 

Given $(D_1, D_2)$, a $(R_1, R_2, R_3)$ rate tuple for the triangular source coding setting is said to be \textit{achievable} if, for any $\e >0$ and $n$ sufficiently large, there exists a $(n, 2^{n(R_1+ \e)}, 2^{n(R_{2}+\e)}, 2^{n(R_3+ \e)}, D_1 + \e, D_2 + \e)$ code for the Triangular source coding setting.

The \textit{rate-distortion region}, $\Rc(D_1, D_2)$, is defined as the closure of the set of all achievable rate tuples.

\subsubsection*{Cascade Source coding} The Cascade source coding setting corresponds to the case where $R_3 = 0$.

\subsection{Two way Cascade and Triangular Source Coding} We give formal definitions for the more general Two way Triangular source coding setting shown in Figure~\ref{fig2}.
A \\$(n, 2^{nR_1}, 2^{nR_{2}}, 2^{nR_3}, 2^{nR_4}, D_1, D_2, D_3)$ code for the Triangular setting consists of 4 encoders
\begin{align*}
f_1 \mbox{ (at Node 0) } &: \Xc^n \times \Yc^n \rightarrow M_1 \in [1:2^{nR_1}], \\
f_2 \mbox{ (at Node 1) } &: \Yc^n \times [1:2^{nR_1}] \rightarrow M_2 \in [1:2^{nR_2}], \\
f_3 \mbox{ (at Node 0) } &: \Xc^n \times \Yc^n \rightarrow M_3 \in [1:2^{nR_3}], \\
f_4 \mbox{ (at Node 2) } &: \Zc^n \times [1:2^{nR_2}] \times [1:2^{nR_3}] \rightarrow M_4 \in [1:2^{nR_4}], \\
\end{align*}
and 3 decoders
\begin{align*}
g_1 \mbox{ (at Node 1) } &: \Yc^n \times [1:2^{nR_1}] \rightarrow \mathcal{\Xh}_1^n, \\
g_2 \mbox{ (at Node 2) } &: \Zc^n \times [1:2^{nR_2}]\times [1:2^{nR_3}] \rightarrow \mathcal{\Xh}_2^n, \\
g_3 \mbox{ (at Node 0) } &: \Xc^n \times \Yc^n \times [1:2^{nR_4}] \rightarrow \mathcal{\Zh}^n, \\
\end{align*}
such that
\begin{align*}
\E\left[ \frac{1}{n}\sum_{i=1}^n d_j(X_i, \Xh_{j,i})\right] &\le D_j, \quad \mbox{j=1,2 and}, \\
\E\left[ \frac{1}{n}\sum_{i=1}^n d_3(Z_i, \Zh_{i})\right] &\le D_j,
\end{align*}
where $\hat{X}_1^n = g_1 (Y^n, f_1(X^n, Y^n))$, $\hat{X}_2^n = g_2(Z^n, f_2 (Y^n, f_1(X^n, Y^n)), f_3(X^n, Y^n))$ and \\$\Zh^n = g_3(X^n, Y^n, f_4(Z^n, f_2 (Y^n, f_1(X^n, Y^n)), f_3(X^n, Y^n)))$.

Given $(D_1, D_2, D_3)$, a $(R_1, R_2, R_3, R_4)$ rate tuple for the two way triangular source coding setting is said to be \textit{achievable} if, for any $\e >0$ and $n$ sufficiently large, there exists a $(n, 2^{n(R_1+ \e)}, 2^{n(R_{2}+\e)}, 2^{n(R_3+ \e)},2^{n(R_4+ \e)}, D_1 + \e, D_2 + \e, D_3+ \e)$ code for the two way triangular source coding setting.

The \textit{rate-distortion region}, $\Rc(D_1, D_2, D_3)$, is defined as the closure of the set of all achievable rate tuples.

\subsubsection*{Two way Cascade Source coding} The Two way Cascade source coding setting corresponds to the case where $R_3 = 0$. In the special case of Two way Cascade setting, we will use $R_3$, rather than $R_4$, to denote the rate from Node 2 to Node 0.
\section{Main results} \label{sect:3}
In this section, we present our main results, which are single letter characterizations of the rate-distortion regions for the four settings introduced in section~\ref{sect:2}. The single letter characterizations for the Cascade source coding setting, Triangular source coding setting, Two way Cascade source coding setting and Two way Triangular source coding setting are given in Theorems~\ref{thm1},~\ref{thm2},~\ref{thm3} and~\ref{thm4}, respectively. While Theorems~\ref{thm1} to~\ref{thm3} can be derived as special cases of Theorem~\ref{thm4}, for clarity and to illustrate the development of the main ideas, we will present Theorems~\ref{thm1} to~\ref{thm4} separately. In each of the Theorems, we will present a sketch of the achievability proof and proof of the converse. Details of the achievability proofs for Theorems~\ref{thm1}-\ref{thm4} are given in Appendix~\ref{appen:1}. Proofs of the cardinality bounds for the auxiliary random variables appearing in the Theorems are given in Appendix~\ref{appen:2}.

\subsection{Cascade Source Coding}
\begin{theorem}[Rate Distortion region for Cascade source coding] \label{thm1}
$\Rc(D_1,D_2)$ for the Cascade source coding setting defined in section~\ref{sect:2} is given by the set of all rate tuples $(R_1, R_2)$ satisfying
\begin{align*}
R_2 &\ge I(U;X,Y|Z), \\
R_1 &\ge I(X;\Xh_1, U|Y)
\end{align*}
for some $p(x,y,z,u,\xh_1) = p(x)p(y|x)p(z|y)p(u|x,y)p(\xh_1|x,y,u)$ and function $g_2 : \Uc \times \Zc \rightarrow \Xh_2$ such that
\begin{align*}
\E d_j(X, \Xh_j) \le D_j, \quad \mbox{j=1,2}.
\end{align*}
\end{theorem}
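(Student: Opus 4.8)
The plan is to prove achievability and the converse separately. For achievability, I would use a layered Wyner--Ziv scheme tailored to the degraded structure. Node 0 first generates a codebook of $U^n$ sequences covering $(X^n, Y^n)$, and then, conditioned on each $u^n$, a superimposed codebook of $\Xh_1^n$ sequences. The $U^n$ codeword is binned against $Z^n$ at rate $I(U;X,Y|Z)$ (this is the rate Node 1 forwards to Node 2 on the $R_2$ link after decode-and-rebin), while the full description $(U^n, \Xh_1^n)$ is binned against $Y^n$ at rate $I(X, \Xh_1, U; X|Y) = I(\Xh_1, U; X|Y)$ for Node 1 to recover on the $R_1$ link. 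Since $Y$ is a better observation of $X$ than $Z$ (by the Markov chain $X-Y-Z$), Node 1 can decode $u^n$ (it already holds $Y^n$, hence certainly can decode what $Z^n$ can), re-bin it at the smaller rate $I(U;X,Y|Z)$, and pass that index to Node 2, who recovers $u^n$ using $Z^n$ as side information. Node 1 reconstructs $\Xh_1^n$ directly; Node 2 forms $\Xh_{2,i} = g_2(U_i, Z_i)$ symbol by symbol. Standard covering and packing (mutual covering / Wyner--Ziv) arguments and the typical-sequence distortion bound give the distortion constraints; details are deferred to Appendix~\ref{appen:1}.

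For the converse, suppose a $(n, 2^{nR_1}, 2^{nR_2}, D_1, D_2)$ code is given with messages $M_1 = f_1(X^n,Y^n)$ and $M_2 = f_2(Y^n, M_1)$. The first task is to identify the auxiliary random variable. I would set $U_i := (M_2, Z^{i-1})$ or a variant such as $U_i := (M_2, Z_{i+1}^n)$ chosen to make the single-letterization work; the reconstruction function is $\Xh_{2,i} = g_2(U_i, Z_i)$, which requires checking that $\Xh_{2,i}$ is indeed a function of $(U_i, Z_i)$ — it is, since $\Xh_2^n$ is a function of $M_2$ alone in the cascade ($R_3 = 0$). For the $R_2$ bound, $nR_2 \ge H(M_2) \ge H(M_2|Z^n) = \sum_i I(M_2; X_i, Y_i | Z^n, X^{i-1}, Y^{i-1})$ after standard manipulation, and one massages the conditioning using the i.i.d. structure and the Markov chain to get $\sum_i I(U_i; X_i, Y_i | Z_i)$, invoking the Csisz\'ar sum identity to trade $Z^{i-1}$ against $Z_{i+1}^n$ terms. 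For the $R_1$ bound, $nR_1 \ge H(M_1) \ge H(M_1|Y^n) \ge I(M_1; X^n|Y^n) = \sum_i I(M_1; X_i | Y^n, X^{i-1})$; since $\Xh_1^n$ is a function of $(Y^n, M_1)$ and $M_2$ is a function of $(Y^n, M_1)$ as well, one recovers $\sum_i I(\Xh_{1,i}, U_i; X_i | Y_i)$, again cleaning up conditioning via the Markov chain $X_i - Y_i - (\text{past/future})$ and the sum identity. A time-sharing variable can be absorbed into $U$.

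The main obstacle I expect is the converse for $R_1$: one must show that the "right" single-letter auxiliary $U_i$ — which must simultaneously serve the $R_2$ bound (where it naturally carries $Z$-side-information context) and the $R_1$ bound (where the conditioning is on $Y^n$) — satisfies the required Markov relation $U_i - (X_i, Y_i) - Z_i$ and that $\Xh_{1,i}$ can be taken as a function of $(X_i, Y_i, U_i)$ consistent with the claimed joint law $p(u|x,y)p(\xh_1|x,y,u)$. Getting the conditioning sets to collapse correctly — so that $I(M_1; X_i|Y^n, X^{i-1})$ becomes $I(U_i,\Xh_{1,i}; X_i|Y_i)$ rather than something with extra past symbols — is where the degradedness $X-Y-Z$ and the Csisz\'ar sum identity must be deployed carefully. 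Verifying the distortion constraints pass to the single-letter functions $\E d_j(X,\Xh_j) \le D_j$ is then routine by convexity and the definition of the distortion measure. The cardinality bound on $\Uc$ follows from the convex-cover / Carath\'eodory argument in Appendix~\ref{appen:2}.
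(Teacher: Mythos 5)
Your overall architecture---decode-and-rebin achievability plus an auxiliary-identification converse---matches the paper's, but there is a concrete gap at the crux of the converse, namely the choice of auxiliary. You justify the existence of the single-letter reconstruction map by asserting that $\Xh_2^n$ is a function of $M_2$ alone; it is not. Node 2 reconstructs from $(M_2, Z^n)$, i.e.\ $\Xh_{2,i} = g_{2,i}(M_2, Z^n)$, so with either of your candidates $U_i = (M_2, Z^{i-1})$ or $U_i = (M_2, Z_{i+1}^n)$ the pair $(U_i, Z_i)$ does \emph{not} determine $\Xh_{2,i}$, and the function $g_2 : \Uc \times \Zc \rightarrow \Xh_2$ required by the theorem does not exist for your $U$. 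The auxiliary must carry both tails of $Z$: the paper takes $U_i = (X^{i-1}, Y^{i-1}, Z^{i-1}, Z_{i+1}^n, M_2)$, so that $(U_i, Z_i)$ determines $(M_2, Z^n)$ and hence $\Xh_{2,i}$, while $U_i - (X_i, Y_i) - Z_i$ still holds because, given $Y_i$, $Z_i$ is independent of everything else under the i.i.d.\ degraded structure. Relatedly, once $U_i$ contains $Z$-symbols, your $R_1$ chain must condition on $Z^n$ from the start: to upper-bound $H(X_i \mid X^{i-1}, Y^n, Z^n, M_1, M_2)$ by $H(X_i \mid \Xh_{1,i}, Y_i, U_i)$ you need $U_i$ to be a function of the larger conditioning set, which fails if you only write $H(M_1 \mid Y^n)$. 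The paper handles this by writing $H(M_1 \mid Y^n, Z^n) = H(M_1, M_2 \mid Y^n, Z^n)$, exploiting $M_2 = f_2(Y^n, M_1)$ exactly as you note.

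Two smaller points: the Csisz\'ar sum identity you plan to invoke is unnecessary---because $(X_i, Y_i, Z_i)$ is i.i.d.\ across $i$ and $X - Y - Z$ holds, the conditioning collapses directly, e.g.\ $H(X_i, Y_i \mid Z^n, X^{i-1}, Y^{i-1}) = H(X_i, Y_i \mid Z_i)$, with no sum-identity bookkeeping. And your achievability sketch is essentially the paper's scheme; the binning rate you write as $I(X, \Xh_1, U; X \mid Y)$ should read $I(X; \Xh_1, U \mid Y)$.
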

The cardinality of $\mathcal{U}$ is upper bounded by $|\Uc| \le |\Xc||\Yc| + 3$.

If $Z=\emptyset$, this region reduces to the Cascade source coding region given in~\cite{Permuter}. If $Y=X$, this setup reduces to the well-known Wyner-Ziv setup~\cite{Wyner}. 

The coding scheme follows from a combination of techniques used in~\cite{Permuter} and a new idea of decoding and re-binning at the Cascade node (Node 1). Node 0 generates a description $U^n$ intended for Nodes 1 and 2. Node 1 decodes $U^n$ and then re-bins it to reduce the rate of communicating $U^n$ to Node 2 based on its side information. In addition, Node 0 generates $\Xh_1^n$ to satisfy the distortion requirement at Node 1. We now give a sketch of achievability and a proof of the converse. 

\noindent \textit{Sketch of Achievability}

We first generate $2^{n(I(X,Y;U)+\e)}$ $U^n$ sequences according to $\prod_{i=1}^n p(u_i)$. For each $u^n$ and $y^n$ sequences, we generate $2^{n(I(\Xh^n_1; X|U,Y)+\e)}$ $\Xh_1^n$ sequences according to $\prod_{i=1}^n p(\xh_i|u_i,y_i)$. Partition the set of $U^n$ sequences into $2^{n(I(U;X|Y)+2\e)}$ bins, $\Bc_1(m_{10})$. Separately and independently, partition the set of $U^n$ sequences into $2^{n(I(U;X,Y|Z)+2\e)}$ bins, $\Bc_2(m_2)$, $m_2 \in [1:2^{n(I(U;X,Y|Z)+2\e)}]$. 

Given $x^n,y^n$, Node 0 looks for a jointly typical codeword $u^n$; that is, $(u^n, x^n, y^n) \in \aep$. If there are more than one, it selects a codeword uniformly at random from the set of jointly typical codewords. This operation succeeds with high probability since there are $2^{n(I(X,Y;U)+\e)}$ $U^n$ sequences. Node 0 then looks for a $\xh_1^n$ that is jointly typical with $u^n, x^n, y^n$. This operation succeeds with high probability since there are $2^{n(I(\Xh_1; X|U,Y)+\e)}$ $\xh_1^n$ sequences. Node 0 then sends out the bin index $m_{10}$ such that $u^n \in \Bc_1(m_{10})$ and the index corresponding to $\xh_1^n$. This requires a total rate of $R_1 = I(U;X|Y)+ I(\Xh^n_1; X|U,Y)+3\e$.

At Node 1, it recovers $u^n$ by looking for the unique $u^n$ sequence in $\Bc_1(m_{10})$ such that $(u^n, y^n) \in \aep$. Since there are only $2^{n(I(X,Y;U) - I(U;X|Y)-\e)} = 2^{n(I(U;Y)-\e)}$ sequences in the bin, this operation succeeds with high probability. Node 1 reconstructs $x^n$ as $\xh_1^n$. Node 1 then sends out $m_2$ such that $u^n \in \Bc_2(m_2)$. This requires a rate of $R_2 = I(U;X,Y|Z) + 2\e$. 

At Node 2, note that since $U-(X,Y)-Z$, the sequences $(U^n, X^n, Y^n, Z^n)$ are jointly typical with high probability. Node 2 looks for the unique $u^n$ in $\Bc_2(m_2)$ such that $(u^n,z^n) \in \aep$. From the Markov Chain $U-(X,Y) - Z$, $I(U;X,Y) - I(U;X,Y|Z) = I(U;Z)$. Hence, this operation succeeds with high probability since there are only $2^{n(I(U;Z)-\e)}$ $u^n$ sequences in the bin. It then reconstructs using $\xh_i = g_2(u_i, z_i)$ for $i\in [1:n]$.

\begin{proof}[Proof of Converse]
Given a $(n, 2^{nR_1}, 2^{nR_2}, D_1, D_2)$ code, define $U_i = (X^{i-1}, Y^{i-1}, Z^{i-1}, Z_{i+1}^n, M_2)$. We have the following.{\allowdisplaybreaks
\begin{align*}
nR_2 &\ge H(M_2) \\
&\ge H(M_2|Z^n) \\
& = I(X^n, Y^n; M_2|Z^n) \\
& = \sum_{i=1}^n I(X_i, Y_i; M_2|Z^n, X^{i-1}, Y^{i-1}) \\
& = \sum_{i=1}^n H(X_i, Y_i|Z^n, X^{i-1}, Y^{i-1}) - H(X_i, Y_i|Z^n, X^{i-1}, Y^{i-1}, M_2)\\
& = \sum_{i=1}^n H(X_i, Y_i|Z_i) - H(X_i, Y_i|Z_i, U_i)\\
& = \sum_{i=1}^n I(X_i, Y_i; U_i|Z_i).
\end{align*}}
Next,{\allowdisplaybreaks
\begin{align*}
nR_1 &\ge H(M_1) \\
&\ge H(M_1|Y^n, Z^n) \\
& = H(M_1,M_2|Y^n, Z^n) = I(X^n; M_1,M_2|Y^n, Z^n) \\
& =\sum_{i=1}^n I(X_i; M_1, M_2|X^{i-1}, Y^n, Z^n) \\
& =\sum_{i=1}^n H(X_i|X^{i-1}, Y^n, Z^n) - H(X_i|X^{i-1}, Y^n, Z^n, M_1, M_2) \\
& =\sum_{i=1}^n H(X_i|Y_i, Z_i) - H(X_i|X^{i-1}, Y^n, Z^n, M_1, M_2) \\
& \stackrel{(a)}{=}\sum_{i=1}^n H(X_i|Y_i) - H(X_i|X^{i-1}, Y^n, \Xh_{1i}, Z^n, M_1, M_2) \\
& \ge \sum_{i=1}^n H(X_i|Y_i) - H(X_i|\Xh_{1i}, Y_i, U_i) \\
& = \sum_{i=1}^n I(X_i; \Xh_{1i}, U_i|Y_i).
\end{align*}}
Step (a) follows from the Markov assumption $X-Y-Z$ and the fact that $\Xh_{1i}$ is a function of $(Y^n,M_2)$. Next, let $Q$ be a random variable uniformly distributed over $[1:n]$ and independent of $(X^n, Y^n, Z^n)$. We note that $X_{Q} = X$, $Y_{Q} = Y$, $Z_Q = Z$ and{\allowdisplaybreaks
\begin{align*}
R_2 &\ge I(X_Q, Y_Q; U_Q|Q, Z_Q) \\
&= I(X_Q,Y_Q; U_Q, Q|Z_Q) \\
& = I(X,Y; U_Q,Q|Z), \\
R_1 &= I(X_Q; \Xh_{1Q}, U_Q|Y_Q, Q) \\
& = I(X; \Xh_{1Q}, U_Q, Q|Y).
\end{align*}}
Defining $U = (U_Q, Q)$ and $\Xh_{1Q} = \Xh_{1}$ then completes the proof. The existence of the reconstruction function $g_2$ follows from the definition of $U$. The Markov Chains $U - (X,Y) - Z$ and $Z - (U,X,Y) - \Xh_1$ required to factor the probability distribution stated in the Theorem also follow from definitions of $U$ and $\Xh_1$. 
\end{proof}
We now extend Theorem~\ref{thm1} to the Triangular Source coding setting.
\subsection{Triangular Source Coding}
\begin{theorem}[Rate Distortion Region for Triangular Source Coding] \label{thm2}
$\Rc(D_1,D_2)$ for the Triangular source coding setting defined in section~\ref{sect:2} is given by the set of all rate tuples $(R_1, R_2, R_3)$ satisfying
\begin{align*}
R_1 &\ge I(X;\Xh_1, U|Y), \\
R_2 &\ge I(X,Y;U|Z), \\
R_3 &\ge I(X,Y;V|U,Z)
\end{align*}
for some $p(x,y,z,u,v, \xh_1) = p(x)p(y|x)p(z|y)p(u|x,y)p(\xh_1|x,y,u)p(v|x,y,u)$ and function $g_2 : \Uc \times \Vc \times \Zc \rightarrow \Xh_2$ such that
\begin{align*}
\E d_j(X, \Xh_j) \le D_j, \quad \mbox{j=1,2}.
\end{align*}
The cardinalities for the auxiliary random variables can be upper bounded by $|\Uc|\le |\Xc||\Yc| + 4$ and $|\Vc|\le (|\Xc||\Yc| + 4)(|\Xc||\Yc| +1)$.
\end{theorem}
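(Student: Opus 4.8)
The plan is to obtain achievability by overlaying one Wyner--Ziv-coded layer (carried on the new direct link $R_3$) on top of the Cascade scheme of Theorem~\ref{thm1}, and to obtain the converse by rerunning the converse of Theorem~\ref{thm1} essentially verbatim and adjoining one new single-letter bound for $R_3$.

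\textit{Achievability.} Fix a joint distribution $p(u|x,y)p(\xh_1|x,y,u)p(v|x,y,u)$ and $g_2$ as in the statement. I would keep the Theorem~\ref{thm1} codebook for $U$: $2^{n(I(X,Y;U)+\e)}$ codewords $u^n\sim\prod_i p(u_i)$, partitioned independently into $2^{n(I(U;X|Y)+2\e)}$ bins (used on the $R_1$ link) and into $2^{n(I(X,Y;U|Z)+2\e)}$ bins (re-binned by Node~1 onto the $R_2$ link), together with the superimposed $\xh_1^n$ codebook as before. On top of each $u^n$ I would additionally draw $2^{n(I(V;X,Y|U)+\e)}$ codewords $v^n\sim\prod_i p(v_i|u_i)$ and, for each $u^n$, partition them into $2^{n(I(X,Y;V|U,Z)+2\e)}$ bins. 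Node~0 picks jointly typical $u^n$, then jointly typical $\xh_1^n$ and $v^n$, and sends the $u^n$-bin and the $\xh_1^n$-index on the $R_1$ link and the $v^n$-bin on the $R_3$ link. Node~1 recovers $u^n$ from its bin and $y^n$, outputs $\xh_1^n$, and re-bins $u^n$ onto $R_2$ exactly as in Theorem~\ref{thm1}. Node~2 decodes in two layers: first $u^n$, from the $R_2$-bin and $z^n$ (feasible since $U-(X,Y)-Z$ makes that bin hold $\approx 2^{nI(U;Z)}$ codewords), then $v^n$, from the $R_3$-bin and the recovered pair $(u^n,z^n)$ (feasible since $V-(U,X,Y)-Z$ gives $I(V;X,Y|U)-I(V;X,Y|U,Z)=I(V;Z|U)$, so that bin holds $\approx 2^{nI(V;Z|U)}$ codewords), and outputs $\xh_{2i}=g_2(u_i,v_i,z_i)$. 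The routine covering/packing and Markov-lemma analysis (Appendix~\ref{appen:1}) then yields vanishing error, the required distortions, and rates $I(X;\Xh_1,U|Y)+3\e$, $I(X,Y;U|Z)+2\e$, $I(X,Y;V|U,Z)+2\e$.

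\textit{Converse.} Given a code, I would set $U_i=(M_2,X^{i-1},Y^{i-1},Z^{i-1},Z_{i+1}^n)$ (as in Theorem~\ref{thm1}) and $V_i=M_3$. The bounds $nR_1\ge\sum_i I(X_i;\Xh_{1i},U_i|Y_i)$ and $nR_2\ge\sum_i I(X_i,Y_i;U_i|Z_i)$ are then literally the Theorem~\ref{thm1} derivations, which never touch $M_3$. For the new bound,
\begin{align*}
nR_3 \;\ge\; H(M_3) \;\ge\; H(M_3\,|\,M_2,Z^n) \;&=\; I(X^n,Y^n;M_3\,|\,M_2,Z^n) \\
&=\; \sum_{i=1}^n \bigl[\,H(X_i,Y_i\,|\,U_i,Z_i)-H(X_i,Y_i\,|\,U_i,V_i,Z_i)\,\bigr] \;=\; \sum_{i=1}^n I(X_i,Y_i;V_i\,|\,U_i,Z_i),
\end{align*}
where the middle step uses $H(M_3\,|\,X^n,Y^n)=0$ and the identification $(M_2,Z^n,X^{i-1},Y^{i-1})=(U_i,Z_i)$. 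A uniform time-sharing variable $Q\perp(X^n,Y^n,Z^n)$, with $U=(U_Q,Q)$ and $V=V_Q$, single-letterizes all three sums. The chains $(U,V,\Xh_1)-(X,Y)-Z$ hold because, the source being memoryless with $X-Y-Z$, $Z_i$ is conditionally independent of $(X^n,Y^n,Z^{i-1},Z_{i+1}^n)$ given $Y_i$; and because $\Xh_1$ and $V$ appear in disjoint constraints, the joint conditional law of $(\Xh_1,V)$ given $(X,Y,U)$ may be replaced by the product of its two conditionals, giving exactly the claimed factorization. The function $g_2$ exists since $\Xh_{2Q}$ is a function of $(Z^n,M_2,M_3,Q)$, all recoverable from $(U,V,Z)$. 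A standard compactness argument as $\e\downarrow0$ closes the converse.

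\textit{Cardinality bounds and the hard part.} The bounds $|\Uc|\le|\Xc||\Yc|+4$ and $|\Vc|\le(|\Xc||\Yc|+4)(|\Xc||\Yc|+1)$ I would get from the usual support-lemma / convex-cover method --- first bounding $|\Uc|$ by applying the support lemma to the functionals of $p(x,y\,|\,u)$ that appear in the region, then bounding $|\Vc|$ for each fixed value of $U$ --- with details in Appendix~\ref{appen:2}. Relative to Theorem~\ref{thm1} nothing here is conceptually new; the only places to watch are (i) in achievability, performing the Node~2 decoding strictly in the order ``$u^n$ then $v^n$'' and sizing the $v^n$-bins from the \emph{conditional} chain $V-(U,X,Y)-Z$ rather than an unconditional one, and (ii) in the converse, starting the $R_3$ bound from $H(M_3\,|\,M_2,Z^n)$ (so that $U_i$ enters the conditioning) and noticing that the product form $p(\xh_1|x,y,u)p(v|x,y,u)$ is free. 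I expect point (i) --- getting the layered binning rates and their Markov conversions exactly right --- to be the main thing requiring care.
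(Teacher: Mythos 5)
Your proposal is correct and follows essentially the same route as the paper: the same layered achievability (Theorem~\ref{thm1}'s decode-and-re-bin scheme for $U$ plus a superimposed $V$-layer Wyner--Ziv-binned against $(U,Z)$ on the $R_3$ link), the same auxiliary identifications in the converse (your $V_i=M_3$ versus the paper's $V_i=(U_i,M_3)$ is immaterial since $U_i$ is already in the conditioning), and the same marginal-preservation argument to reduce $p(\xh_1,v|x,y,u)$ to product form. No gaps.
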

If $Z=\emptyset$, this region reduces to the Triangular source coding region given in~\cite{Permuter}. 

The proof of the Triangular case follows that of the Cascade case, with the additional step of Node 0 generating an additional description $V^n$ that is intended for Node 2. This description is then binned to reduce the rate, with the side information at Node 2 being $U^n$ and $Z^n$. Node 2 first decodes $U^n$ and then $V^n$.

\noindent \textit{Sketch of Achievability}

The Achievability proof is an extension of that in Theorem~\ref{thm1}. The additional step we have here is that we generate $2^{n(I(V;X,Y|U)+ \e)}$ $V^n$ sequences according to $\prod_{i=1}^n p(v_i|u_i)$ for each $u^n$ sequence, and bin these sequences to $2^{n(I(V;X,Y|U, Z)+ 2\e)}$ bins, $\Bc_3(m_3),$ $m_3 \in [1:2^{nR_3}]$. To send from Node 0 to Node 2, Node 0 first finds a $v^n$ sequence that is jointly typical with $(u^n, x^n, y^n)$. This operation succeeds with high probability since we have $2^{n(I(V;X,Y|U)+ \e)}$ $v^n$ sequences. We then send out $m_3$, the bin number for $v^n$. At Node 2, from the probability distribution, we have the Markov Chain $(V,U)- (X,Y) - Z$. Hence, the sequences are jointly typical with high probability. Node 2 reconstructs by looking for unique $v^n \in \Bc_3(m_3)$ such that $(u^n, v^n, z^n)$ are jointly typical. This operation succeeds with high probability since the number of sequences in $\Bc_3(m_3)$ is $2^{n(I(V;Z|U)-\e)}$. Node 2 then reconstructs using the function $g_2$. 

\begin{proof}[Proof of Converse]
The converse is proved in two parts. In the first part, we derive the required inequalities and in the second part, we show that the joint probability distribution can be restricted to the form stated in the Theorem.

Given a $(n, 2^{nR_1}, 2^{nR_2}, 2^{nR_3}, D_1, D_2)$ code, define $U_i = (X^{i-1}, Y^{i-1},Z^{i-1}, Z_{i+1}^n, M_2)$ and $V_i = (U_i, M_3)$. We omit proof of the $R_1$ and $R_2$ inequalities since it follows the same steps as in Theorem~\ref{thm1}. We have
\begin{align*}
nR_1 & \ge \sum_{i=1}^n I(X_i; \Xh_{1i}, U_i|Y_i), \\
nR_2 & \ge  \sum_{i=1}^n I(X_i, Y_i; U_i|Z_i).
\end{align*}
For $R_3$, we have{\allowdisplaybreaks
\begin{align*}
nR_3 &\ge H(M_3) \\
&\ge H(M_3|M_2, Z^n)\\
& = I(X^n, Y^n; M_3|M_2, Z^n) \\
& = \sum_{i=1}^n H(X_i, Y_i|M_2, Z^n, X^{i-1}, Y^{i-1}) - H(X_i, Y_i|M_2, M_3, Z^n, X^{i-1}, Y^{i-1}) \\
& = \sum_{i=1}^n H(X_i, Y_i|U_i, Z_i) - H(X_i, Y_i|U_i, V_i,Z_i) \\
& = \sum_{i=1}^n I(X_i, Y_i; V_i|U_i, Z_i).
\end{align*}}
Next, let $Q$ be a random variable uniformly distributed over $[1:n]$ and independent of $(X^n, Y^n, Z^n)$. Defining $U = (U_Q, Q)$, $V = (V_Q,Q)$ and $\Xh_{1Q} = \Xh_{1}$ then gives us the bounds stated in Theorem~\ref{thm2}. The existence of the reconstruction function $g_2$ follows from the definition of $U$ and $V$. Next, from the definitions of $U$, $V$ and $\Xh_1$, we note the following Markov relation: $(U,V,\Xh_1) - (X,Y) - Z$. The joint probability distribution can then be factored as $p(x,y,z,u,v, \xh_1) = p(x,y,z)p(u|x,y)p(\xh_1,v|x,y,u)$. 

We now show that it suffices to restrict the joint probability distributions to the form \\$p(x,y,z)p(u|x,y)p(\xh_1|x,y,u)p(v|x,y,u)$ using a method in~\cite[Lemma 5]{Permuter}. The basic idea is that since the inequalities derived rely on $p(\xh_1,v|x,y,u)$ only through the marginals $p(\xh_1|x,y,u)$ and $p(v|x,y,u)$, we can obtain the same bounds even when the probability distribution is restricted to the form $p(x,y,z)p(u|x,y)p(\xh_1|x,y,u)p(v|x,y,u)$.

Fix a joint distribution $p(x,y,z)p(u|x,y)p(\xh_1,v|x,y,u)$ and let $\hat{p}(v|x,y,u)$ and $\hat{p}(\xh_1|x,y,u)$ be the induced conditional distributions. Note that $p(x,y,z)p(u|x,y)p(\xh_1,v|x,y,u)$ and $p(x,y,z)p(u|x,y)\hat{p}(\xh_1|x,y,u)\hat{p}(v|x,y,u)$ have the same marginals $p(x,y,z,u,v)$ and $p(x,y,z,u,\xh_1)$, and the Markov condition $(U,V, \Xh_1) - (X,Y) - Z$ continues to hold under $p(x,y,z)p(u|x,y)\hat{p}(\xh_1|x,y,u)\hat{p}(v|x,y,u)$. 

Finally, note that the rate and distortion constraints given in Theorem~\ref{thm2} depends on the joint distribution only through the marginals $p(x,y,z,u,v)$ and $p(x,y,z,u,\xh_1)$. It therefore suffices to restrict the probability distributions to the form $p(x,y,z)p(u|x,y)\hat{p}(\xh_1|x,y,u)\hat{p}(v|x,y,u)$.
\end{proof}
\subsection{Two Way Cascade Source Coding}
We now extend the source coding settings to include the case where Node 0 requires a lossy version of $Z$. We first consider the Two Way Cascade Source coding setting defined in section~\ref{sect:2} (we will use $R_3$ to denote the rate on the link from Node 2 to Node 0). In the forward part, the achievable scheme consists of using the achievable scheme for the Cascade source coding case. Node 2 then sends back a description of $Z^n$ to Node 0, with $X^n, Y^n, U_1^n$ as side information at Node 0. For the converse, we rely on the techniques introduced and also on a technique for establishing Markovity of random variables found in~\cite{Kaspi}.

\begin{theorem}[Rate Distortion Region for Two Way Cascade Source Coding] \label{thm3}

$\Rc(D_1,D_2, D_3)$ for Two Way Cascade Source Coding is given by the set of all rate tuples $(R_1, R_2, R_3)$ satisfying
\begin{align*}
R_1 &\ge I(X;\Xh_1, U_{1}|Y), \\
R_2 &\ge I(U_1;X,Y|Z), \\
R_3 &\ge I(U_2;Z|U_1, X,Y),
\end{align*}
for some $p(x,y,z, u_1, u_2, \xh_1)= p(x)p(y|x)p(z|y)p(u_1|x,y)p(\xh_1|u_1,x,y)p(u_2|z,u_1)$ and functions $g_2 : \Uc_1 \times \Zc \rightarrow \Xh_2$ and $g_3 : \Uc_1 \times \Uc_2 \times \Xc \times \Yc \rightarrow \Zh$ such that
\begin{align*}
\E(d_j(X, \Xh_j)) &\le D_j, \; j = 1,2 \\
\E(d_3(Z, \Zh)) &\le D_3. 
\end{align*}
The cardinalities for the auxiliary random variables can be upper bounded by $|\Uc_1| \le |\Xc||\Yc|+5$ and $|\Uc_2| \le |\Uc_1|(|\Zc|+1)$.
\end{theorem}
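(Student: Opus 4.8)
The plan is to establish achievability and the converse separately, building on Theorem~\ref{thm1}.

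\emph{Achievability.} For the two forward links I would reuse the Cascade scheme of Theorem~\ref{thm1}: Node~0 generates $U_1^n\sim\prod_i p(u_{1,i})$, whose codebook is partitioned independently into $2^{nI(U_1;X|Y)}$ bins (the bin index is sent by Node~0, costing $R_1$) and into $2^{nI(U_1;X,Y|Z)}$ bins (the bin index is sent by Node~1 after it decodes $U_1^n$, costing $R_2$); Node~0 additionally sends a Wyner--Ziv description $\Xh_1^n$, so that in total $R_1\approx I(X;\Xh_1,U_1|Y)$, $R_2\approx I(U_1;X,Y|Z)$, and after decoding, Node~2 holds $(Z^n,U_1^n)$ jointly typical with the source. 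For the return link I would run a Wyner--Ziv code at Node~2 against the decoder side information $(X^n,Y^n,U_1^n)$ held by Node~0 (Node~0 generated $U_1^n$, hence knows it): generate $U_2^n\sim\prod_i p(u_{2,i}|z_i,u_{1,i})$, partition its codebook into $2^{n(I(U_2;Z|U_1)-I(U_2;X,Y|U_1)+\epsilon)}$ bins, and send the bin index; the Markov chain $U_2-(Z,U_1)-(X,Y)$ collapses this rate to $I(U_2;Z|U_1,X,Y)$. Node~0 then outputs $\Zh_i=g_3(u_{1,i},u_{2,i},x_i,y_i)$, and the distortion constraints follow from the usual typicality estimates; the details are those deferred to Appendix~\ref{appen:1}.

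\emph{Converse, rate bounds.} Fix a code and set, for each $i$, $U_{1,i}=(M_2,X^{i-1},Y^{i-1},Z^{i-1})$, $\Xh_{1,i}=\big(g_1(Y^n,M_1)\big)_i$, and $U_{2,i}=(M_3,X_{i+1}^n,Y_{i+1}^n)$ with $M_3$ the message on the return link. Exactly as in the converse of Theorem~\ref{thm1} — using $X-Y-Z$ and that $M_1,M_2$ are functions of $(X^n,Y^n)$ — one obtains $nR_1\ge\sum_i I(X_i;\Xh_{1,i},U_{1,i}|Y_i)$ and $nR_2\ge\sum_i I(X_i,Y_i;U_{1,i}|Z_i)$. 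For the return rate I would begin from
\[
nR_3\ \ge\ H(M_3)\ \ge\ H(M_3\mid X^n,Y^n,M_2)\ =\ I(Z^n;M_3\mid X^n,Y^n,M_2),
\]
the final equality holding because $M_3$ is a function of $(Z^n,M_2)$; expanding by the chain rule over the $Z_i$'s, using that $Z_i$ depends on $(M_2,X^{i-1},Y^{i-1},Z^{i-1})$ only through $Y_i$, and observing that $(X_i,Y_i,U_{1,i},U_{2,i})=(X^n,Y^n,M_2,M_3,Z^{i-1})$, one gets $nR_3\ge\sum_i I(Z_i;U_{2,i}\mid U_{1,i},X_i,Y_i)$. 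A uniform time-sharing variable $Q\sim\mathrm{Unif}[1{:}n]$ independent of the sources, together with $U_1=(U_{1,Q},Q)$, $U_2=U_{2,Q}$, $\Xh_1=\Xh_{1,Q}$, $X=X_Q$, etc., then delivers the three single-letter inequalities; the functions $g_2$ and $g_3$ exist because replacing the code's reconstructions by the optimal single-letter maps of the auxiliaries (and the relevant side information) can only decrease the per-letter distortions.

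\emph{Converse, the factorization and the main obstacle.} It remains to argue that the joint law may be taken of the form $p(x)p(y|x)p(z|y)p(u_1|x,y)p(\xh_1|u_1,x,y)p(u_2|z,u_1)$. The relations $(U_1,\Xh_1)-(X,Y)-Z$ are immediate, since $U_{1,i}$ and $\Xh_{1,i}$ are measurable with respect to $(X^n,Y^n,Z^{i-1})$, on which $Z_i$ depends only through $Y_i$. The delicate relation is $U_2-(Z,U_1)-(X,Y,\Xh_1)$, and it does \emph{not} hold for the naive identification: the forward message $M_2$ sitting inside $U_1$ couples the $(X_i,Y_i)$ across coordinates, whereas $U_{2,i}$ is forced to carry $(X_{i+1}^n,Y_{i+1}^n)$ in order that $\Zh_i$ be a function of $(U_{1,i},U_{2,i},X_i,Y_i)$. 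I expect this to be the crux of the proof, and I would handle it with the Markovity technique of Kaspi~\cite{Kaspi}: at the sequence level the return message obeys $M_3-(Z^n,M_2)-(X^n,Y^n,M_1)$ because it is a function of $(Z^n,M_2)$ alone, and Kaspi's coordinate-wise induction — which carries the already-communicated prefix as the ``state'' — is precisely the device that pushes this down to $U_{2,i}-(Z_i,U_{1,i})-(X_i,Y_i,\Xh_{1,i})$ despite the coupling due to $M_2$. One then checks, as in the use of \cite[Lemma~5]{Permuter} in the converse of Theorem~\ref{thm2}, that the rate and distortion functionals depend on the law only through the marginals $p(x,y,z,u_1,\xh_1)$ and $p(x,y,z,u_1,u_2)$, so the law may be replaced by one with the stated factorization at no cost. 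Finally, the cardinality bounds $|\Uc_1|\le|\Xc||\Yc|+5$ and $|\Uc_2|\le|\Uc_1|(|\Zc|+1)$ follow from the standard support-lemma argument (Appendix~\ref{appen:2}).
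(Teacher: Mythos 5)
Your achievability sketch and your derivations of the three rate inequalities are fine, but the converse has a genuine gap, and it sits exactly where you flagged it: the choice of auxiliary identifications. You set $U_{1i}=(M_2,X^{i-1},Y^{i-1},Z^{i-1})$ and $U_{2i}=(M_3,X_{i+1}^n,Y_{i+1}^n)$, and then hope that Kaspi's technique will "push down" the chain $U_2-(Z,U_1)-(X,Y,\Xh_1)$. It cannot, because for this choice the chain is simply false: conditioned on $(M_2,X^{i-1},Y^{i-1},Z^i)$, the block $(X_{i+1}^n,Y_{i+1}^n)$ sitting inside your $U_{2i}$ remains correlated with $(X_i,Y_i)$ precisely through the coupling induced by $M_2$ (take $M_2=Y_1\oplus Y_2$ with $n=2$ to see $I(X_{i+1}^n,Y_{i+1}^n;X_i,Y_i\mid M_2,Z^i)>0$). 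A second, independent failure comes from putting $Z^{i-1}$ rather than $Z_{i+1}^n$ into $U_{1i}$: the code's reconstruction $\Xh_{2i}=g_2(M_2,Z^n)_i$ then depends on $Z_{i+1}^n\notin(U_{1i},Z_i)$, and the relation $Z_{i+1}^n-(U_{1i},Z_i)-X_i$ you would need in order to average out that dependence is also false (given $M_2$, knowledge of $Z_{i+1}^n$ leaks information about $Y_{i+1}^n$ and hence about $(X_i,Y_i)$), so your appeal to "optimal single-letter maps can only decrease the distortion" does not go through for $g_2$.

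The paper's resolution is structurally different on both counts. It takes $U_{1i}=(M_2,X^{i-1},Y^{i-1},Z_{i+1}^n)$ — the \emph{future} of $Z$, not the past — so that $(U_{1i},Z_i)$ is missing only $Z^{i-1}$ from the argument of $g_2$, and the relation $Z^{i-1}-(U_{1i},Z_i)-(X_i,Y_i)$ \emph{does} hold (it reduces, after adding $(X_i^n,Y_i^n)$ and removing $M_2$ from the right entropy term, to a statement about the memoryless source alone). And it takes $U_{2i}=M_3$ with nothing else, so that Kaspi's lemma (third identity, with $A_1=(X^{i-1},Y^{i-1})$, $A_2=(X_i^n,Y_i^n)$, $B_2=Z_i^n$) yields the genuine Markov chain $U_{2i}-(Z_i,U_{1i})-(\Xh_{1i},X_i,Y_i)$. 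The price is that $\Zh_i=g_3(M_3,X^n,Y^n)$ is then \emph{not} a function of $(U_{1i},U_{2i},X_i,Y_i)$; this is repaired not by enlarging $U_{2i}$ but by proving the additional relation $(X_{i+1}^n,Y_{i+1}^n)-(U_{1i},U_{2i},X_i,Y_i)-Z_i$ (Kaspi's first identity) and then fixing the best value of the superfluous coordinates in the reconstruction map, exactly as is done for $Z^{i-1}$ in the $g_2$ argument. In short: the extra sequence variables must be kept \emph{outside} the auxiliaries and disposed of by Markov-plus-optimization arguments on the reconstruction functions, not folded into $U_2$, where they destroy the required factorization.
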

If $Y=X$, this region reduces to the result for two way source coding found in~\cite{Kaspi}.

\noindent \textit{Sketch of Achievability}

The forward path ($R_1$ and $R_2$) follows from the Cascade source coding case in Theorem~\ref{thm1}. The reverse direction follows by the following. For each $u_1^n$, we generate $2^{n(I(U_2;Z|U_1)+ \e)}$ $u^n_2$ sequences according to $\prod_{i=1}^n p(u_{2i}|u_{1i})$ and bin them to $2^{n(I(U_2;Z|U_1, X, Y)+2\e)}$ bins, $\Bc_3(m_3)$, $m_3 \in [1:2^{nR_3}]$. Node 2 finds a $u_2^n$ sequence that is jointly typical with $(u_1^n, z^n)$. Since there are $2^{n(I(U_2;Z|U_1)+ \e)}$ sequences, this operation succeeds with high probability. It then sends out the bin index $m_3$, which the jointly typical $v^n$ sequence is in. At Node 0, it recovers $u_2^n$ by looking for the unique sequence in $\Bc_3(m_3)$ such that $(u_1^n, u_2^n, x^n, y^n)$ are jointly typical. From the Markov condition $U_2- (U_1, Z) - (X,Y)$ and the Markov Lemma~\cite{Tung}, the sequences are jointly typical with high probability. Next, since there are only $2^{n(I(U_2; X,Y|U_1)-\e)}$ sequences in the bin, the probability that we do not find the unique (correct) sequence goes to zero with $n$. Finally, Node 0 reconstructs using the function $g_3$.

\begin{proof}[Proof of Converse]
Given a $(n, 2^{nR_1}, 2^{nR_2}, 2^{nR_3}, D_1, D_2, D_3)$ code, define $U_{1i} = (M_2, X^{i-1}, Y^{i-1}, Z_{i+1}^n)$ and $U_{2i} =M_3$. We have{\allowdisplaybreaks
\begin{align*}
nR_1 &\ge H(M_1) \\
&\ge H(M_1 |Y^n,Z^n)\\
& =  H(M_1, M_2 |Y^n,Z^n) \\
&= I(X^n; M_1, M_2|Y^n,Z^n)\\
& = \sum_{i=1}^n I(X_i; M_1, M_2|X^{i-1}, Y^n, Z^n) \\
& = \sum_{i=1}^n H(X_i|X^{i-1}, Y^n, Z^n) - H(X_i|X^{i-1}, Y^n, Z^n, M_1, M_2) \\
& = \sum_{i=1}^n H(X_i|Y_i, Z_i) - H(X_i|X^{i-1}, Y^n, Z^n, M_1, M_2) \\
& \stackrel{(a)}{=} \sum_{i=1}^n H(X_i|Y_i) - H(X_i|X^{i-1}, Y^n, Z^n, M_1, M_2) \\
& \stackrel{(b)}{=} \sum_{i=1}^n H(X_i|Y_i) - H(X_i|X^{i-1}, \Xh_{1i}, Y^n, Z^n, M_1, M_2) \\
& {\ge} \sum_{i=1}^n H(X_i|Y_i) - H(X_i|\Xh_{1i}, Y_i, U_{1i}) \\
& = \sum_{i=1}^n I(X_i;\Xh_{1i}, U_{1i}|Y_i),
\end{align*}}
where step (a) follows from the Markov assumption $X_i - Y_i - Z_i$ and step (b) follows from $\Xh_{1i}$ being a function of $(Y^n, M_1)$.

Consider now $R_2${\allowdisplaybreaks 
\begin{align*}
nR_2 &= H(M_2) \\
& \ge H(M_2|Z^n) \\
& = I(M_2; X^n,Y^n|Z^n) \\
& = \sum_{i=1}^n H(X_{i}, Y_{i}|Z^n, X^{i-1}, Y^{i-1}) - H(X_{i}, Y_{i}|Z^n, X^{i-1}, Y^{i-1}, M_2) \\
& \ge \sum_{i=1}^n I(X_{i}, Y_{i}; U_{1i}|Z_i).
\end{align*}}
Next, consider $R_3$ {\allowdisplaybreaks
\begin{align*}
nR_3 &= H(M_3) \\
&\ge H(M_3|X^n, Y^n) \\
& \ge I(M_3; Z^n|X^n, Y^n) \\
& = H(Z^n|X^n, Y^n) - H(Z^n|X^n, Y^n, M_3)\\
& = H(Z^n|X^n, Y^n) - H(Z^n|X^n, Y^n, M_2, M_3)\\
& \ge \sum_{i=1}^n H(Z_i|X_{i}, Y_{i}) - H(Z_i|Z_{i+1}^n, X^{i}, Y^{i}, M_2, M_3) \\
& = \sum_{i=1}^n I(Z_i; U_{1i}, U_{2i}|X_i, Y_i) \\
& = \sum_{i=1}^n I(Z_i; U_{2i}|X_i, Y_i, U_{1i}),
\end{align*}}
where the last step follows from the Markov relation $Z_i - (X_i,Y_i) - U_{1i}$ which we will now prove, together with other Markov relations between the random variables. The first two Markov relations below are used for factoring the joint probability distribution while Markov relations three and four are used for establishing the distortion constraints. We will use the following lemma from~\cite{Kaspi}.
\begin{lemma}
Let $A_1, A_2, B_1, B_2$ be random variables with joint probability mass functions mf $p(a_1,a_2,b_1,b_2) = p(a_1,b_1)p(a_2,b_2).$ Let $\Mt_1$ be a function of $(A_1, A_2)$ and $\Mt_2$ be a function of $(B_1,B_2, \Mt_1).$ Then,
\begin{align}
I(A_2;B_1 |\Mt_1, \Mt_2, A_1,B_2) &= 0, \\
I(B_1;\Mt_1|A_1, B_2) &= 0, \\
I(A_2;\Mt_2|\Mt_1, A_1, B_2) & = 0.
\end{align}
\end{lemma}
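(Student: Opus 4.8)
The plan is to prove the three claimed conditional independences by exploiting the product structure $p(a_1,a_2,b_1,b_2) = p(a_1,b_1)p(a_2,b_2)$, which makes $(A_1,B_1)$ independent of $(A_2,B_2)$, and then tracking how the deterministic maps defining $\Mt_1$ and $\Mt_2$ interact with this independence. The natural tool throughout is the functional-dependence / d-separation structure of the joint law, but since this is a short lemma I would argue it directly with entropy identities.

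For the second identity, $I(B_1;\Mt_1|A_1,B_2)=0$, I would first observe that conditioned on $A_1$, the pair $(A_2)$ is independent of $(B_1,B_2)$; indeed $p(a_2|a_1,b_1,b_2) = p(a_2)$ by the product form (since $B_1,B_2$ carry no information about $A_2$ beyond what, here, nothing does). Then $\Mt_1 = \Mt_1(A_1,A_2)$ is, given $A_1$, a function of $A_2$ alone, hence conditionally independent of $(B_1,B_2)$ given $A_1$; in particular $I(B_1;\Mt_1|A_1,B_2)\le I(B_1,B_2;\Mt_1|A_1)=0$. For the third identity, $I(A_2;\Mt_2|\Mt_1,A_1,B_2)=0$: here $\Mt_2$ is a function of $(B_1,B_2,\Mt_1)$, so given $(\Mt_1,A_1,B_2)$ it is a function of $B_1$; it then suffices to show $I(A_2;B_1|\Mt_1,A_1,B_2)=0$. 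This last is essentially the first identity, so the two reduce to one core claim.

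The core claim is the first identity, $I(A_2;B_1|\Mt_1,\Mt_2,A_1,B_2)=0$, and proving it is the main obstacle, because conditioning on $\Mt_1$ (a function of both $A_1$ and $A_2$) can create dependence between the two otherwise-independent blocks — this is the familiar "conditioning on a common child" effect. The way around it is to note $\Mt_2$ is redundant here: given $(\Mt_1,B_1,B_2)$, $\Mt_2$ is determined, so $I(A_2;B_1|\Mt_1,\Mt_2,A_1,B_2) \le I(A_2; B_1,B_2,\Mt_2 | \Mt_1,A_1) = I(A_2;B_1,B_2|\Mt_1,A_1)$ after dropping the determined $\Mt_2$. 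So it remains to show $I(A_2; B_1,B_2 | \Mt_1, A_1) = 0$, i.e. that $(B_1,B_2)$ stays independent of $A_2$ after we additionally condition on $\Mt_1$ and $A_1$. I would prove this by a direct computation: write $I(A_2;B_1,B_2|\Mt_1,A_1) = H(A_2|\Mt_1,A_1) - H(A_2|\Mt_1,A_1,B_1,B_2)$ and show both equal $H(A_2|\Mt_1,A_1)$ by checking $p(a_2 | \mt_1, a_1, b_1, b_2) = p(a_2|\mt_1,a_1)$ — which holds because, conditioned on $A_1$, $A_2$ is independent of $(B_1,B_2)$ (product form), and $\Mt_1$ is a function of $(A_1,A_2)$, so conditioning further on $\Mt_1$ does not bring in $(B_1,B_2)$.

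Once the core claim is in hand, I would present the three parts in the order (2), (1), (3): prove (2) standalone; prove (1) via the reduction above; then deduce (3) from (1) by the "function of $B_1$ given the conditioning" observation combined with (1). Throughout I would state each step as an elementary consequence of $p(a_1,a_2,b_1,b_2)=p(a_1,b_1)p(a_2,b_2)$ together with "$\Mt_1$ is $\sigma(A_1,A_2)$-measurable" and "$\Mt_2$ is $\sigma(B_1,B_2,\Mt_1)$-measurable," so that the only real content is the conditional-independence bookkeeping; no heavy computation is needed, only careful use of the chain rule and the data-processing-type inequality $I(X;f(Y))\le I(X;Y)$ and the fact that conditioning on a deterministic function of the conditioning variables is free.
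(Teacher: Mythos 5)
The paper does not actually prove this lemma (it is imported verbatim from the cited work of Kaspi), so your argument must stand on its own, and it does not: there is a genuine gap, traceable to a single misreading of the hypothesis. The factorization $p(a_1,a_2,b_1,b_2)=p(a_1,b_1)p(a_2,b_2)$ makes the \emph{pair} $(A_1,B_1)$ independent of the \emph{pair} $(A_2,B_2)$; it does not make $A_2$ independent of $B_2$. Indeed $p(a_2\mid a_1,b_1,b_2)=p(a_2,b_2)/p(b_2)=p(a_2\mid b_2)$, not $p(a_2)$. This breaks two of your steps. First, in your proof of the second identity, the claimed equality $I(B_1,B_2;\Mt_1\mid A_1)=0$ is false in general: take $A_1,B_1$ trivial and $A_2=B_2$ a fair bit with $\Mt_1=A_2$; then $I(B_1,B_2;\Mt_1\mid A_1)=1$ bit. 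Second, and more seriously, the ``core claim'' $I(A_2;B_1,B_2\mid\Mt_1,A_1)=0$ to which you reduce the first identity is also false in general, since it dominates $I(A_2;B_2\mid\Mt_1,A_1)$, which is positive whenever $A_2$ and $B_2$ are dependent (same example). The upper bound you pass to is too lossy precisely because it moves $B_2$ out of the conditioning; the whole point of the lemma is that the statements hold \emph{with} $B_2$ conditioned.

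The repair keeps $B_2$ in the conditioning throughout. The correct core claim is $I(A_2;B_1\mid\Mt_1,A_1,B_2)=0$, verified by direct factorization: $p(a_2,b_1\mid\mt_1,a_1,b_2)\propto p(a_1,b_1)\,p(a_2,b_2)\,\ind[f(a_1,a_2)=\mt_1]$, which splits into a factor depending on $b_1$ times a factor depending on $a_2$ once $(a_1,b_2,\mt_1)$ are fixed. This single identity yields all three statements: expanding $I(A_2;B_1,\Mt_2\mid\Mt_1,A_1,B_2)$ by the chain rule in both orders, and using that $\Mt_2$ is determined by $(B_1,B_2,\Mt_1)$, gives $I(A_2;B_1\mid\Mt_1,A_1,B_2)=I(A_2;\Mt_2\mid\Mt_1,A_1,B_2)+I(A_2;B_1\mid\Mt_1,\Mt_2,A_1,B_2)$, so both terms on the right vanish, which are exactly identities (3) and (1); identity (2) follows because, given $(A_1,B_2)$, $\Mt_1$ is a function of $A_2$ while $A_2$ is conditionally independent of $B_1$ given $(A_1,B_2)$ by the same factorization. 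Your high-level strategy --- reduce everything to one conditional independence plus functional bookkeeping --- is sound; what fails is the specific choice of which variables to drop from the conditioning.
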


Now, let us show the following Markov relations:
\begin{enumerate}
\item $Z_{i} - (X_{i}, Y_{i}) - (U_{1i}, \Xh_{1i})$: To establish this relation, we show that $I(Z_i; U_{1i}, \Xh_{1i}|X_i,Y_i) = 0.$
\begin{align*}
I(Z_i; \Xh_{1i}, U_{1i}|X_i,Y_i) &= I(Z_i; \Xh_{1i}, M_2, X^{i-1}, Y^{i-1}, Z_{i+1}^n|X_i,Y_i) \\
& \le I(Z_i; \Xh_{1i}, M_2, X^{i-1}, Y^{i-1}, X_{i+1}^n, Y_{i+1}^n, Z_{i+1}^n|X_i,Y_i) \\
& =  I(Z_i; X^{i-1}, Y^{i-1}, X_{i+1}^n, Y_{i+1}^n, Z_{i+1}^n|X_i,Y_i) \\
& = 0.
\end{align*}

\item $U_{2i} - (Z_i, U_{1i}) - (\Xh_{1i}, X_i, Y_i)$: Note that $U_{2i} = M_3.$ Consider
\begin{align*}
I(\Xh_i, X_i, Y_i ; U_{2i}|Z_i, U_{1i}) & \le I(\Xh_i, X_i^n, Y_i^n ; M_3|Z_{i}^n, X^{i-1}, Y^{i-1}, M_2) \\
& = I(X_i^n, Y_i^n ; M_3|Z_{i}^n, X^{i-1}, Y^{i-1}, M_2).
\end{align*}
Now, using Lemma 1, set $A_1 = (X^{i-1}, Y^{i-1}),$ $B_1 = Z^{i-1},$ $A_2 = (X_{i}^n, Y_{i}^n),$ $B_2 = (Z_{i}^n),$ $\Mt_2 = M_3$ and $\Mt_1 = M_2.$ Then, using the third expression in the Lemma, we see that $I(X_i^n, Y_i^n ; M_3|Z_{i}^n, X^{i-1}, Y^{i-1}, M_2)=0.$

\item $Z^{i-1} - (U_{1i}, Z_i) - (X_i, Y_i)$: Consider{\allowdisplaybreaks
\begin{align*}
 I(X_i,Y_i; Z^{i-1}|U_{1i}, Z_i) & \le I(X_{i}^n, Y_{i}^n; Z^{i-1}| X^{i-1}, Y^{i-1}, Z_{i}^{n}, M_2) \\
& = H(Z^{i-1}|X^{i-1}, Y^{i-1}, Z_{i}^{n}, M_2) - H(Z^{i-1}|X^n, Y^n, Z_{i}^{n}, M_2) \\
& \le H(Z^{i-1}|X^{i-1}, Y^{i-1}, Z_{i}^{n}) - H(Z^{i-1}|X^n, Y^n, Z_{i}^{n}) \\
& = H(Z^{i-1}|X^{i-1}, Y^{i-1}) - H(Z^{i-1}|X^{i-1}, Y^{i-1}) \\
& = 0.
\end{align*}}

\item $(X_{i+1}^n, Y_{i+1}^n) - (U_{1i}, U_{2i}, X_{i}, Y_{i}) - Z_i$: Consider
\begin{align*}
I(X_{i+1}^n, Y_{i+1}^n; Z_i|U_{1i}, U_{2i}, X_{i}, Y_{i}) \le I(X_{i+1}^n, Y_{i+1}^n; Z^i|M_2,M_3, Z_{i+1}^n, X^{i}, Y^{i}).
\end{align*}
Applying the first expression in the Lemma with $A_2 = (X_{i+1}^n, Y_{i+1}^n),$ $A_1 = (X^i, Y^i),$ $B_1 = Z^i$ and $B_2 = Z_{i+1}^n$ gives $I(X_{i+1}^n, Y_{i+1}^n; Z_i|U_{1i}, U_{2i}, X_{i}, Y_{i}) = 0.$ 
\end{enumerate}

\noindent \textit{Distortion constraints}

We show that the auxiliary definitions satisfy the distortion constraints by showing the existence of functions $\xh_{2i}^*(U_{1i}, Z_i)$ and $\zh_{i}^*(U_{1i}, U_{2i}, X_i, Y_i)$ such that 
\begin{align*}
\E(d_2(X_{i}, \xh_{2i}^*(U_{1i}, Z_i))) & \le \E(d_2(X_i, \xh_{2i}(M_2, Z^n))) \\
\E(d_2(Z_{i}, \zh_{i}^*(U_{1i}, U_{2i}, X_i, Y_i))) & \le \E(d_2(X_i, \zh_{3i}(M_3, X^n,Y^n, Z^n))),
\end{align*}
where $\xh_{2i}(M_2, Z^n)$ and $\zh_{i}(M_3, X^n,Y^n)$ are the original reconstruction functions.

To prove the first expression, we have
\begin{align*}
\E(d_2(X_i, \xh_{2i}(M_2, Z^n))) &= \sum p(x^{i}, y^{i}, z^n, m_2)d_2(x_i, \xh_{2i}(m_2, z^n)) \\
& \stackrel{(a)}{=} \sum p(u_{1i}, z^i)p(x_i,y_i|u_{1i}, z^i)d_2(x_i, \xh'_{2i}(u_{1i}, z_i, z^{i-1})) \\
& = \sum p(u_{1i}, z_i, z^{i-1})p(x_i,y_i|u_{1i}, z_i)d_2(x_i, \xh'_{2i}(u_{1i}, z_i, z^{i-1})),
\end{align*}
where (a) follows from defining $\xh'_{2i}(u_{1i}, z_i, z^{i-1}) = \xh_{2i}(m_2, z^n)$ for all $x^{i-1}, y^{i-1}$ and the last step follows from the Markov relation $Z^{i-1} - (U_{1i}, Z_i) - (X_i, Y_i).$ Finally, defining \\ $(z^{i-1})^* = \arg \min_{z^{i-1}}\sum_{x_i,y_i}p(x_i,y_i|u_{1i}, z_i)d_2(x_i, \xh'_{2i}(u_{1i}, z_i, z^{i-1}))$ and $\xh_{2i}^*(u_{1i}, z_i) = \xh'_{2i}(u_{1i}, z_i, (z^{i-1})^*)$ shows that $\E(d_2(X_{i}, \xh_{2i}^*(U_{1i}, Z_i))) \le \E(d_2(X_i, \xh_{2i}(M_2, Z^n)))$ as required. \\

To prove the second expression, we follow similar steps. Considering the expected distortion, we have{\allowdisplaybreaks 
\begin{align*}
& \E(d_3(Z_i, \zh_{i}(M_3, X^n, Y^n))) \\
&= \sum p(z_i^n, x^n, y^n, m_3)d_3(z_i, \zh_{i}(m_3, x^n,y^n)) \\
&= \sum p(u_{1i}, u_{2i}, x_i, y_i, x_{i+1}^n, y_{i+1}^n)p(z_i| u_{1i}, u_{2i}, x_i, y_i, x_{i+1}^n, y_{i+1}^n)d_3(z_i, \zh'_{3i}(u_{1i},u_{2i}, x_i,y_i, x_{i+1}^n, y_{i+1}^n)) \\
& = \sum p(u_{1i}, u_{2i}, x_i, y_i, x_{i+1}^n, y_{i+1}^n)p(z_i| u_{1i}, u_{2i}, x_i, y_i)d_3(z_i, \zh'_{i}(u_{1i},u_{2i}, x_i,y_i, x_{i+1}^n, y_{i+1}^n)), 
\end{align*}}
where the last step uses Markov relation 4. The rest of the proof is omitted since it uses the same steps as the proof for the first distortion constraint. 

Finally, using the standard time sharing random variable $Q$ as before and defining $U_{1} = (U_{1Q}, Q),$ $U_{2} = U_{2Q},$ $\Xh_{1} = \Xh_{1Q}$, we obtain the required outer bound for the rate-distortion region.
\end{proof}
We now turn to the final case of Two Way Triangular Source Coding.
\subsection{Two Way Triangular Source Coding}
\begin{theorem}[Rate Distortion Region for Two Way Triangular Source Coding] \label{thm4}

$\Rc(D_1,D_2, D_3)$ for Two Way Triangular Source Coding is given by the set of all rate tuples $(R_1, R_2, R_3, R_4)$ satisfying
\begin{align}
R_1 &\ge I(X;\Xh_1, U_1|Y), \label{trs}\\ 
R_2 &\ge I(X, Y; U_1|Z), \\ 
R_3 &\ge I(X, Y; V|Z, U_1), \\
R_4 &\ge I(U_2; Z|U_1, V, X, Y), \label{trf}
\end{align}
for some $p(x,y,z, u_1, u_2, v, \xh_1) = p(x)p(y|x)p(z|y)p(u_1|x,y)p(\xh_1|x,y,u_1)p(v|x,y,u_1)p(u_2|z,u_1,v)$ and functions $g_2 : \Uc_1 \times \Vc\times \Zc \rightarrow \Xh_2$ and $g_3 : \Uc_1 \times \Uc_2 \times \Vc\times\Xc \times \Yc \rightarrow \Zh$ such that
\begin{align}
\E(d_1(X, \Xh_1)) &\le D_1, \label{td1} \\ 
\E(d_2(X, \Xh_2)) &\le D_2, \label{td2} \\ 
\E(d_3(Z, \Zh)) &\le D_3. \label{td3}
\end{align}
The cardinalities for the auxiliary random variables are upper bounded by $|\Uc_1| \le |\Xc||\Yc|+6$, $|\Vc| \le |\Uc_1|(|\Xc||\Yc|+3)$ and $|\Uc_2| \le |\Uc_1||\Vc|(|\Zc|+1)$.
\end{theorem}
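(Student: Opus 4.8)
The plan is to combine the achievability and converse ideas from Theorems~\ref{thm1}--\ref{thm3}: Theorem~\ref{thm4} is the ``union'' of the Triangular setting (Theorem~\ref{thm2}) in the forward direction and the Two Way Cascade setting (Theorem~\ref{thm3}) in the reverse direction. For \emph{achievability}, I would run the forward scheme exactly as in Theorem~\ref{thm2}: Node~0 generates $U_1^n$ (doubly binned, once for the $R_1$ link and once for the $R_2$ link), generates $\Xh_1^n$ conditionally on $(U_1^n,Y^n)$ for Node~1's distortion, and generates $V^n$ conditionally on $(U_1^n,X^n,Y^n)$, binned against side information $(U_1^n,Z^n)$ and relayed on the $R_3$ link; Node~1 decodes $U_1^n$ and re-bins, Node~2 decodes $U_1^n$ then $V^n$. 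For the reverse direction, Node~2 now additionally has $V^n$ decoded, so it generates $U_2^n$ conditionally on $(U_1^n,V^n)$ according to $\prod_i p(u_{2i}|u_{1i},v_i)$, bins these into $2^{n(I(U_2;Z|U_1,V,X,Y)+2\e)}$ bins, finds a $u_2^n$ jointly typical with $(u_1^n,v^n,z^n)$ and sends the bin index on the $R_4$ link. Node~0 knows $(X^n,Y^n)$ hence $U_1^n$ and $V^n$ (it generated them), so it decodes $U_2^n$ by joint typicality — the relevant Markov chain is $U_2-(U_1,V,Z)-(X,Y)$, which follows from the claimed factorization, and the Markov Lemma gives joint typicality of the decoded sequences — and reconstructs $\Zh^n$ via $g_3$. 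Counting sequences in each bin against the corresponding conditional mutual information, all decoding steps succeed w.h.p., giving the rates \eqref{trs}--\eqref{trf}. The detailed version goes in Appendix~\ref{appen:1}.

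For the \emph{converse}, given a $(n,2^{nR_1},2^{nR_2},2^{nR_3},2^{nR_4},D_1,D_2,D_3)$ code I would set $U_{1i}=(M_2,X^{i-1},Y^{i-1},Z_{i+1}^n)$, $V_i=(U_{1i},M_3)$, and $U_{2i}=M_4$. The bounds on $R_1$ and $R_2$ are derived verbatim as in Theorem~\ref{thm3} (using $X-Y-Z$ and that $\Xh_{1i}$ is a function of $(Y^n,M_1)$). For $R_3$ I would mimic the $R_3$ bound of Theorem~\ref{thm2}: start from $nR_3\ge H(M_3|M_2,Z^n)=I(X^n,Y^n;M_3|M_2,Z^n)$, single-letterize, and identify $\sum_i I(X_i,Y_i;V_i|U_{1i},Z_i)$. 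For $R_4$ I would follow the $R_3$ bound of Theorem~\ref{thm3}: $nR_4\ge H(M_4|X^n,Y^n)\ge I(M_4;Z^n|X^n,Y^n)$, add $M_2,M_3$ into the conditioning (free, since they are functions of $(X^n,Y^n)$ through $M_1$, or more precisely $H(Z^n|X^n,Y^n,M_4)=H(Z^n|X^n,Y^n,M_2,M_3,M_4)$ after showing $Z^n-(X^n,Y^n)-(M_2,M_3)$), single-letterize to $\sum_i I(Z_i;U_{1i},V_i,U_{2i}|X_i,Y_i)$, and collapse to $\sum_i I(Z_i;U_{2i}|X_i,Y_i,U_{1i},V_i)$ using the Markov relation $Z_i-(X_i,Y_i)-(U_{1i},V_i)$. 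Then introduce the time-sharing variable $Q\sim\U[1:n]$ independent of $(X^n,Y^n,Z^n)$ and set $U_1=(U_{1Q},Q)$, $V=(V_Q,Q)$, $U_2=U_{2Q}$, $\Xh_1=\Xh_{1Q}$.

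The two remaining pieces are (i) the Markov relations needed to factor the joint law as stated and to justify the existence of the reconstruction functions $g_2,g_3$ meeting \eqref{td1}--\eqref{td3}, and (ii) reducing a general $p(\xh_1,v|x,y,u_1)$ to the product form $p(\xh_1|x,y,u_1)p(v|x,y,u_1)$. For (i) I would verify, using Lemma~1 of~\cite{Kaspi} with the appropriate choices of $(A_1,A_2,B_1,B_2,\Mt_1,\Mt_2)$ and the conditional-independence arguments already used in Theorem~\ref{thm3}, the chains $Z_i-(X_i,Y_i)-(U_{1i},V_i,\Xh_{1i})$, $U_{2i}-(Z_i,U_{1i},V_i)-(X_i,Y_i,\Xh_{1i})$, and the two ``distortion'' chains $Z^{i-1}-(U_{1i},V_i,Z_i)-(X_i,Y_i)$ and $(X_{i+1}^n,Y_{i+1}^n)-(U_{1i},V_i,U_{2i},X_i,Y_i)-Z_i$, then run the same averaging-over-the-dropped-coordinates argument as in Theorem~\ref{thm3} to produce $g_2^*(u_1,v,z)$ and $g_3^*(u_1,u_2,v,x,y)$ attaining no larger distortion. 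For (ii) I would invoke the method of~\cite[Lemma 5]{Permuter} exactly as in the converse of Theorem~\ref{thm2}: all rate and distortion constraints depend on the joint law only through the marginals $p(x,y,z,u_1,v)$ and $p(x,y,z,u_1,\xh_1)$, and the Markov condition $(U_1,V,\Xh_1)-(X,Y)-Z$ is preserved when $p(\xh_1,v|x,y,u_1)$ is replaced by the product of its marginals, so the restriction is without loss. The main obstacle is bookkeeping rather than conceptual: getting the conditioning sets in the $R_4$ chain and in the Lemma~1 applications exactly right so that $V_i$ (which contains $M_3$, a function of $(X^n,Y^n)$) sits on the correct side of every Markov chain — in particular confirming that adding $V_i$ to the conditioning in the $R_4$ step is legitimate and that the chain $U_{2i}-(Z_i,U_{1i},V_i)-(X_i,Y_i)$ genuinely holds under the code-induced distribution.
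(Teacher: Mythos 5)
Your proposal is correct and follows essentially the same route as the paper: the same auxiliary identifications $U_{1i}=(M_2,X^{i-1},Y^{i-1},Z_{i+1}^n)$, $V_i=(U_{1i},M_3)$, $U_{2i}=M_4$, the same four Markov relations verified via Lemma~1 of~\cite{Kaspi}, the same achievability obtained by grafting the reverse link of Theorem~\ref{thm3} (with $(U_1,V)$ in place of $U_1$) onto the forward scheme of Theorem~\ref{thm2}, and the same marginal-preserving reduction to the product form $p(\xh_1|x,y,u_1)p(v|x,y,u_1)$. The ``bookkeeping'' you flag (placing $M_3$ inside $V_i$ on the correct side of each chain) is handled in the paper exactly as you anticipate, so there is no gap.
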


\noindent \textit{Sketch of Achievability}

The forward direction ($R_1$, $R_2$, $R_3$) for Two-Way triangular source coding follows the procedure in Theorem~\ref{thm2}. For the reverse direction ($R_4$), it follows Theorem~\ref{thm3} with $(U_1,V)$ replacing the role of $U_1$ in Theorem~\ref{thm3}.

\begin{proof}[Proof of Converse]
Given a $(n, 2^{nR_1}, 2^{nR_2}, 2^{nR_3}, 2^{nR_4}, D_1, D_2, D_3)$ code, define $U_{1i} = (M_2, X^{i-1}, Y^{i-1}, Z_{i+1}^n)$, $U_{2i} =M_4$ and $V_i = (M_3, U_{1i})$. The $R_1$ and $R_2$ bounds follow the same steps as in Theorem 3. For $R_3$, we have
\begin{align*}
nR_3 &\ge H(M_3) \\
&\ge H(M_3|M_2,Z^n) \\
& = I(X^n,Y^n;M_3|M_2, Z^n) \\
& = \sum_{i=1}^n H(X_i,Y_i|M_2, Z^n, X^{i-1}, Y^{i-1}) - H(X_i,Y_i|M_2, M_3, Z^n, X^{i-1}, Y^{i-1}) \\
& {\ge} \sum_{i=1}^n H(X_i,Y_i|U_i, Z_i) - H(X_i,Y_i|U_{1i},V_i, Z_i) \\
& = \sum_{i=1}^n I(X_i,Y_i; V_i|U_{1i},Z_i).
\end{align*}
Next, consider{\allowdisplaybreaks
\begin{align*}
nR_4 &= H(M_4) \\
&\ge H(M_4|X^n, Y^n) \\
& \ge I(M_4; Z^n|X^n, Y^n) \\
& = H(Z^n|X^n, Y^n) - H(Z^n|X^n, Y^n, M_4)\\
& = H(Z^n|X^n, Y^n) - H(Z^n|X^n, Y^n, M_2, M_3, M_4)\\
& \ge \sum_{i=1}^n H(Z_i|X_{i}, Y_{i}) - H(Z_i|Z_{i+1}^n, X^{i}, Y^{i}, M_2, M_3, M_4) \\
& = \sum_{i=1}^n I(Z_i; U_{1i}, V_{i}, U_{2i}|X_i, Y_i)\\
& = \sum_{i=1}^n I(Z_i; U_{2i}|X_i, Y_i, V_{i}, U_{1i}),
\end{align*}}
where the last step follows from the Markov relation $Z_i - (X_i, Y_i) - (V_i, U_{1i})$ which we will now prove together with other Markov relations between the random variables. The first 2 Markov relations are for factoring the probability distribution while Markov relations 3 and 4 are for establishing the distortion constraints.

\noindent \textit{Markov Relations}
\begin{enumerate}
\item $Z_{i} - (X_{i}, Y_{i}) - (U_{1i}, V_i, \Xh_{1i})$: To establish this relation, we show that $I(Z_i; \Xh_{1i}, U_{1i}, V_i|X_i,Y_i) = 0.$
\begin{align*}
I(Z_i; \Xh_{1i}, U_{1i}, V_i|X_i,Y_i) &= I(Z_i; \Xh_{1i}, M_3, M_2, X^{i-1}, Y^{i-1}, Z_{i+1}^n|X_i,Y_i) \\
& \le I(Z_i; \Xh_{1i}, M_3, M_2, X^{i-1}, Y^{i-1}, X_{i+1}^n, Y_{i+1}^n, Z_{i+1}^n|X_i,Y_i) \\
& =  I(Z_i; X^{i-1}, Y^{i-1}, X_{i+1}^n, Y_{i+1}^n, Z_{i+1}^n|X_i,Y_i) \\
& = 0.
\end{align*}

\item $U_{2i} - (Z_i, U_{1i}, V_i) - (\Xh_{1i}, X_i, Y_i)$: Consider
\begin{align*}
I(\Xh_i, X_i, Y_i ; U_{2i}|Z_i, U_{1i}, V_i) & \le I(\Xh_i, X_i^n, Y_i^n ; M_4|Z_{i}^n, X^{i-1}, Y^{i-1}, M_2, M_3) \\
& = I(X_i^n, Y_i^n ; M_4|Z_{i}^n, X^{i-1}, Y^{i-1}, M_2, M_3).
\end{align*}
Now, using Lemma 1, set $A_1 = (X^{i-1}, Y^{i-1}),$ $B_1 = Z^{i-1},$ $A_2 = (X_{i}^n, Y_{i}^n),$ $B_2 = (Z_{i}^n),$ $\Mt_2 = M_4$ and $\Mt_1 = M_2.$ Then, using the third expression in the Lemma, we see that $I(X_i^n, Y_i^n ; M_4|Z_{i}^n, X^{i-1}, Y^{i-1}, M_2)=0.$

\item $Z^{i-1} - (U_{1i}, V_i, Z_i) - (X_i, Y_i)$: Consider{\allowdisplaybreaks
\begin{align*}
 I(X_i,Y_i; Z^{i-1}|U_{1i}, V_i, Z_i)  &\le I(X_{i}^n, Y_{i}^n; Z^{i-1}| X^{i-1}, Y^{i-1}, Z_{i}^{n}, M_2,M_3) \\
& = H(Z^{i-1}|X^{i-1}, Y^{i-1}, Z_{i}^{n}, M_2,M_3) - H(Z^{i-1}|X^n, Y^n, Z_{i}^{n}, M_2,M_3) \\
& \le H(Z^{i-1}|X^{i-1}, Y^{i-1}, Z_{i}^{n}) - H(Z^{i-1}|X^n, Y^n, Z_{i}^{n}) \\
& = H(Z^{i-1}|X^{i-1}, Y^{i-1}) - H(Z^{i-1}|X^{i-1}, Y^{i-1}) \\
& = 0.
\end{align*}}
\item $(X_{i+1}^n, Y_{i+1}^n) - (U_{1i}, U_{2i}, V_i, X_{i}, Y_{i}) - Z_i$: Consider
\begin{align*}
I(X_{i+1}^n, Y_{i+1}^n; Z_i|U_{1i}, U_{2i}, V_i, X_{i}, Y_{i}) \le I(X_{i+1}^n, Y_{i+1}^n; Z^i|M_2,M_3, M_4, Z_{i+1}^n, X^{i}, Y^{i}).
\end{align*}
Applying the first expression in the Lemma with $A_2 = (X_{i+1}^n, Y_{i+1}^n),$ $A_1 = (X^i, Y^i),$ $B_1 = Z^i$ and $B_2 = Z_{i+1}^n$ gives $I(X_{i+1}^n, Y_{i+1}^n; Z_i|U_{1i}, U_{2i}, X_{i}, Y_{i}) = 0.$ 
\end{enumerate}

\noindent \textit{Distortion Constraints}

The proof of the distortion constraints is omitted since it follows similar steps to the Two Way Cascade Source Coding case, with the new Markov relations 3 and 4, and $(U_{1i}, V_i)$ replacing $U_{1i}$ in the proof.

Using the standard time sharing random variable $Q$ as before and defining $U_{1} = (U_{1Q}, Q),$ $U_{2} = U_{2Q},$ $\Xh_{1} = \Xh_{1Q}$ and $V = V_Q$ we obtain an outer bound for the rate-distortion region for some probability distribution of the form $p(x,y,z, u_1, u_2, v, \xh_1) = p(x,y,z)p(u_1|x,y)p(\xh_1, v|x,y,u_1)p(u_2|z,u_1,v)$. It remains to show that it suffices to consider probability distributions of the form $p(x,y,z)p(u_1|x,y)p(\xh_1|x,y,u_1)p(v|x,y,u_1)p(u_2|z,u_1,v)$. This follows similar steps to proof of Theorem 2. Let
\begin{align*}
p_1 &= p(x,y,z)p(u_1|x,y)p(\xh_1, v|x,y,u_1)p(u_2|z,u_1,v), \\
p_2 &= p(x,y,z)p(u_1|x,y)\hat{p}(\xh_1|x,y,u_1)\hat{p}(v|x,y,u_1)p(u_2|z,u_1,v),
\end{align*}
where $\hat{p}(\xh_1|x,y,u_1)$ and $\hat{p}(v|x,y,u_1)$ are the marginals induced by $p_1$. Next, note that $R_1$, $R_2$, $R_3$, $R_4$ and the distortion constraints depend on $p_1$ only through the marginals $p(x,y,z,u_1,u_2,v)$ and $p(x,y,z,u_1, \xh_1)$. Since these marginals are the same for $p_1$ and $p_2$, the rate and distortion constraints are unchanged. Finally, note that the Markov relations 1 and 2 implied by $p_1$ continues to hold under $p_2$. This completes the proof of the converse.
\end{proof}
\section{Gaussian Quadratic Distortion Case} \label{sect:4}
In this section, we evaluate the rate-distortion regions when $(X,Y,Z)$ are jointly Gaussian and the distortion is measured in terms of the mean square error. We will assume, without loss of generality, that $X = A + B+Z$, $Y = B+Z$ and $Z = Z$, where $A$, $B$ and $Z$ are independent, zero mean Gaussian random variables with variances $\sigma_A^2$, $\sigma_B^2$ and $\sigma_Z^2$ respectively.
\subsection{Quadratic Gaussian Cascade Source Coding}
\begin{corollary}[Quadratic Gaussian Cascade Source Coding] \label{coro1}
First, we note that if $R_2 < \frac{1}{2}\log \frac{\sigma_{A}^2 + \sigma_{B}^2}{D_2}$, then the distortion constraint $D_2$ cannot be met. Hence, given $D_1, D_2>0$ and $R_2 \ge \max\{\frac{1}{2}\log \frac{\sigma_{A}^2 + \sigma_{B}^2}{D_2},0\}$, the rate distortion region for Quadratic Gaussian Cascade Source Coding is characterized by the smallest rate $R_1$ such that $(D_1, D_2, R_1, R_2)$ are achievable, which is 
\begin{align*}
R_1 = \max\left\{\frac{1}{2}\log \frac{\sigma_{A}^2}{D_1}, \frac{1}{2}\log \frac{\sigma_{A}^2}{\sigma_{A|U,B}^2}\right\},
\end{align*}
where $U = \alpha^*A + \beta^* B + Z^*$, $Z^* \sim N (0, \sigma_{Z^*}^2)$, with $\alpha^*$, $\beta^*$ and $\sigma_{Z^*}^2$ achieving the maximum in the following optimization problem:
\begin{align*}
\mbox{maximize} & \quad \sigma_{A|U,B}^2 \\
\mbox{subject to} & \quad R_2 \ge \frac{1}{2}\log \frac{\sigma_U^2}{\sigma_{Z^*}^2} \\
& \quad D_2 \ge \sigma_{A+B|U}^2
\end{align*}
\end{corollary}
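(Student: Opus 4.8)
The plan is to evaluate the single-letter region of Theorem~\ref{thm1} under the Gaussian decomposition $X=A+B+Z$, $Y=B+Z$. Two elementary reductions organize the computation. Since $X-Y=A$ with $A$ independent of $Y$, reconstructing $X$ at Node~1 (which knows $Y$) to distortion $D_1$ is the same as reconstructing $A$, so $h(X|Y)=\frac{1}{2}\log(2\pi e\,\sigma_A^2)$ and $\var(X|U,Y)=\var(A|U,Y)$ for any auxiliary $U$; and since $X-Z=A+B$ with $A+B$ independent of $Z$, reconstructing $X$ at Node~2 (which knows $Z$) to distortion $D_2$ is the same as reconstructing $A+B$, so $\var(X|U,Z)=\var(A+B|U,Z)$. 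It is also convenient to note the identity $I(U;X,Y|Z)=I(A;U|Y)+I(B;U|Z)$, which follows from the factorization $p(a,b,z,u)=p(a)p(b)p(z)p(u|a,b+z)$ encoding $U-(X,Y)-Z$ and the Markov chain $A-(U,Y)-Z$ it induces, and which exhibits the $R_2$-budget as a part describing $A$ to Node~1 plus a part describing $B$ to Node~2. The preliminary claim is then immediate: for any admissible $U$, $R_2\ge I(U;X,Y|Z)\ge I(U;A+B|Z)\ge\frac{1}{2}\log\frac{\sigma_A^2+\sigma_B^2}{\var(A+B|U,Z)}\ge\frac{1}{2}\log\frac{\sigma_A^2+\sigma_B^2}{D_2}$, using $A+B\perp Z$ and the maximum-entropy bound, so $D_2$ is unattainable whenever $R_2<\frac{1}{2}\log\frac{\sigma_A^2+\sigma_B^2}{D_2}$.

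For achievability, let $(\alpha^*,\beta^*,\sigma_{Z^*}^2)$ attain the maximum in the stated program and use in Theorem~\ref{thm1} the Gaussian auxiliary $\Ut=\alpha^*A+\beta^*(B+Z)+Z^*=\alpha^*A+\beta^*Y+Z^*$, with $Z^*\sim N(0,\sigma_{Z^*}^2)$ independent of $(A,B,Z)$; since $\Ut$ is a function of $(X,Y)$ plus independent noise, $\Ut-(X,Y)-Z$ holds. Node~2, which knows $Z$, removes the $\beta^*Z$ term and thus effectively observes $U:=\alpha^*A+\beta^*B+Z^*$ (and $Z$, independent of $(A,B,Z^*)$), so $\Xh_2=Z+\E[A+B|U]$ attains $\E(X-\Xh_2)^2=\sigma_{A+B|U}^2\le D_2$, and the rate to relay $\Ut$ to it is $I(\Ut;X,Y|Z)=\frac{1}{2}\log(\sigma_U^2/\sigma_{Z^*}^2)\le R_2$. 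Node~1, which knows $Y$, removes the $\beta^*Y$ term and effectively observes $\alpha^*A+Z^*$, so $\var(X|\Ut,Y)=\var(A|\alpha^*A+Z^*)=\sigma_{A|U,B}^2$; hence conveying $\Ut$ to Node~1 costs $I(X;\Ut|Y)=\frac{1}{2}\log(\sigma_A^2/\sigma_{A|U,B}^2)$, and when $D_1<\sigma_{A|U,B}^2$ a further Gaussian refinement layer of rate $\frac{1}{2}\log(\sigma_{A|U,B}^2/D_1)$ brings Node~1 to distortion $D_1$, for a grand total of $\max\{\frac{1}{2}\log(\sigma_A^2/D_1),\,\frac{1}{2}\log(\sigma_A^2/\sigma_{A|U,B}^2)\}$, which is exactly the claimed value of $R_1$.

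For the converse, fix any auxiliary $U$ and reconstructions $\Xh_1$, $\Xh_2=g_2(U,Z)$ from the region of Theorem~\ref{thm1}, so that $R_2\ge I(U;X,Y|Z)$ and $\var(A+B|U,Z)=\var(X|U,Z)\le\E(X-\Xh_2)^2\le D_2$. From $R_1\ge I(X;\Xh_1,U|Y)=h(X|Y)-h(X|\Xh_1,U,Y)$: bounding $h(X|\Xh_1,U,Y)\le\frac{1}{2}\log\big(2\pi e\,\var(X|\Xh_1,U,Y)\big)\le\frac{1}{2}\log(2\pi e\,D_1)$ gives $R_1\ge\frac{1}{2}\log(\sigma_A^2/D_1)$, and bounding $h(X|\Xh_1,U,Y)\le h(X|U,Y)\le\frac{1}{2}\log\big(2\pi e\,\var(A|U,Y)\big)$ gives $R_1\ge\frac{1}{2}\log(\sigma_A^2/\var(A|U,Y))$; hence $R_1\ge\max\{\frac{1}{2}\log(\sigma_A^2/D_1),\,\frac{1}{2}\log(\sigma_A^2/\var(A|U,Y))\}$. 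What remains---and what I expect to be the main obstacle---is to prove that $\var(A|U,Y)$ cannot exceed the optimal value of $\sigma_{A|U,B}^2$ in the stated program; equivalently, that a jointly Gaussian auxiliary maximizes $\var(A|U,Y)$ over all $U$ with $U-(X,Y)-Z$ subject to $I(U;X,Y|Z)\le R_2$ and $\var(A+B|U,Z)\le D_2$. Via the split $I(U;X,Y|Z)=I(A;U|Y)+I(B;U|Z)$ and $I(A;U|Y)\ge\frac{1}{2}\log(\sigma_A^2/\var(A|U,Y))$, this says that enlarging $\var(A|U,Y)$ must inflate $I(B;U|Z)$ enough to violate $\var(A+B|U,Z)\le D_2$, and I expect it to require the (conditional) entropy power inequality---as in the Gaussian Wyner-Ziv converse---to couple the estimation of $A$ from $(U,Y)$ with that of $A+B$ from $(U,Z)$ through $Y=B+Z$; an alternative is to subtract from $U$ the $Z$-component visible to Node~2 and the $Y$-component visible to Node~1 and so reduce the problem to the Gaussian cascade of~\cite{Permuter}, whose explicit rate-distortion characterization can then be quoted. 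Apart from that step the argument is routine manipulation of Gaussian conditional variances, together with the scale invariance $(\alpha^*,\beta^*,Z^*)\mapsto(t\alpha^*,t\beta^*,tZ^*)$ of the program, which reduces it to a low-dimensional search.
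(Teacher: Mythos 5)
Your achievability half is correct and is essentially the paper's own argument: you use the same Gaussian auxiliary $\alpha^*A+\beta^*(B+Z)+Z^*$ (which equals the paper's $U'=\alpha^*X+(\beta^*-\alpha^*)Y+Z^*$), the same reconstruction $\E(X|U',Z)$ at Node 2, and the same Gaussian refinement layer at Node 1 when $D_1<\sigma_{A|U,B}^2$, and your rate and distortion computations match the paper's. Your preliminary observation that $R_2<\frac{1}{2}\log\frac{\sigma_A^2+\sigma_B^2}{D_2}$ makes $D_2$ unattainable is also fine, and your decomposition $I(U;X,Y|Z)=I(A;U|Y)+I(B;U|Z)$ is a valid (if ultimately unused) consequence of the conditional independence of $A$ and $(B,Z)$ given $(U,Y)$.

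The converse, however, is incomplete, and you have named the missing piece yourself: after obtaining $R_1\ge\max\bigl\{\frac{1}{2}\log(\sigma_A^2/D_1),\,\frac{1}{2}\log(\sigma_A^2/\var(A|U,Y))\bigr\}$ for an arbitrary auxiliary $U$ from Theorem~\ref{thm1}, you still must show that $\var(A|U,Y)$ cannot exceed the optimal value of the stated program, i.e.\ that jointly Gaussian auxiliaries are extremal under the constraints $I(U;X,Y|Z)\le R_2$ and $\var(A+B|U,Z)\le D_2$. You conjecture this and gesture at a conditional entropy-power inequality, but you do not carry the step out, so the proof as written does not close. The paper avoids this single-letter extremality question altogether by an operational genie argument: reveal $Z^n$ to all three nodes (which can only enlarge the rate-distortion region), subtract $Z$ from $X$ and $Y$, and observe that the resulting problem is exactly the Quadratic Gaussian cascade of~\cite{Permuter} with source $A+B$ and side information $B$ at Nodes 0 and 1 and no side information at Node 2, whose explicit rate-distortion region is known and coincides with the region in Corollary~\ref{coro1}; since the achievable scheme meets this genie-aided outer bound, no fresh Gaussian-optimality proof is required. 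Your second suggested route (``subtract off the components of $U$ visible at each node and reduce to~\cite{Permuter}'') is in this spirit, but the reduction must be made at the operational, $n$-letter level rather than on the auxiliary $U$ of the single-letter region --- that is what makes it immediate. Completing your direct route instead would amount to re-deriving the converse of~\cite{Permuter}, which indeed requires an EPI-type extremal argument.
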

The optimization problem given in the corollary can be solved following analysis in~\cite{Permuter}. In our proof of the corollary, we will show that the rate distortion region obtained is the same as the case when the degraded side information $Z$ is available to all nodes. 

\begin{proof}[Converse]
Consider the case when the side information $Z$ is available to all nodes. Without loss of generality, we can subtract the side information away from $X$ and $Y$ to obtain a rate distortion problem involving only $A+B$ and $B$ at Node 0, $B$ at Node 1 and no side information at Node 2. Characterization of this class of Quadratic Gaussian Cascade source coding problem has been carried out in~\cite{Permuter} and following the analysis therein, we can show that the rate distortion region is given by the region in Corollary~\ref{coro1}. 
\end{proof}
\begin{proof}[Achievability]
We evaluate Theorem~\ref{thm1} using Gaussian auxiliaries random variables. Let $U' = \alpha^*X + (\beta^* -\alpha^*)Y + Z^*  = \alpha^*A + \beta^* (B+Z) + Z^*$ and $V$ be a Gaussian random variable that we will specify in the proof. We now rewrite $R_1 = I(X;U',\Xh_1|Y)$ as $R_1 = I(X;U',V|Y)$ with $\Xh_1 = V + \E(X|U',Y)$, $V$ independent of $U'$ and $Y$. Let $g_2(U', Z) = \E(X|U',Z)$. Evaluating $R_1$ and $R_2$ using this choice of auxiliaries, we have
\begin{align*}
R_1 &= I(X;U',V|Y) \\
& = h(A+B+Z|B+Z) - h(X|U',V,Y) \\
& = \frac{1}{2}\log\frac{\sigma_{A}^2}{\sigma_{X|U',V,Y}^2}, \\
R_2 & = I(X,Y;U'|Z) \\
& = h(U'|Z) - h(U'|X,Y,Z) \\
& = \frac{1}{2}\log \frac{\sigma_{\alpha^*A+ \beta^*B + Z^*}^2}{\sigma_{Z^*}^2} \\
& = \frac{1}{2}\log \frac{\sigma_{U}^2}{\sigma_{Z^*}^2}.
\end{align*}
Next, we have{\allowdisplaybreaks
\begin{align*}
\sigma_{X|U', Y}^2 &= \sigma_{A+B+Z|\alpha^*A + \beta^* (B+Z) + Z^*, B+Z}^2 \\
& = \sigma_{A|\alpha^*A + Z^*, B+Z}^2 \\
& = \sigma_{A|\alpha^*A + Z^*}^2 \\
& = \sigma_{A|U,B}^2 .
\end{align*}}
If $\sigma_{X|U', Y}^2 = \sigma_{A|U,B}^2 \le D_1$, we set $V = 0$ to obtain $R_1 = \frac{1}{2}\log\frac{\sigma_{A}^2}{\sigma_{X|U',Y}^2}$. If $\sigma_{X|U', Y}^2 > D_1$, then we choose $V = X - \E(X|U',Y) + Z_2$ where $Z_2\sim N(0, D_1\sigma_{X|U', Y}^2/(\sigma_{X|U', Y}^2 - D_1))$ so that $\sigma_{X|U',V,Y}^2 = D_1$ and obtain $R_1 = \frac{1}{2}\log\frac{\sigma_{A}^2}{D_1}$. Therefore, $R_1 = \max\{\frac{1}{2}\log \frac{\sigma_{A}^2}{D_1}, \frac{1}{2}\log \frac{\sigma_{A}^2}{\sigma_{A|U,B}^2}\}$. 

Finally, we show that this choice of random variables satisfy the distortion constraints. For $D_1$, note that since $\E (X - \Xh_1)^2 = \sigma_{X|U',V,Y}^2$, the distortion constraint $D_1$ is always satisfied. For the second distortion constraint, we have 
\begin{align*}
\E(X-\Xh_2)^2 & = \sigma_{X|U', Z}^2 \\
&= \sigma_{A+B|\alpha^*A + \beta^* (B+Z) + Z^*, Z}^2 \\
& = \sigma_{A+B|\alpha^*A+\beta^* B + Z^*, Z}^2 \\
& =  \sigma_{A+B|\alpha^*A+\beta^* B + Z^*}^2 \\
& = \sigma_{A+B|U}^2 \\
&\le D_2. 
\end{align*}

Hence, our choice of auxiliary $U'$ and $V$ satisfies the rate distortion region and distortion constraints given in the corollary, which completes our proof.
\end{proof}

\subsection{Quadratic Gaussian Triangular Source Coding}
\begin{corollary}[Quadratic Gaussian Triangular Source Coding] \label{coro2}
Given $D_1, D_2>0$ and $R_2, R_3 \ge 0, \, R_2 + R_3 \ge \frac{1}{2}\log \frac{\sigma_{A}^2 + \sigma_{B}^2}{D_2}$, the rate distortion region for Quadratic Gaussian Triangular Source Coding is characterized by the smallest $R_1$ for which $(D_1, D_2, R_1, R_2, R_3)$ is achievable, which is 
\begin{align*}
R_1 = \max\left\{\frac{1}{2}\log \frac{\sigma_{A}^2}{D_1}, \frac{1}{2}\log \frac{\sigma_{A}^2}{\sigma_{A|U,B}^2}\right\},
\end{align*}
where $U = \alpha^*A + \beta^* B + Z^*$, $Z \sim N (0, \sigma_{Z^*}^2)$, with $\alpha^*$, $\beta^*$ and $\sigma_{Z^*}^2$ satisfying the following optimization problem.
\begin{align*}
\mbox{maximize} & \quad \sigma_{A|U,B}^2 \\
\mbox{subject to} & \quad R_2 \ge \frac{1}{2}\log \frac{\sigma_U^2}{\sigma_{Z^*}^2} \\
& \quad 2^{2R_3}D_2 \ge \sigma_{A+B|U}^2
\end{align*}
\end{corollary}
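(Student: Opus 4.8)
The plan is to mirror the structure of the proof of Corollary~\ref{coro1}, establishing converse and achievability separately, with the only genuine change being that the rate $R_2$ from the Cascade case is effectively replaced by the combined budget $R_2 + R_3$ for delivering the description of $A+B$ to Node~2 (hence the constraint $2^{2R_3}D_2 \ge \sigma_{A+B|U}^2$ instead of $D_2 \ge \sigma_{A+B|U}^2$), while the $R_1$ bottleneck — which is what the corollary actually characterizes — is literally unchanged. For the converse, I would again invoke the reduction in which the side information $Z$ is made available to all three nodes: subtracting $Z$ off reduces the problem to a Quadratic Gaussian Triangular source coding problem in the variables $(A+B, B)$ with no side information at Node~2, whose rate–distortion region was characterized in~\cite{Permuter}. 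Reading off that characterization gives exactly the region in the corollary, and since giving $Z$ to everyone can only enlarge the region, this is a valid outer bound on the smallest achievable $R_1$.

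For achievability, I would evaluate Theorem~\ref{thm2} with jointly Gaussian auxiliaries. Take $U' = \alpha^* X + (\beta^* - \alpha^*) Y + Z^* = \alpha^* A + \beta^*(B+Z) + Z^*$ exactly as in Corollary~\ref{coro1}, choose $V$ (the description intended for Node~2 over the $R_3$ link) to be a Gaussian test channel carrying the residual $X - \E(X|U',Y,Z)$ — equivalently a Wyner–Ziv-style refinement of $A+B$ given $U'$ — and choose $\Xh_1$ as in Corollary~\ref{coro1} by folding the distortion-$D_1$ refinement into the $R_1$ term via $R_1 = I(X;U',\Xh_1|Y)$. With $g_2(U',V,Z) = \E(X|U',V,Z)$, the computation of $R_1 = I(X;U',\Xh_1|Y)$ is identical to the Cascade case and yields $R_1 = \max\{\frac12\log(\sigma_A^2/D_1), \frac12\log(\sigma_A^2/\sigma_{A|U,B}^2)\}$; the computation of $R_2 = I(X,Y;U'|Z)$ again gives $\frac12\log(\sigma_U^2/\sigma_{Z^*}^2)$; and the new term $R_3 = I(X,Y;V|Z,U')$ gives a $\frac12\log$ of the ratio of the conditional variance of $A+B$ given $(U',Z)$ to that given $(U',V,Z)$, which I would design to equal exactly $D_2$ so that the constraint reads $2^{2R_3}D_2 \ge \sigma_{A+B|U}^2$ (using $\sigma_{A+B|U',Z}^2 = \sigma_{A+B|U}^2$ from the same conditional-variance simplifications as in Corollary~\ref{coro1}). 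One then checks $\E(X-\Xh_2)^2 = \sigma_{A+B|U',V,Z}^2 \le D_2$ and $\E(X-\Xh_1)^2 \le D_1$ directly.

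The main obstacle, as in the Cascade corollary, is the converse reduction: one must argue carefully that enhancing the problem by revealing $Z$ everywhere does not decrease the minimal $R_1$, and then that the resulting $Z$-free triangular problem is precisely an instance covered by the characterization in~\cite{Permuter} — in particular that the optimization over Gaussian auxiliaries there matches the one stated here after the change of variables $X \mapsto A+B$, $Y \mapsto B$. The remaining work (the explicit Gaussian mutual-information and MMSE evaluations, and verifying the conditional-variance identities such as $\sigma_{A+B|\alpha^*A+\beta^*B+Z^*,Z}^2 = \sigma_{A+B|\alpha^*A+\beta^*B+Z^*}^2$) is routine and essentially copied from the proof of Corollary~\ref{coro1}, so I would state it briefly rather than belabor it.
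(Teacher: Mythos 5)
Your proposal is correct and follows essentially the same route as the paper: the converse is the identical reduction (reveal $Z$ at all nodes, subtract it off, and invoke the explicit characterization of the resulting $(A+B,B)$ triangular problem from~\cite{Permuter}), and the achievability is the same evaluation of Theorem~\ref{thm2} with Gaussian auxiliaries, your residual-based test channel for $V$ being an equivalent parametrization of the paper's $V' = X + \eta U' + Z_3$ and yielding the same identity $D_2 \ge \sigma_{X|Z,U',V'}^2 = \sigma_{A+B|U}^2/2^{2R_3}$.
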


As with Corollary~\ref{coro1}, the optimization problem given this corollary can be solved following analysis in~\cite{Permuter}.

\begin{proof}[Converse]
The converse uses the same approach as Corollary~\ref{coro1}. Consider the case when the side information $Z$ is available to all nodes. Without loss of generality, we can subtract the side information away from $X$ and $Y$ to obtain a rate distortion problem involving only $A+B$ and $B$ at Node 0, $B$ at Node 1 and no side information at Node 2. Characterization of this class of Quadratic Gaussian Triangular source coding problem has been carried out in~\cite{Permuter} and following the analysis therein, we can show that the rate distortion region is given by the region in Corollary~\ref{coro2}. 
\end{proof}
\begin{proof}[Achievability]
We evaluate Theorem~\ref{thm2} using Gaussian auxiliary random variables. Let $U' = \alpha^*X + (\beta^* -\alpha^*)Y + Z^*  = \alpha^*A + \beta^* (B+Z) + Z^*$ and $V' = X + \eta U' + Z_3$, $Z_3\sim N(0, \sigma_{Z_3}^2)$. Following the analysis in Corollary~\ref{coro1}, the inequalities for the rates are{\allowdisplaybreaks 
\begin{align*}
R_1 &= \max\left\{\frac{1}{2}\log \frac{\sigma_{A}^2}{D_1}, \frac{1}{2}\log \frac{\sigma_{A}^2}{\sigma_{A|U,B}^2}\right\}, \\
R_2 &\ge \frac{1}{2}\log \frac{\sigma_U^2}{\sigma_{Z^*}^2}, \\
R_3 &\ge I(X,Y;V|Z,U') = I(X;V'|Z,U') \\
&=\frac{1}{2}\log \frac{\sigma_{X|Z,U'}^2}{\sigma_{X|Z,U', V'}^2}.
\end{align*}}

As with Corollary~\ref{coro1}, the distortion constraint $D_1$ is satisfied with an appropriate choice of $\Xh_1$. For the distortion constraint $D_2$, we have
\begin{align*}
D_2 \ge \sigma_{X|Z,U', V'}^2.
\end{align*}

Next, note that we can assume equality for $R_3$, since we can adjust $\eta$ and $\sigma_{Z_3}^2$ so that inequality is met. Since this operation can will only decrease $\sigma_{X|Z,U', V'}^2$, the distortion constraint $D_2$ will still be met. Therefore, setting $R_3 = \frac{1}{2}\log \frac{\sigma_{X|Z,U'}^2}{\sigma_{X|Z,U', V'}^2}$, we have
\begin{align*}
D_2 &\ge \sigma_{X|Z,U', V'}^2 \\
&= \frac{\sigma_{X|Z,U'}^2}{2^{2R_3}}.
\end{align*}
Since $\sigma_{X|Z,U'}^2 = \sigma_{A+B|U}^2$, this completes the proof of achievability.
\end{proof}

\noindent {\em Remark: } As alternative characterizations, we show in Appendix~\ref{appen:3} that the Cascade and Triangular settings in Corollaries~\ref{coro1} and~\ref{coro2} can be transformed into equivalent problems in~\cite{Permuter} where explicit characterizations of the rate distortion regions were given.

\subsection{Quadratic Gaussian Two Way Source Coding}
It is straightforward to extend Corollaries~\ref{coro1} and~\ref{coro2} to Quadratic Gaussian Two Way Cascade and Triangular Source Coding using the observation that in the Quadratic Gaussian case, side information at the encoder does not reduce the required rate. Therefore, the backward rate from Node 2 to Node 0 is always lower bounded by $\frac{1}{2}\log\frac{\sigma_{Z|B+Z}^2}{D_3}$. This rate (and distortion constraint $D_3$) can be achieved by simply encoding $Z$. We therefore state the following corollary without proof.

\begin{corollary}[Quadratic Gaussian Two Way Triangular Source Coding] \label{coro4}
Given $D_1, D_2, D_3>0$, $R_2, R_3 \ge 0, \, R_2 + R_3 \ge \frac{1}{2}\log \frac{\sigma_{A}^2 + \sigma_{B}^2}{D_2}$ and $R_4 \ge \max\{\frac{1}{2}\log \frac{\sigma_{Z|Y}^2}{D_3}, 0\} $, the rate distortion region for Quadratic Gaussian Two Way Triangular Source Coding is characterized by the smallest $R_1$ for which $(R_1, R_2, R_3, R_4, D_1, D_2, D_3)$ is achievable, which is
\begin{align*}
R_1 = \max\left\{\frac{1}{2}\log \frac{\sigma_{A}^2}{D_1}, \frac{1}{2}\log \frac{\sigma_{A}^2}{\sigma_{A|U,B}^2}\right\},
\end{align*}
where $U = \alpha^*A + \beta^* B + Z^*$, $Z \sim N (0, \sigma_{Z^*}^2)$, with $\alpha^*$, $\beta^*$ and $\sigma_{Z^*}^2$ satisfying the following optimization problem.
\begin{align*}
\mbox{maximize} & \quad \sigma_{A|U,B}^2 \\
\mbox{subject to} & \quad R_2 \ge \frac{1}{2}\log \frac{\sigma_U^2}{\sigma_{Z^*}^2} \\
& \quad 2^{2R_3}D_2 \ge \sigma_{A+B|U}^2
\end{align*}
\end{corollary}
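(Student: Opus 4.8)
The plan is to obtain Corollary~\ref{coro4} by combining the evaluation of Theorem~\ref{thm4} with Gaussian auxiliaries (as in Corollaries~\ref{coro1} and~\ref{coro2}) for the forward rates $R_1,R_2,R_3$, and a simple separate argument for the backward rate $R_4$. The key observation, already flagged in the text preceding the corollary, is that in the quadratic Gaussian setting the presence of side information at an encoder never lowers the rate it must send. Concretely, for the $R_4$ link Node~2 must describe $Z^n$ to Node~0, and Node~0 already has $(X^n,Y^n)$. Since $X-Y-Z$, conditioned on $Y$ the variable $Z$ is still Gaussian with variance $\sigma^2_{Z|Y}$, and a standard rate-distortion (Wyner--Ziv with Gaussian statistics, or simply the converse for lossy compression of $Z$ with decoder side information $Y$) argument shows $R_4\ge\frac12\log\frac{\sigma^2_{Z|Y}}{D_3}$. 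This lower bound is achievable by a Gaussian test channel $U_2 = Z + N$ that is independent of everything else, with $g_3$ the MMSE estimate; one checks from \eqref{trf} that $I(U_2;Z\mid U_1,V,X,Y) = I(U_2;Z\mid Y) = \frac12\log\frac{\sigma^2_{Z|Y}}{D_3}$ using the Markov chain $U_2-(Z,Y)-(U_1,V,X)$, which holds by construction, and $\E(d_3(Z,\Zh))=D_3$.

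For the forward part I would reuse verbatim the auxiliary construction of Corollary~\ref{coro2}: set $U' = \alpha^* A + \beta^*(B+Z) + Z^*$ and $V' = X + \eta U' + Z_3$, choose $\Xh_1$ as in Corollary~\ref{coro1} to meet $D_1$, and take $g_2(U',V',Z)=\E(X\mid U',V',Z)$. The three inequalities \eqref{trs}--\eqref{trf} with this choice reduce exactly to the constraints
\[
R_1 = \max\Bigl\{\tfrac12\log\tfrac{\sigma_A^2}{D_1},\ \tfrac12\log\tfrac{\sigma_A^2}{\sigma^2_{A|U,B}}\Bigr\},\quad R_2\ge\tfrac12\log\tfrac{\sigma_U^2}{\sigma^2_{Z^*}},\quad 2^{2R_3}D_2\ge\sigma^2_{A+B|U},
\]
precisely because, as computed in Corollary~\ref{coro2}, $\sigma^2_{X|Z,U'}=\sigma^2_{A+B|U}$ and $R_3$ can be taken with equality $R_3=\frac12\log\frac{\sigma^2_{X|Z,U'}}{\sigma^2_{X|Z,U',V'}}$ by tuning $\eta,\sigma^2_{Z_3}$, which only tightens the $D_2$ constraint. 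Since $U_2$ is independent of $(U_1,V)$ given $(Z,Y)$, appending the backward codeword does not perturb any of the forward computations, so the joint distribution required by Theorem~\ref{thm4} factors correctly.

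For the converse I would run the same ``genie'' argument as in Corollaries~\ref{coro1} and~\ref{coro2}: give $Z$ to all three nodes, subtract it off to reduce $X\mapsto A+B$ and $Y\mapsto B$, and invoke the explicit characterization from~\cite{Permuter} to get the $R_1,R_2,R_3$ region; this only enlarges the achievable region, so the resulting bound on $R_1$ is valid for the original problem, and it is matched by the achievability above. For $R_4$, the lower bound $R_4\ge\frac12\log\frac{\sigma^2_{Z|Y}}{D_3}$ follows from Theorem~\ref{thm4}'s bound \eqref{trf} together with the Markov structure and the fact that conditioning reduces differential entropy: $I(U_2;Z\mid U_1,V,X,Y)\ge h(Z\mid X,Y)-h(Z\mid \Zh,Y)\ge\frac12\log\frac{\sigma^2_{Z|Y}}{D_3}$ by the Gaussian maximum-entropy bound on $h(Z\mid\Zh,Y)$ under the constraint $\E(Z-\Zh)^2\le D_3$. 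The main obstacle, and the only place needing care, is verifying that the backward auxiliary $U_2$ can be chosen so that all four Markov conditions of Theorem~\ref{thm4} hold simultaneously while leaving the forward quantities untouched; this is why $U_2=Z+N$ with $N$ independent of $(A,B,Z,Z^*,Z_3)$ is the right choice. Because this is routine once the decomposition $X=A+B+Z$, $Y=B+Z$ is in place, the corollary is stated without a detailed proof in the text.
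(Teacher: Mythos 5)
Your proposal is correct and follows essentially the same route the paper intends (the corollary is stated without proof there): the forward rates are inherited from the Gaussian evaluation of Corollary~\ref{coro2}, and the backward link decouples because $M_2,M_3$ are functions of $(X^n,Y^n)$ already at the decoder, so $R_4$ reduces to Wyner--Ziv coding of $Z$ with side information $Y$, giving $\frac{1}{2}\log\frac{\sigma^2_{Z|Y}}{D_3}$ in both directions. Your explicit choice $U_2=Z+N$ independent of everything, together with the check that $Z-(X,Y)-(U_1,V)$ keeps the forward computations untouched, is exactly the verification the paper leaves to the reader.
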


\textit{Remark:} The special case of Two Way Cascade Quadratic Gaussian Source Coding can be obtained as a special case by setting $R_3 =0$.

Next, we present an extension to our settings for which we can characterize the rate-distortion region in the Quadratic Gaussian case. In this extended setting, we have Cascade setting from Node 0 to Node 2 and a triangular setting from Node 2 to Node 0, with the additional constraint that Node 1 also reconstructs a lossy version of $Z$. As formal definitions are natural extensions of those presented in section~\ref{sect:2}, we will omit them here. The setting is shown in Figure~\ref{fig4}.

\begin{figure} [!h] 
\begin{center}
\psfrag{x}[l]{\large $X,Y$}
\psfrag{y}[c]{\large $Y$}
\psfrag{z}[c]{\large $Z$}
\psfrag{r1}{$R_1$}
\psfrag{r2}{$R_2$}
\psfrag{r4}{$R_3$}
\psfrag{r5}{$R_4$}
\psfrag{r6}{$R_5$}
\psfrag{x1}[c]{$\Xh_1, \Zh_1$}
\psfrag{x2}{$\Xh_2$}
\psfrag{z2}[c]{$\Zh_2$}
\psfrag{n0}[l]{Node 0}
\psfrag{n1}[l]{Node 1}
\psfrag{n2}[l]{Node 2}
\scalebox{0.85}{\includegraphics{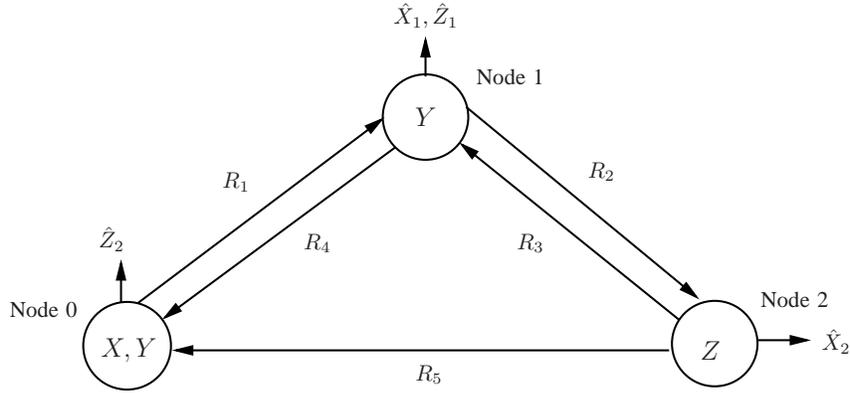}} 
\end{center}
\caption{Extended Quadratic Gaussian Two Way source coding} \label{fig4}
\end{figure}
\begin{theorem}[Extended Quadratic Gaussian Two Way Cascade Source Coding] \label{coro5}
Given $D_1, D_2>0$, $0< D_{Z_1}, D_{Z_2} \le \sigma_{Z|Y}^2$ and $R_2 \ge \max\{\frac{1}{2}\log \frac{\sigma_{A}^2 + \sigma_{B}^2}{D_2},0\}$, the rate distortion region for the Extended Quadratic Gaussian Two Way Cascade Source Coding is given by the set of $R_1, R_3, R_4, R_5 \ge 0$ satisfying the following equalities and inequalities
\begin{align*}
R_1 = \max\left\{\frac{1}{2}\log \frac{\sigma_{A}^2}{D_1}, \frac{1}{2}\log \frac{\sigma_{A}^2}{\sigma_{A|U,B}^2}\right\},
\end{align*}
where $U = \alpha^*A + \beta^* B + Z^*$, $Z^* \sim N (0, \sigma_{Z^*}^2)$, with $\alpha^*$, $\beta^*$ and $\sigma_{Z^*}^2$ satisfying the following optimization problem.
\begin{align*}
\mbox{maximize} & \quad \sigma_{A|U,B}^2 \\
\mbox{subject to} & \quad R_2 \ge \frac{1}{2}\log \frac{\sigma_U^2}{\sigma_{Z^*}^2} \\
& \quad D_2 \ge \sigma_{A+B|U}^2
\end{align*}
and{\allowdisplaybreaks
\begin{align*}
R_3 &\ge \frac{1}{2}\log\frac{\sigma_{Z|Y}^2}{D_{Z_1}}, \\
R_3 + R_5 &\ge \frac{1}{2}\log\frac{\sigma_{Z|Y}^2}{\min\{D_{Z_1}, D_{Z_2}\}}, \\
R_4 + R_5 &\ge \frac{1}{2}\log\frac{\sigma_{Z|Y}^2}{D_{Z_2}}.
\end{align*}}
\end{theorem}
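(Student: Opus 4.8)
The plan is to split the problem into a "forward" part (the Cascade description from Node 0 to Node 2, governed by $R_1$ and $R_2$ together with the distortions $D_1$ and $D_2$ on $\Xh_1$ and $\Xh_2$) and a "backward" part (the Triangular-style link from Node 2 back to Node 0 and Node 1, governed by $R_3,R_4,R_5$ together with the distortions $D_{Z_1}$ and $D_{Z_2}$ on $\Zh_1$ and $\Zh_2$). The forward part is exactly the Quadratic Gaussian Cascade Source Coding problem already solved in Corollary~\ref{coro1}, so the characterization of $R_1$ in terms of the optimization over $(\alpha^*,\beta^*,\sigma_{Z^*}^2)$ carries over verbatim once I argue that the backward transmission can be made to not interfere with the forward one. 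Because $(X,Y,Z)$ are jointly Gaussian, the same observation used for Corollary~\ref{coro4} applies: side information at an encoder does not reduce the required rate in the Quadratic Gaussian case, so the backward problem decouples from the forward one and reduces to a pure lossy-description problem for the "innovation" $Z - \E(Z\mid Y) = Z|Y$, which has variance $\sigma_{Z|Y}^2$.

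For the backward part, the structure is: Node 2 observes $Z^n$ (and the forward messages, which give it a description of $X,Y$), sends at rate $R_4$ directly to Node 0 and at rate $R_3$ to Node 1, and Node 1 forwards at rate $R_5$ to Node 0. Node 1 must reconstruct $\Zh_1$ to within $D_{Z_1}$ using only what arrives on the $R_3$ link (plus its side information $Y^n$), and Node 0 must reconstruct $\Zh_2$ to within $D_{Z_2}$ using the $R_4$ link, the relayed $R_5$ link, and its side information $(X^n,Y^n)$. Since $X-Y-Z$, conditioning on $(X^n,Y^n)$ at Node 0 is equivalent to conditioning on $Y^n$ for the purpose of describing $Z$. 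The relevant quantity everywhere is therefore the rate needed to describe the Gaussian source $Z|Y\sim N(0,\sigma_{Z|Y}^2)$ to various fidelities. I would first establish:
\begin{itemize}
\item[(i)] Node 1 alone needs $R_3\ge \frac12\log\frac{\sigma_{Z|Y}^2}{D_{Z_1}}$ — this is the classic Gaussian rate-distortion bound, and the converse follows from a standard single-letterization of $H(M_3)\ge I(Z^n;M_3\mid Y^n)$ together with the scalar Gaussian rate-distortion lower bound.
\item[(ii)] Node 0's reconstruction uses the information content of both $M_3$ (via the relay) and $M_4$; a cut around Node 0 gives $R_4+R_5\ge \frac12\log\frac{\sigma_{Z|Y}^2}{D_{Z_2}}$, again by single-letterizing $H(M_4,M_5)\ge I(Z^n;M_4,M_5\mid X^n,Y^n)$ and invoking the scalar bound.
\item[(iii)] The joint constraint $R_3+R_5\ge \frac12\log\frac{\sigma_{Z|Y}^2}{\min\{D_{Z_1},D_{Z_2}\}}$: the information reaching Node 0 through the $M_3\to M_5$ path, together with what Node 1 extracts, cannot exceed $R_3+R_5$ bits, and between them Node 1 and Node 0 must describe $Z|Y$ to the better of the two fidelities. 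The converse here is the one requiring care: I would bound $H(M_3)+H(M_5)\ge I(Z^n;M_3\mid Y^n) + H(M_5\mid M_3,Y^n,X^n)$ and then argue $H(M_5\mid M_3,\ldots)$ must carry enough to refine Node 1's description down to $D_{Z_2}$ when $D_{Z_2}<D_{Z_1}$, or that $M_3$ alone already achieves the minimum otherwise; combining and applying the Gaussian rate-distortion bound once yields the stated inequality.
\end{itemize}
For achievability of the backward part I would use a successive-refinement / Wyner–Ziv layering of $Z$ against side information $Y$: Node 2 generates a coarse Gaussian description achieving the coarser of $D_{Z_1},D_{Z_2}$ plus a refinement layer, bins both against $Y^n$, sends the coarse layer (and as much refinement as needed) on $R_3$ to Node 1, sends a direct layer on $R_4$ to Node 0, and relays the missing refinement on $R_5$; because Gaussian sources are successively refinable under MSE, all three inequalities are simultaneously tight.

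The main obstacle I anticipate is the joint converse (iii) — making precise how the $R_5$ relay link, which sees only $M_3$ (a function of $(Z^n,$ forward messages$)$) and $Y^n$ at Node 1, can be credited against describing $Z$ at the $\min\{D_{Z_1},D_{Z_2}\}$ fidelity without double-counting bits already paid for on $R_3$. The clean way to handle this is to note that $M_5$ is a function of $(M_3,Y^n,\text{forward messages})$, so $H(M_3)+H(M_5)\ge H(M_3,M_5)\ge I(Z^n;M_3,M_5\mid X^n,Y^n)$, and then apply the scalar Gaussian rate-distortion converse to whichever of Node 0's or Node 1's reconstruction is the more demanding; the Markov chain $X-Y-Z$ ensures conditioning on $X^n$ is harmless. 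Everything else (the forward part, the decoupling argument, and achievability) is a direct invocation of Corollary~\ref{coro1}, the Gaussian "encoder side information is useless" fact already used before Corollary~\ref{coro4}, and successive refinability of the Gaussian source.
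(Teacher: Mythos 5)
Your decomposition into a forward Cascade part and a backward triangular part matches the paper, as does the appeal to Corollary~\ref{coro1} for the forward rates and to successive refinement with side information for the backward achievability. However, there is a concrete gap in your converse for the third backward inequality, and it stems from a misreading of the network topology. In Figure~\ref{fig4}, $R_3$ is the link from Node 2 to Node 1, $R_4$ is the \emph{relay} link from Node 1 to Node 0, and $R_5$ is the \emph{direct} link from Node 2 to Node 0 --- you have $R_4$ and $R_5$ swapped. With the correct assignment, the three inequalities are exactly the three cut-set bounds: the receiver cut at Node 1 gives $R_3 \ge \frac{1}{2}\log\frac{\sigma_{Z|Y}^2}{D_{Z_1}}$, the receiver cut at Node 0 gives $R_4+R_5 \ge \frac{1}{2}\log\frac{\sigma_{Z|Y}^2}{D_{Z_2}}$, and the broadcast cut around Node 2 gives $R_3+R_5 \ge \frac{1}{2}\log\frac{\sigma_{Z|Y}^2}{\min\{D_{Z_1},D_{Z_2}\}}$, because $M_4$ is a function of $(M_3, Y^n, M_1)$ and hence carries no information about $Z^n$ beyond what already crossed the $R_3$ and $R_5$ links, while the super-receiver $\{$Node 0, Node 1$\}$ must reconstruct $Z$ at the finer of the two fidelities. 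Under your assignment, the step where you ``apply the scalar Gaussian rate-distortion converse to whichever of Node 0's or Node 1's reconstruction is the more demanding'' fails when $D_{Z_2}<D_{Z_1}$: Node 0's reconstruction then depends on the direct message (your $M_4$), which is \emph{not} a function of $M_3$ and is not counted in $R_3+R_5$, so $I(Z^n;M_3,M_5\mid X^n,Y^n)$ does not dominate $I(Z^n;\hat Z_2^n\mid X^n,Y^n)$. Indeed, with your topology the inequality $R_3+R_5\ge \frac{1}{2}\log\frac{\sigma_{Z|Y}^2}{D_{Z_2}}$ would be false (send the fine description on the direct link and nothing on the relay link), so no proof along your lines can close this step.

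Two smaller points. First, your achievability claim that ``all three inequalities are simultaneously tight'' because the Gaussian source is successively refinable glosses over a genuine constraint: Node 1 can only forward layers it has itself decoded from $M_3$, and which receiver needs the finer description flips with the ordering of $D_{Z_1}$ and $D_{Z_2}$. The paper handles this with three explicit cases (distinguished by the sign of $D_{Z_1}-D_{Z_2}$ and of $R_3-R_4$), each with a different degradation order of the Gaussian auxiliaries $U_1,U_2,U_3$; this case analysis is where the corner points of the region are actually attained and cannot be skipped. Second, for the forward converse the paper does not invoke ``encoder side information is useless'' but rather observes that every backward message is a function of $(M_1,Y^n,Z^n)$, so giving $Z^n$ to Nodes 0 and 1 as a genie renders the backward links useless and the $D_{Z_1},D_{Z_2}$ constraints vacuous, reducing the forward problem exactly to the setting of Corollary~\ref{coro1}; your decoupling argument should be restated in this form to be airtight.
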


\begin{proof}

\textit{Converse}

For the forward direction $(R_1, R_2)$, we note that Node 2 can only send a function of $(M_1, Y^n, Z^n)$ to Nodes 0 and 1 using the $R_4$ and $R_5$ links. Since $M_1$ and $Y^n$ available at both Node 0 and 1, the forward rates are lower bounded by the setting where $Z^n$ is available to all nodes. Further, in this setting, the distortion constraints $D_{Z_1}$ and $D_{Z_2}$ are automatically satisfied since $Z$ is available at Nodes 0 and 1. Therefore, $(R_3, R_4, R_5)$ do not affect the achievable $(R_1, R_2)$ rates in this modified (lower bound) setting. $(R_1, R_2)$ are then obtained by the observation in Corollary~\ref{coro1} that the rate distortion region obtained for our Quadratic Gaussian Cascade setting in Corollary~\ref{coro1} is equivalent to the case where the side information $Z$ is available at all nodes.

For the reverse direction, the lower bounds are derived by letting the side information $(X,Y)$ to be available at Node 2, and for side information $X$ to be available at Node 1. The $D_1$ and $D_2$ distortion constraints are then automatically satisfied since $X$ is available at all nodes. We then observed that $(R_1, R_2)$ do not affect the achievable $(R_3, R_4, R_5)$ rates in this modified (lower bound) setting. The stated inequalities for $R_3, R_4, R_5$ are then obtained from standard cutset bound arguments and the fact that $X-Y-Z$ form a Markov Chain. 

\textit{Achievability}

We analyze only the backward rates $R_3, R_4$ and $R_5$ since the forward direction follows from Corollary~\ref{coro1}. For the backward rates, we now show that the rates are achievable without the assumption of $(X,Y)$ being available at Node 2. We will rely on results on successive refinement of Gaussian sources with common side information given in~\cite{Steinberg--Merhav}. A simplified figure of the setup for analyzing the backward rates is given in Figure~\ref{fig5}. We have three cases to consider.

\begin{figure} [!h] 
\begin{center}
\psfrag{x}[c]{\large $Y$}
\psfrag{y}[c]{\large $Y$}
\psfrag{z}[c]{\large $Z$}
\psfrag{r4}{$R_3$}
\psfrag{r5}{$R_4$}
\psfrag{r6}{$R_5$}
\psfrag{x1}[C]{$\Zh_1$}
\psfrag{z2}{$\Zh_2$}
\psfrag{n0}[l]{Node 0}
\psfrag{n1}[l]{Node 1}
\psfrag{n2}[l]{Node 2}
\scalebox{0.75}{\includegraphics{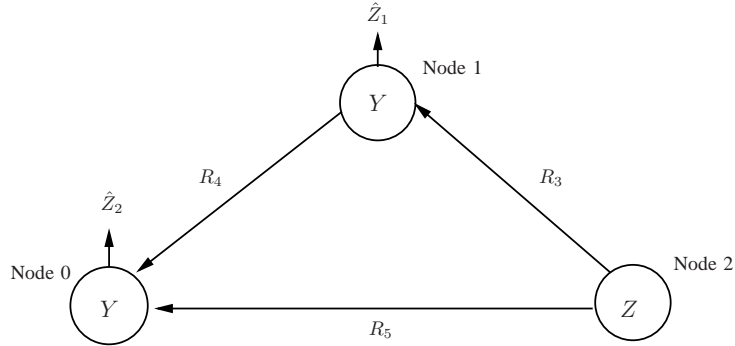}} 
\end{center}
\caption{Setup for analysis of achievability of backward rates} \label{fig5}
\end{figure} 

{\em Case 1: $D_{Z_1} \le D_{Z_2}$}

In this case, the inequalities in the lower bound reduce to 
\begin{align*}
R_3 &\ge \frac{1}{2}\log\frac{\sigma_{Z|Y}^2}{D_{Z_1}}, \\
R_4 + R_5 &\ge \frac{1}{2}\log\frac{\sigma_{Z|Y}^2}{D_{Z_2}}.
\end{align*}

From the successive refinement results in~\cite{Steinberg--Merhav}, we can show that the following rates are achievable
\begin{align*}
R_3 &= I(U_1, U_2, U_3;Z|Y), \\
R_4 &= I(U_2;Z|Y), \\
R_5 &= I(U_3;Z|Y, U_2)
\end{align*}
for some conditional distribution $F(U_1, U_2, U_3|Z)$, $\Zh_1(U_1, U_2, U_3, Y)$ and $\Zh_2(U_1, U_2,Y)$ satisfying the distortion constraints. Now, for fixed $R_4 \le \frac{1}{2}\log\frac{\sigma_{Z|Y}^2}{D_{Z_2}}$, choose $D' (\ge D_{Z_2})$ such that $R_4 = \frac{1}{2}\log\frac{\sigma_{Z|Y}^2}{D'}$. We now choose the auxiliary random variables and reconstruction functions in the following manner. Define $Q(x):= \frac{x \, \sigma_{Z|Y}^2}{\sigma_{Z|Y}^2 - x}$.
\begin{align*}
U_1 &= Z + W_1 \mbox{ where } W_1 \sim N(0, Q(D_{Z_1})),\\
U_3 &= U_1 + W_3 \mbox{ where } W_3 \sim N(0, Q(D_{Z_2})-Q(D_{Z_1})),\\
U_2 &= U_3 + W_2 \mbox{ where } W_2 \sim N(0, Q(D')-Q(D_{Z_2})),\\
\Zh_1 &= \E(Z|U_1, Y), \\
\Zh_2 &= \E(Z|U_3, Y).
\end{align*}

From this choice of auxiliary random variables, it is easy to verify the following{\allowdisplaybreaks
\begin{align*}
R_3 & = I(U_1, U_2, U_3;Z|Y) \\
& = I(U_1;Z|Y) \\
&= \frac{1}{2}\log\frac{\sigma_{Z|Y}^2}{D_{Z_1}}, \\
R_4 &= I(U_2;Z|Y) \\
& = \frac{1}{2}\log\frac{\sigma_{Z|Y}^2}{D'}, \\
R_4 + R_5 & = I(U_2;Z|Y) + I(U_3;Z|Y, U_2) \\
& = I(U_3,U_2;Z|Y) \\
&= \frac{1}{2}\log\frac{\sigma_{Z|Y}^2}{D_{Z_2}}, \\
\E(Z - \Zh_1)^2 &= D_{Z_1}, \\
\E(Z - \Zh_2)^2 &= D_{Z_2}.
\end{align*} }

{\em Case 2: $D_{Z_1} > D_{Z_2}$, $R_3 \ge R_4$}

In this case, the active inequalities are
\begin{align*}
R_3 &\ge \frac{1}{2}\log\frac{\sigma_{Z|Y}^2}{D_{Z_1}}, \\
R_4 + R_5 &\ge \frac{1}{2}\log\frac{\sigma_{Z|Y}^2}{D_{Z_2}}.
\end{align*}

From~\cite{Steinberg--Merhav}, the following rates are achievable
\begin{align*}
R_3 &= I(U_1, U_2;Z|Y), \\
R_4 &= I(U_2;Z|Y), \\
R_5 &= I(U_3, U_1;Z|Y, U_2).
\end{align*}
First, assume $R_3 \le \frac{1}{2}\log\frac{\sigma_{Z|Y}^2}{D_{Z_2}}$. Choose $ D_{Z_2}\le D' \le D'' \le D_{Z_1}$. We choose the auxiliary random variables and reconstruction functions as follows.
\begin{align*}
U_3 &= Z + W_3 \mbox{ where } W_3 \sim N(0, Q(D_{Z_2})),\\
U_1 &= U_3 + W_1 \mbox{ where } W_1 \sim N(0, Q(D')-Q(D_{Z_2})),\\
U_2 &= U_1 + W_2 \mbox{ where } W_2 \sim N(0, Q(D'')-Q(D')),\\
\Zh_1 &= \E(Z|U_1, Y), \\
\Zh_2 &= \E(Z|U_3, Y).
\end{align*}
From this choice of auxiliary random variables, it is easy to verify the following{\allowdisplaybreaks
\begin{align*}
R_3 & = I(U_1, U_2;Z|Y) \\
& = I(U_1;Z|Y) \\
&= \frac{1}{2}\log\frac{\sigma_{Z|Y}^2}{D'}, \\
R_4 &= I(U_2;Z|Y) \\
& = \frac{1}{2}\log\frac{\sigma_{Z|Y}^2}{D''}, \\
R_4 + R_5 & = I(U_2;Z|Y) + I(U_3, U_1;Z|Y, U_2) \\
& = I(U_3,U_1, U_2;Z|Y) \\
& = I(U_3;Z|Y) \\
&= \frac{1}{2}\log\frac{\sigma_{Z|Y}^2}{D_{Z_2}}, \\
\E(Z - \Zh_1)^2 &= D' \le D_{Z_1}, \\
\E(Z - \Zh_2)^2 &= D_{Z_2}.
\end{align*} }

Next, consider $R_3 > \frac{1}{2}\log\frac{\sigma_{Z|Y}^2}{D_{Z_2}}$ and $R_4 > \frac{1}{2}\log\frac{\sigma_{Z|Y}^2}{D_{Z_2}}$. Then, it is easy to see from our achievability scheme that we can obtain $R_4' < R_4$, $R_3'< R_3$ and $R_5 = 0$ by setting $D' = D'' = D_{Z_2}$. Finally, consider the case where $R_3 > \frac{1}{2}\log\frac{\sigma_{Z|Y}^2}{D_{Z_2}}$ and $R_4 \le \frac{1}{2}\log\frac{\sigma_{Z|Y}^2}{D_{Z_2}}$. Then, we observe from our achievability scheme that we can achieve $R_3' = \frac{1}{2}\log\frac{\sigma_{Z|Y}^2}{D_{Z_2}}< R_3$ for any $R_4$ and $R_5$ satisfying the inequalities by setting $D' = D_{Z_2}$.

{\em Case 3: $D_{Z_1} > D_{Z_2}$, $R_3 < R_4$}

In this case, the active inequalities are
\begin{align*}
R_3 &\ge \frac{1}{2}\log\frac{\sigma_{Z|Y}^2}{D_{Z_1}}, \\
R_3 + R_5 &\ge \frac{1}{2}\log\frac{\sigma_{Z|Y}^2}{D_{Z_2}}.
\end{align*}
We first consider the case where $R_3 \le \frac{1}{2}\log\frac{\sigma_{Z|Y}^2}{D_{Z_2}}$. We exhibit a scheme for which $R_4' = R_3 \, (< R_4)$ and still satisfies the constraints. This procedure is done by letting $U_2$ in case 2 to be equal to $U_1$. For $D_{Z_2} \le D' \le D_{Z_1}$, define the auxiliary random variables and reconstruction functions as follows. {\allowdisplaybreaks
\begin{align*}
U_3 &= Z + W_3 \mbox{ where } W_3 \sim N(0, Q(D_{Z_2})),\\
U_1 &= U_3 + W_1 \mbox{ where } W_1 \sim N(0, Q(D')-Q(D_{Z_2})),\\
\Zh_1 &= \E(Z|U_1, Y), \\
\Zh_2 &= \E(Z|U_3, Y).
\end{align*} }
Then, we have the following.{\allowdisplaybreaks
\begin{align*}
R_3 & = I(U_1;Z|Y) \\
&= \frac{1}{2}\log\frac{\sigma_{Z|Y}^2}{D'}, \\
R_4' &= I(U_1;Z|Y) \\
& = \frac{1}{2}\log\frac{\sigma_{Z|Y}^2}{D'}, \\
R_3 + R_5 & = I(U_1;Z|Y) + I(U_3;Z|Y, U_1) \\
& = I(U_3,U_1;Z|Y) \\
& = I(U_3;Z|Y) \\
&= \frac{1}{2}\log\frac{\sigma_{Z|Y}^2}{D_{Z_2}}, \\
\E(Z - \Zh_1)^2 &= D' \le D_{Z_1}, \\
\E(Z - \Zh_2)^2 &= D_{Z_2}.
\end{align*} }

Finally, we note that in the case where $R_3 > \frac{1}{2}\log\frac{\sigma_{Z|Y}^2}{D_{Z_2}}$, we can always achieve $R_3' = \frac{1}{2}\log\frac{\sigma_{Z|Y}^2}{D_{Z_2}}$, $R_4' = \frac{1}{2}\log\frac{\sigma_{Z|Y}^2}{D_{Z_2}}$ and $R_5' = 0$ by letting $D' = D_{Z_2}$. 
\end{proof}
{\em Remark 1: } The Two-way Cascade source coding setup given in section~\ref{sect:2} can be obtained as a special case by setting $R_3 = R_4 = 0$ and $D_{Z_1} \to \infty$.

{\em Remark 2: } The rate distortion region is the same regardless of whether Node 2 sends first, or Node 0 sends first. This observation follows from (i) our result in Corollary~\ref{coro1} where we showed that the rate distortion region for the Cascade setup is equivalent to the setup where all nodes have the degraded side information $Z$; and (ii) our proof above where we showed that the backward rates are the same as in the case where the side information $(X,Y)$ is available at all nodes.

{\em Remark 3: } For arbitrary sources and distortions, the problem is open in general. Even in the Gaussian case, the problem is open without the Markov Chain $X-Y-Z$. One may also consider the setting where there is a triangular source coding setup in the forward path from Node 0 to Node 2. This setting is still open, since the trade off in sending from Node 0 to Node 2 and then to Node 1 versus sending directly to Node 1 from Node 0 is not clear. 
\section{Triangular Source Coding with a helper} \label{sect:5}

We present an extension to our Triangular source coding setup by also allowing the side information $Y$ to be observed at the second node through a rate limited link (or helper). The setup is shown in Figure~\ref{fig3}. As the formal definitions are natural extensions of those given in section~\ref{sect:2}, we will omit them here.

\begin{figure} [!h] 
\begin{center}
\psfrag{x}{$X$}
\psfrag{y}{$Y$}
\psfrag{z}{$Z$}
\psfrag{r1}{$R_1$}
\psfrag{r2}{$R_2$}
\psfrag{r3}{$R_3$}
\psfrag{rh}{$R_h$}
\psfrag{x1}{$\Xh_1$}
\psfrag{x2}{$\Xh_2$}
\psfrag{n0}[l]{Node 0}
\psfrag{n1}[l]{Node 1}
\psfrag{n2}[l]{Node 2}
\psfrag{h}[c]{Helper}
\scalebox{0.75}{\includegraphics{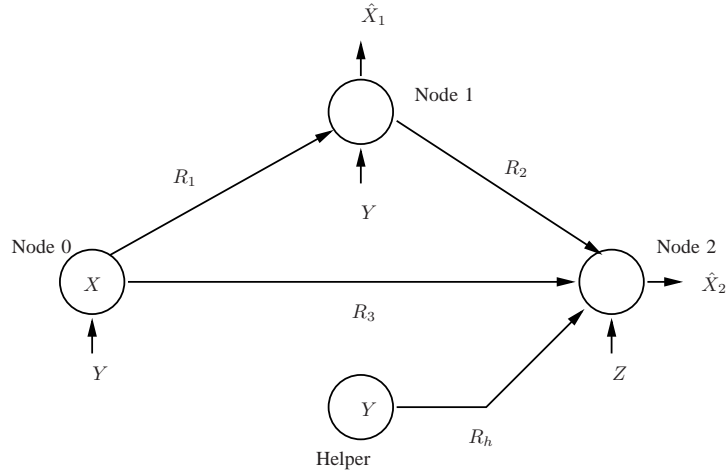}} 
\end{center}
\caption{Triangular Source Coding with a helper} \label{fig3}
\end{figure} 

\begin{theorem}
The rate distortion region for Triangular source coding with a helper is given by the set of rate tuples
\begin{align*}
R_1 &\ge I(X;\Xh_1, U_1|Y, U_h), \\
R_2 &\ge I(U_1;X,Y|Z, U_h), \\
R_3 &\ge I(X,Y;U_2|U_1,U_h, Z), \\
R_h &\ge I(U_h;Y|Z).
\end{align*}
for some $p(x,y,z,u_1,u_2,u_h, \xh_1) = p(x)p(y|x)p(z|y)p(u_h|y)p(u|x,y, u_h)p(\xh_1|x,y,u_1, u_h)p(u_2|x,y,u_1, u_h)$ and function $g_2 : \Uc_1 \times \Uc_2 \times \Uc_h\times \Zc\rightarrow \Xh_2$ such that
\begin{align*}
\E d_j(X_j, \Xh_j) \le D_j, \quad \mbox{j=1,2}.
\end{align*}
\end{theorem}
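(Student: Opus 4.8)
The plan is to treat the helper problem as a hybrid of the Triangular setting (Theorem~\ref{thm2}) and the Wyner--Ziv-style helper argument, where the helper codeword $U_h^n$ becomes common side information available at Nodes~0, 1, and 2. The key structural observation is that, since $X-Y-Z$ and the helper observes $Y$, we may generate $U_h^n$ i.i.d.\ $\sim \prod p(u_h)$ with a Wyner--Ziv binning against $Z$ at Node~2; because $U_h-Y-Z$, Node~2 can decode $U_h^n$ at rate $R_h \ge I(U_h;Y|Z) = I(U_h;Y) - I(U_h;Z)$. The subtlety is that the helper transmits directly to Node~2, but Nodes~0 and~1 do \emph{not} see $U_h^n$; however, in the converse $U_{hi}$ will be defined from the helper message $M_h$ together with a suitable ``past/future'' block so that the Markov chain $U_h-Y-Z$ emerges, and in achievability Node~2 decodes $U_h^n$ first and then runs the Triangular decoding from Theorem~\ref{thm2} \emph{conditioned} on $U_h^n$.

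For achievability, I would first have the helper send $M_h$ so that Node~2 recovers $U_h^n$ (standard Wyner--Ziv, succeeds w.h.p.\ since $R_h > I(U_h;Y) - I(U_h;Z)$). Conditioned on the realized $U_h^n$, Nodes~0, 1 effectively run the Theorem~\ref{thm1}/\ref{thm2} scheme: Node~0 generates $U_1^n \sim \prod p(u_1|u_{h,i})$, double-bins it against $Y^n$ (for Node~1, rate $I(U_1;X|Y,U_h)$) and against $(Z^n,U_h^n)$ (for Node~2, rate $I(U_1;X,Y|Z,U_h)$); generates $\Xh_1^n$ conditioned on $(U_1^n,Y^n,U_h^n)$; and generates $V$-analog $U_2^n \sim \prod p(u_{2,i}|u_{1,i},u_{h,i})$ binned against $(U_1^n,Z^n,U_h^n)$ at rate $I(X,Y;U_2|U_1,U_h,Z)$ sent on the $R_3$ link. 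Node~1 decodes $U_1^n$ from its bin (possible since the bin has $\approx 2^{nI(U_1;Y|U_h)}$ entries but Node~1 lacks $U_h^n$ --- here I must be careful: Node~1 does \emph{not} have $U_h^n$, so I would instead route a description of $U_h^n$ to Node~1 as part of the $R_1$ budget, or equivalently absorb it; the cleanest fix is to note $R_1 \ge I(X;\Xh_1,U_1|Y,U_h)$ already conditions on $U_h$, meaning Node~0 must also convey enough about $U_h^n$ to Node~1 --- this is the one place the scheme needs a genuine extra step). Node~2, already holding $U_h^n$, decodes $U_1^n$ then $U_2^n$ and applies $g_2(U_1^n,U_2^n,U_h^n,Z^n)$; the Markov Lemma and the chain $(U_1,U_2)-(X,Y,U_h)-Z$ give joint typicality of everything.

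For the converse, given a code, set $U_{hi} = (M_h, Z^{i-1}, Z_{i+1}^n)$ or $(M_h, Y^{i-1}, Z_{i+1}^n)$ --- the exact auxiliary must be chosen so that (i) $R_h \ge \frac{1}{n}\sum I(U_{hi};Y_i|Z_i)$ with $U_{hi}-Y_i-Z_i$, and (ii) $U_{1i} = (M_2, X^{i-1}, Y^{i-1}, Z_{i+1}^n)$, $U_{2i} = (M_3, U_{1i})$ give the three rate bounds exactly as in Theorem~\ref{thm2}, but now with every mutual information conditioned on $U_{hi}$. The $R_h$ bound: $nR_h \ge H(M_h) \ge H(M_h|Z^n) = I(Y^n;M_h|Z^n) = \sum_i I(Y_i; M_h, Y^{i-1}|Z^n)$, and using the i.i.d.\ structure and $X-Y-Z$ one shows this equals $\sum_i I(Y_i; U_{hi}|Z_i)$ after identifying the right $U_{hi}$; the Markov chain $U_{hi}-Y_i-Z_i$ follows because $M_h$ is a function of $Y^n$ alone and $(Y^{i-1},Z_{i+1}^n)$ are conditionally independent of $Z_i$ given $Y_i$. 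The $R_1, R_2, R_3$ bounds repeat the chains in the proofs of Theorems~\ref{thm1} and~\ref{thm2} verbatim with $M_h$ adjoined to every conditioning set; one checks that adjoining $M_h$ never breaks the single-letterization steps because $M_h$ is a function of $Y^n$ and the relevant steps used only $X-Y-Z$ and the functional dependence of $\Xh_{1i}$. Finally, time-share with $Q$, set $U_h = (U_{hQ},Q)$, $U_1 = (U_{1Q},Q)$, etc., and restrict the joint distribution to the claimed product form using the same marginalization argument as in Theorem~\ref{thm2} (the bounds depend on the joint law only through $p(x,y,z,u_h,u_1,u_2)$ and $p(x,y,z,u_h,u_1,\xh_1)$, and the chain $(U_1,U_2,\Xh_1)-(X,Y,U_h)-Z$ together with $U_h-Y-Z$ persists under the replacement).

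The main obstacle is the mismatch in achievability between ``$U_h^n$ is decoded only at Node~2'' and ``the rate expression for $R_1$ conditions on $U_h$''. The resolution --- which I expect to be the crux of the argument --- is that Node~0, which knows $(X^n,Y^n)$ and hence can reconstruct the helper's $U_h^n$ (it shares the codebook and $Y^n$ is available at Node~0), forwards a Wyner--Ziv description of $U_h^n$ to Node~1 using Node~1's side information $Y^n$; since $U_h-Y-\emptyset$ at Node~1 this costs $I(U_h;X|Y) \le I(U_h;X,Y) $ bits, but in fact, because $U_h$ is a function of $Y$, $U_h^n$ is \emph{determined} by $Y^n$ at Node~1 at zero extra rate --- wait, $U_h^n$ is the helper's \emph{chosen codeword}, not a deterministic function of $Y^n$, so a small Wyner--Ziv overhead $I(U_h;X|Y)=0$ (since $U_h-Y-X$ fails; actually $X-Y-U_h$ does \emph{not} hold) --- this tension is exactly what must be reconciled, and the honest fix is that Node~1 shares the helper's binning and decodes $U_h^n$ from $M_h$ as well (the helper broadcasts $M_h$ to both Node~1 and Node~2, or equivalently the figure's helper link feeds both), so that $R_h \ge I(U_h;Y|Z)$ suffices for both decoders since Node~1's side information $Y^n$ is at least as good as $Z^n$. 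Once this routing is pinned down, everything else is the by-now-standard layered binning plus Markov-Lemma bookkeeping, and the cardinality bounds follow from the usual convex-cover (Carathéodory) argument applied successively to $U_h$, then $U_1$, then $U_2$.
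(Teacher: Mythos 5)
Your converse is essentially the paper's argument: define $U_{hi}$ from $M_h$ together with past/future symbols (the paper takes $U_{hi}=(Y^{i-1},Z^{i-1},Z_{i+1}^n,M_h)$, $U_{1i}=(X^{i-1},M_2)$, $U_{2i}=(U_{hi},U_{1i},M_3)$), adjoins $M_h$ to every conditioning set in the Theorem~\ref{thm2} chains, single-letterizes with $Q$, and uses the same marginalization trick to restrict to the product form. Your outline of that part, including the verification that $U_{hi}-Y_i-Z_i$ holds because $M_h$ is a function of $Y^n$ alone, is sound.

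The gap is in achievability, at exactly the point you call the crux, and your resolution is the wrong one: the helper's link feeds only Node~2, so you cannot ``broadcast $M_h$ to both Node~1 and Node~2'' --- that changes the problem being solved. The correct resolution, which you state and then talk yourself out of, is that the helper's encoder is a map from $Y^n$ into a codebook shared by all nodes, so Nodes~0 and~1, which both observe $Y^n$, simply re-run the helper's encoding locally and obtain $U_h^n$ at zero rate; this is precisely what the paper does (``the helper node (and Nodes 0 and 1) looks for a $u_h^n$ sequence such that $(u_h^n,y^n)\in\aep$''). Your objection that ``$U_h^n$ is the helper's chosen codeword, not a deterministic function of $Y^n$'' concerns only the random tie-breaking among jointly typical candidates, which is derandomized by absorbing the randomization into the shared random codebook (the paper's Appendix~\ref{appen:1} footnote makes this explicit). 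Once Nodes~0 and~1 hold $U_h^n$, Node~1 disambiguates $U_1^n$ inside a bin of size $2^{nI(U_1;Y,U_h)}$ using $(Y^n,U_h^n)$, and $\Xh_1^n$ is generated conditioned on $(U_1^n,U_h^n,Y^n)$, so $R_1\ge I(X;\Xh_1,U_1|Y,U_h)$ suffices with no side channel describing $U_h^n$. (Also, your aside that ``$X-Y-U_h$ does not hold'' is incorrect for the single-letter law in the theorem, which factors as $p(u_h|y)$, so $(X,Z)-Y-U_h$ does hold; at the block level $U_h^n$ is outright a function of $Y^n$.) With this one repair the rest of your layered binning and Markov-lemma bookkeeping matches the paper's scheme.
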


We give a proof of the converse. As the achievability techniques used form a straightforward extension of the techniques described in Appendix~\ref{appen:1}, we give only a sketch of achievability.

\begin{proof}[Converse]
Given a $(n, 2^{nR_1}, 2^{nR_2}, 2^{nR_3}, 2^{nR_h}, D_1, D_2)$ code, define $U_{hi} = (Y^{i-1}, Z^{i-1}, Z_{i+1}^n,M_h)$, $U_{1i} = (X^{i-1}, M_2)$ and $U_{2i} = (U_{hi}, U_{1i}, M_3)$. Observe that we have the required Markov conditions $(X_i,Z_i) - Y_i - U_{hi}$ and $Z_i - (X_i, Y_i, U_{hi}) - (U_{1i}, U_{2i})$.
For the helper condition, we have 
\begin{align*}
nR_h &\ge I(M_h; Y^n|Z^n) \\
& = \sum_{i=1}^n H(Y_i|Z_i) - H(Y_i|Y^{i-1}, M_h, Z^n) \\
& = \sum_{i=1}^n I(U_{hi};Y_i|Z_i).
\end{align*}
For the other rates, we have {\allowdisplaybreaks
\begin{align*}
nR_1 &\ge H(M_1) \\
&\ge H(M_1|Y^n, Z^n) \\
& = H(M_1,M_2|Y^n, Z^n) = I(X^n; M_1,M_2|Y^n, Z^n) \\
& =\sum_{i=1}^n I(X_i; M_1, M_2|X^{i-1}, Y^n, Z^n) \\
& =\sum_{i=1}^n H(X_i|X^{i-1}, Y^n, Z^n) - H(X_i, Y_i|X^{i-1}, Y^n, Z^n, M_1, M_2) \\
& =\sum_{i=1}^n H(X_i|Y_i, Z_i) - H(X_i, Y_i|X^{i-1}, Y^n, Z^n, M_1, M_2) \\
& \stackrel{(a)}{=}\sum_{i=1}^n H(X_i|Y_i) - H(X_i, Y_i|X^{i-1}, Y^n, \Xh_{1i}, Z^n, M_1, M_2, M_h) \\
& \ge \sum_{i=1}^n H(X_i|Y_i, U_{hi}) - H(X_i|\Xh_{1i}, Y_i, U_{1i}, U_{hi}) \\
& = \sum_{i=1}^n I(X_i; \Xh_{1i}, U_{1i}|Y_i, U_{hi}).
\end{align*} }
$(a)$ follows from the Markov chain condition. Next,{\allowdisplaybreaks
\begin{align*}
nR_2 &\ge H(M_2|M_h) \\
&\ge H(M_2|Z^n, M_h) \\
& = I(X^n, Y^n; M_2|Z^n, M_h) \\
& = \sum_{i=1}^n I(X_i, Y_i; M_2|Z^n, X^{i-1}, Y^{i-1}, M_h) \\
& = \sum_{i=1}^n H(X_i, Y_i|Z^n, X^{i-1}, Y^{i-1},M_h) - H(X_i, Y_i|Z^n, X^{i-1}, Y^{i-1}, M_2,M_h)\\
& = \sum_{i=1}^n H(X_i, Y_i|Z_i,U_{hi}) - H(X_i, Y_i|Z_i, U_{1i},U_{hi})\\
& = \sum_{i=1}^n I(X_i, Y_i; U_{1i}|Z_i,U_{hi}).
\end{align*}}
Next,
\begin{align*}
nR_3 &\ge H(M_3) \\
&\ge H(M_3|M_2,M_h, Z^n)\\
& = I(X^n, Y^n; M_3|M_2,M_h Z^n) \\
& = \sum_{i=1}^n H(X_i, Y_i|M_2, M_h, Z^n, X^{i-1}, Y^{i-1}) - H(X_i, Y_i|M_2, M_3, M_h, Z^n, X^{i-1}, Y^{i-1}) \\
& = \sum_{i=1}^n I(X_i, Y_i; U_{2i}|U_{1i}, U_{hi}, Z_i).
\end{align*}
Finaly, it remains to show that the joint probability distribution induced by our choice of auxiliary random variables, $p(x)p(y|x)p(z|y)p(u_h|y)p(u|x,y, u_h)p(\xh_1, u_2|x,y,u_1, u_h)$, can be decomposed into the required form. This step follows closely the similar step in the proof of Theorem~\ref{thm2}, which we therefore omit.

\noindent {\em Sketch of Achievability}

The achievability follows that of Triangular source coding, with an additional step of generating a lossy description of $Y^n$. The codebook generation consists of the following steps
\begin{itemize}
\item Generate $2^{n(I(Y;U_h)+\e)}$ $U_h^n$ sequences according to $\prod_{i=1}^n p(u_{hi})$. Partition the set of $U_h^n$ sequences into $2^{n(I(U_h;Y|Z)+2\e)}$ bins, $\Bc_h(m_h)$, $m_h \in [1: 2^{n(I(U_h;Y|Z)+2\e)}]$. 

\item Generate $2^{n(I(X,Y, U_h;U_1)+\e)}$ $U_1^n$ sequences according to $\prod_{i=1}^n p(u_{1i})$. Partition the set of $U_1^n$ sequences into $2^{n(I(U_1;X|Y, U_h)+2\e)}$ bins, $\Bc_1(m_{10})$. Separately and independently, partition the set of $U^n$ sequences into $2^{n(I(U_1;X,Y|Z, U_h)+2\e)}$ bins, $\Bc_2(m_2)$, $m_2 \in [1:2^{n(I(U;X,Y|Z)+2\e)}]$. 

\item For each $(u_1^n, u_h^n, y^n)$ sequence, generate $2^{n(I(\Xh^n_1; X|U_1,Y, U_h)+\e)}$ $\Xh_1^n$ sequences according to $\prod_{i=1}^n p(\xh_i|u_{1i},u_{hi}, y_i)$.

\item Generate $2^{n(I(U_2;X,Y|U_h, U_1)+ \e)}$ $U_2^n$ sequences according to $\prod_{i=1}^n p(u_{2i}|u_{1i}, u_{hi})$ for each $(u_1^n, u_h^n)$ sequence, and partition these sequences to $2^{n(I(U_2;X,Y|U_1, U_h, Z)+ 2\e)}$ bins, $\Bc_3(m_3)$. 
\end{itemize}

Encoding consists of the following steps
\begin{itemize}
\item  Helper node: The helper node (and Nodes 0 and 1) looks for a $u_h^n$ sequence such that $(u_h^n, y^n) \in \aep$. This step succeeds with high probability since there are $2^{n(I(Y;U_h)+\e)}$ $U_h^n$ sequences. The helper then sends out the bin index $m_h$ such that $u_h^n \in \Bc(m_h)$. The sequences $(u_h^n, x^n,y^n,z^n)$ are jointly typical with high probability due to the Markov Chain $(X,Z) - Y-U_h$. 

\item Node 0: Given $(x^n,y^n, u^n_h) \in \aep$, Node 0 looks for a jointly typical codeword $u_1^n$. This operation succeeds with high probability since there are $2^{n(I(X,Y, U_h;U_1)+\e)}$ $U_1^n$ sequences. Node 0 then looks for a $\xh_1^n$ that is jointly typical with $(u^n_1, x^n, y^n, u_h^n)$. This operation succeeds with high probability since there are $2^{n(I(\Xh_1; X|U_1,U_h,Y)+\e)}$ $\xh_1^n$ sequences. 

\item Node 0 also finds a $u_2^n$ sequence that is jointly typical with $(u_1^n, u_h^n,x^n, y^n)$. This operation succeeds with high probability since we have $2^{n(I(U_2;X,Y|U_1, U_h)+ \e)}$ $v^n$ sequences. 

\item Node 0 then sends out the bin index $m_{10}$ such that $u_1^n \in \Bc_1(m_{10})$ and the index corresponding to $\xh_1^n$ to Node 1.  This requires a total rate of $R_1 = I(U;X|Y)+ I(\Xh^n_1; X|U,Y)+3\e$ to Node 1. Node 0 also sends out the bin index $m_3$ such that $u_2^n \in \Bc(m_3)$ to Node 2. This requires a rate of $I(U_2;X,Y|U_1, U_h, Z)+ 2\e$.

\item Node 1 decodes the codeword $u_1^n$ and forwards the index $m_2$ such that $u_1^n \in \Bc(m_2)$ to Node 2. This requires a rate of $I(U_1;X,Y|Z, U_h)+2\e$.
\end{itemize}

Decoding consists of the following steps

\begin{itemize}
\item Node 1: Node 1 reconstructs $u_1^n$ by looking for the unique $U_1^n$ sequence in $\Bc_1(m_{10})$ such that $(U_1^n, U_h^n, Y^n) \in \aep$. Since there are only $2^{n(I(X,Y, U_h;U_1) - I(U_1;X|Y, U_h)-\e)} = 2^{n(I(U_1;U_h,Y)-\e)}$ sequences in the bin, this operation succeeds with high probability. Node 1 reconstructs $X^n$ as $\Xh_1^n(m_{10}, m_{11})$. Since the sequence $(\Xh_1^n, X^n)$ are jointly typical with high probability, the expected distortion constraint is satisfied.

\item Node 2: We note that since $(U_1, U_2, U_h, X)-Y-Z$, the sequences $(U_h^n, U_1^n, U_2^n, X^n, Y^n, Z^n)$ are jointly typical with high probability. Decoding at node 2 consists of the following steps
\begin{enumerate}
\item Node 2 first looks for $u_h^n$ in $\Bc_h(m_h)$ such that $(u_h^n,z^n) \in \aep$. This operation succeeds with high probability since there are only $2^{n(I(U_h;Z)-\e}$ $u_h^n$ sequences in the bin. 
\item It then looks for $u_1^n$ in $\Bc_2(m_2)$ such that $(u_h^n,u_1^n, z^n) \in \aep$. Since $I(U_1; X,Y, U_h) - I(U_1;X,Y|Z,U_h) = I(U_1; Z, U_h)$ by the Markov Chain $Z - (X,Y,U_h) - U_1$, this operation succeeds with high probability as there are only $2^{n(I(U_1;Z, U_h)-\e}$ $u_1^n$ sequences in the bin. 
\item Finally, it looks for $u_2^n$ in $\Bc_3(m_3)$ such that $(u_h^n,u_1^n, u_2^n, z^n) \in \aep$. Since $I(U_2; X,Y|U_h, U_1) - I(U_2;X,Y|Z,U_h, U_1) = I(U_2; Z |U_1, U_h)$ by the Markov Chain $Z - (X,Y,U_h, U_1) - U_2$, this operation succeeds with high probability as there are only $2^{n(I(U_2;Z| U_1,U_h)-\e}$ $u_2^n$ sequences in the bin. 
\item Node 2 then reconstructs using the function $\xh_{2i} = g_2(u_{1i}, u_{2i}, u_{hi}, z_i)$ for $i \in [1:n]$. Since the sequences $(X^n, Z^n, U_1^n, U_2^n, U_h^n)$ are jointly typical with high probability, the expected distortion constraint is satisfied.
\end{enumerate}
\end{itemize}
\end{proof}

\section{Conclusion} \label{sect:6}
Rate distortion regions for the cascade, triangular, two-way cascade and two-way triangular source coding settings were established. Decoding part of the description intended for Node 2 and then re-binning it was shown to be optimum for our Cascade and Triangular settings. We also extended our Triangular setting to the case where there is an additional rate constrained helper, which observes $Y$, for Node 2. In the Quadratic Gaussian case, we showed that the auxiliary random variables can be taken to be jointly Gaussian and that the rate-distortion regions obtained for the Cascade and Triangular setup were equivalent to the setting where the degraded side information is available at all nodes. This observation allows us to transform our Cascade and Triangular settings into equivalent settings for which explicit characterizations are known. Characterizations of the rate distortion regions for the Quadratic Gaussian cases were also established in the form of tractable low dimensional optimization programs. Our Two Way Cascade Quadratic Gaussian setting was extended to solve a more general two way cascade scenario. The case of generally distributed $X,Y,Z$, without the degradedness assumption, remains open.

\bibliographystyle{IEEEtran}
\bibliography{Cascade}
\appendices
\section{Achievability proofs} \label{appen:1}
\noindent \textit{Achievability proof of Theorem~\ref{thm1}}
\subsection{Codebook Generation}
\begin{itemize}
\item Fix the joint distribution $p(x,y,z,u,\xh_1) = p(x)p(y|x)p(z|y)p(u|x,y)p(\xh_1|x,y,u)$. Let $R = R_{10} + R_{11}$, $R_{l} \ge R_{10}$ and $R_{2} \ge R_{10}$.
\item Generate $2^{nR_{10}}$ $U^n(l)$ sequences, $l \in [1:2^{nR_{1}}]$, each according to $\prod_{i=1}^n p(u_i)$.
\item Partition the set of $U^n$ sequences into $2^{nR_{10}}$ bins, $\Bc_1(m_{10})$, $m_{10} \in [1:2^{nR_{10}}]$. Separately and independently, partition the set of $U^n$ sequences into $2^{nR_2}$ bins, $\Bc_2(m_2)$, $m_2 \in [1:2^{nR_2}]$.
\item For each $u^n(l)$ and $y^n$ sequences, generate $2^{nR_{11}}$ $\Xh^n_1(l, m_{11})$ sequences according to $\prod_{i=1}^n p(\xh_{1i}|u_i, y_i)$. 
\end{itemize}
\subsection{Encoding at the encoder}
Given a $(x^n, y^n)$ pair, the encoder first looks for an index $l \in [1:2^{nR_{l}}]$ such that $(u^n(l), x^n, y^n) \in \aep$, where $\aep$ stands for the set of jointly typical sequences. If there are more than one such $l$, it selects one uniformly at random from the set of admissible indices. If there is none, it sends an index uniformly at random from $[1:2^{nR_{l}}]$\footnote{For simplicity, we assume randomized encoding, but it is easy to see that the randomized encoding employed our proofs can be incorporated as part of the (random) codebook generation stage.}. Next, it finds the index $m_{11}$ such that $(\xh_1(l, m_{11}), u^n(m_{10}), x^n, y^n) \in \aep$. As before, if there is more than one, it selects one uniformly at random from the set of admissible indices. If there is none, it sends an index uniformly at random from $[1:2^{nR_{11}}]$. Finally, it sends out $(m_{10},m_{11})$, where $m_{10}$ is the bin index such that $u^n(l) \in \Bc_1(m_{10})$. The total rate required is $R$.
\subsection{Decoding and reconstruction at Node 1}
Given $(m_{10},m_{11})$, Node 1 looks for the unique $\lh$ such that $(u^n(\lh), y^n) \in \aep$ and $u^n(\lh) \in \Bc_1(l)$. It reconstructs $x^n$ as $\xh^n(\lh, m_{11})$. If it failed to find a unique one, or if there is more than one, it outputs $\lh = 1$ and performs the reconstruction as before.
\subsection{Encoding at Node 1}
Node 1 sends an index $\mh_2$ such that $u^n(\lh) \in \Bc_2(\mh_2)$. This requires a rate of $R_2$.
\subsection{Decoding and reconstruction at Node 2}
Node 2 looks for the index $\lt$ such that $(u^n(\lt), y^n) \in \aep$ and $\lt \in \Bc_2(\mh_2)$. It then reconstructs $x^n$ according to $\xh_{2i} = g_2(u^n(\lt)_i, z_i)$ for $i\in [1:n]$. If there is no such index, it reconstructs using $\lt = 1$.
\subsection{Analysis of expected distortion}
Using the typical average lemma in~\cite[Lecture 2]{El-Gamal--Kim2010} and following the analysis in~\cite[Lecture 3]{El-Gamal--Kim2010}, it suffices to analyze the probability of ``error''; i.e. the probability that the chosen sequences will not be jointly typical with the source sequences. Let $L$ and $M_{11}$ be the chosen indices at the encoder. Note that these define the bin indices $M_{10}$ and $M_2$. Let $\Mh_2$ be the chosen index at Node 1. Define the following error events:
\begin{enumerate}
\item $\Ec_0 := \{(X^n, Y^n) \notin \aep\}$
\item $\Ec_1 := \{(U^n(l), X^n, Y^n) \notin \aep\}$ for all $l\in [1:2^{nR_l}]$
\item $\Ec_2 := \{(U^n(l), X^n, Y^n, Z^n) \notin \aep\}$ for all $l\in [1:2^{nR_l}]$
\item $\Ec_3 := \{(U^n(L), \Xh^n(L, m_{11}), X^n, Y^n) \notin \aep\}$ for all $m_{11}\in [1:2^{nR_{11}}]$
\item $\Ec_4 := \{(U^n(\lh), Y^n) \in \aep\}$ for some $\lh \neq L$ and $U^n(\lh) \in \Bc_1(M_{10})$
\item $\Ec_5(\Mh_2) := \{(U^n(\lt), Z^n) \in \aep\}$ for some $\lt \neq L$ and $U^n(\lt) \in \Bc_2(\Mh_2)$
\end{enumerate}
We can then bound the probability of error as
\begin{align*}
\P_e &\le \P\{\bigcup_{i=0}^5 \Ec_i\} = \sum \P\{\Ec_i \cap (\bigcap_{j=0}^{i-1}\Ec_j^c)\}.
\end{align*}
\begin{itemize}
\item $\P\{\Ec_0\}\to 0$ as $n \to \infty$ by Law of Large Numbers (LLN).
\item By the covering lemma in~\cite[Lecture 3]{El-Gamal--Kim2010}, $\P\{\Ec_1\cap \Ec_0^c\}\to 0$ as $n \to \infty$ if
\begin{align*}
R_{l} > I(U;X,Y) + \d(\e).
\end{align*}
\item $\P\{\Ec_2 \cap \Ec_1^c\cap \Ec_0^c\}\to 0$ as $n \to \infty$ by the Markov relation $U-(X,Y) - Z$ and the conditional joint typicality lemma~\cite[Lecture 2]{El-Gamal--Kim2010}.
\item By the covering lemma in~\cite[Lecture 3]{El-Gamal--Kim2010}, $\P\{\Ec_3\cap (\bigcap_{j=0}^{2}\Ec_j^c\}\to 0$ as $n \to \infty$ if
\begin{align*}
R_{11} > I(\Xh_1;X|U,Y) + \d(\e).
\end{align*}
\item From the analysis of the Wyner-Ziv Coding scheme (see~\cite{Wyner} or~\cite[Lecture 12]{El-Gamal--Kim2010}), $\P\{\Ec_4\cap (\bigcap_{j=0}^{3}\Ec_j^c\}\to 0$ as $n \to \infty$ if
\begin{align*}
R_l - R_{10} < I(U;Y) - \d(\e).
\end{align*}
\item For the last term, we have
\begin{align*}
\P\{\Ec_5(\Mh_2)\cap (\bigcap_{j=0}^{4}\Ec_j^c)\} & {=} \P\{\Ec_5(\Mh_2)\cap (\bigcap_{j=0}^{4}\Ec_j^c) \cap \{\Mh_2 \neq M_2\}\} \\
&\qquad + \P\{\Ec_5(\Mh_2)\cap (\bigcap_{j=0}^{4}\Ec_j^c) \cap \{\Mh_2 = M_2\}\} \\
& \stackrel{(a)}{=} \P\{\Ec_5(\Mh_2)\cap (\bigcap_{j=0}^{4}\Ec_j^c) \cap \{\Mh_2 = M_2\}\} \\
& = \P\{\Ec_5(M_2)\cap (\bigcap_{j=0}^{4}\Ec_j^c) \cap \{\Mh_2 = M_2\}\} \\
& \le  \P\{\Ec_5(M_2)\cap \Ec_{2}^c\}.
\end{align*}
Step $(a)$ follows from the observation that $(\bigcap_{j=0}^{4}\Ec_j^c) \cap \{\Mh_2 \neq M_2\} = \emptyset$. The analysis of the probability of error therefore reduces to the analysis for the equivalent Wyner-Ziv setup with $Z$ as the side information at Node 2. Hence, $\P\{\Ec_5(\Mh_2)\cap (\bigcap_{j=0}^{4}\Ec_j^c)\}\to 0$ as $n\to \infty$ if
\begin{align*}
R_{l} - R_{2} < I(U;Z) - \d(\e).
\end{align*}
\end{itemize}
Eliminating $R_{l}$ in the aforementioned inequalities then gives us the required rate region. 

\noindent \textit{Achievability proof of Theorem~\ref{thm2}}

As the achievability proof for the Triangular Source Coding Case follows that of the Cascade Source Coding Case closely, we will only include the additional steps required for generating $R_3$ and analysis of probability of error at Node 2. The steps for generating $R_1$ and $R_2$, and for reconstruction at Node 1 are the same as the Cascade setup.

\subsection{Codebook Generation}
\begin{itemize}
\item Fix $p(x,y,z,u,v, \xh_1) = p(x)p(y|x)p(z|y)p(u|x,y)p(\xh_1|x,y,u)p(v|x,y,u)$. 
\item For each $u^n(l)$, generate $V^n(l_3)$, $l_3 \in [1:2^{n\Rt_3}]$, according to $\prod_{i=1}^n p(v_i|u_i)$. Partition the set of $v^n$ sequences into $2^{nR_3}$ bins, $\Bc_3(m_3)$. 
\end{itemize}
\subsection{Encoding}
\begin{itemize}
\item Given a sequence $(x^n, y^n)$ and $u^n(l)$ found through the steps in the Cascade Source Coding setup, the encoder looks for an index $l_3$ such that $(u^n, v^n(l, l_3), x^n, y^n) \in \aep$. If it finds more than one, it selects one uniformly at random from the set of admissible indices. If it finds none, it outputs an index uniformly at random from $[1:2^{n\Rt_3}]$. The encoder then sends out $m_3$ such that $L_3 \in \Bc_3(m_3)$.
\end{itemize}
\subsection{Decoding}
The additional decoding step is in decoding $L_3$. Node 2 looks for the unique $\lh_3$ such that $(u^n(\lt), v^n(\lt, \lh_3),z^n) \in \aep$ and $v^n(\lh_3) \in \Bc_3(M_3)$. If there is none or more than one, it outputs $\mh_3 = 1$.
\subsection{Analysis of Distortion}
Let $L$, $M_{11}$ and $M_3$ be the indices chosen by the encoder. Note that these fix the indices $M_{10}$ and $M_2$. We follow similar analysis as in the Cascade case, with the same definitions for error events $\Ec_0$ to $\Ec_5$. We also require the following additional error events:
\begin{enumerate}
\item[7)] $\Ec_{6}:=\{(U^n(L), V^n(L, L_{3}), X^n, Y^n)\notin \aep\}$.
\item[8)] $\Ec_{7}:=\{(U^n(L), V^n(L, L_{3}), X^n, Y^n, Z^n)\notin \aep\}$.
\item[9)] $\Ec_8(\Lt):= \{(U^n(\Lt), V^n(\Lt, \lh_{3}), Z^n) \in \aep\}$ for some $\lh_3 \neq L_3$ and $\lh_3 \in \Bc_3(M_3)$.
\end{enumerate}
To bound the probability of error, we have the following additional terms
\begin{itemize}
\item By the covering lemma, $\P(\Ec_6\cap \Ec_2^c)\to 0$ as $n \to \infty$ if
\begin{align*}
\Rt_3 > I(V;X,Y|U) + \d(\e).
\end{align*}
\item $\P(\Ec_7\cap \Ec_6^c)\to 0$ as $n \infty$ from the Markov condition $(V,U) - (X,Y) - Z$ and the conditional joint typicality lemma.
\item $\P\{\Ec_8(\Lt) \cap\Ec^c_5(\Mh_2)\cap \Ec^c_7 \cap (\bigcap_{j=0}^{4}\Ec_j^c))\}$. We have{\allowdisplaybreaks
\begin{align*}
& \P\{\Ec_8(\Lt) \cap\Ec^c_5(\Mh_2)\cap \Ec^c_7 \cap (\bigcap_{j=0}^{4}\Ec_j^c)\} \\
& =\P\{\Ec_8(\Lt) \cap\Ec^c_5(\Mh_2)\cap \Ec^c_7 \cap (\bigcap_{j=0}^{4}\Ec_j^c) \cap \{\Mh_2 = M_2\}\} + \P\{\Ec_8(\Lt) \cap\Ec^c_5(\Mh_2)\cap \Ec^c_7 \cap (\bigcap_{j=0}^{4}\Ec_j^c)\cap\{\Mh_2 \neq M_2\}\} \\
& = \P\{\Ec_8(\Lt) \cap\Ec^c_5(\Mh_2)\cap \Ec^c_7 \cap (\bigcap_{j=0}^{4}\Ec_j^c) \cap \{\Mh_2 = M_2\}\} \\
& = \P\{\Ec_8(\Lt) \cap\Ec^c_5(M_2)\cap \Ec^c_7 \cap (\bigcap_{j=0}^{4}\Ec_j^c) \cap \{\Mh_2 = M_2\}\} \\
& \le \P\{\Ec_8(\Lt) \cap\Ec^c_5(M_2)\cap \Ec^c_7\} \\
& \stackrel{(a)}{=} \P\{\Ec_8(\Lt) \cap\Ec^c_5(M_2)\cap \Ec^c_7 \cap \{\Lt = L\}\} \\
& = \P\{\Ec_8(L) \cap\Ec^c_5(M_2)\cap \Ec^c_7 \cap \{\Lt = L\}\} \\
& \le \P\{\Ec_8(L)\cap \Ec_7^c\}.
\end{align*}}
$(a)$ follows from the observation that $\Ec^c_5(M_2)\cap \Ec^c_7 \cap \{\Lt \neq L\} = \emptyset$. It remains to bound $\P\{\Ec_8(L)\cap \Ec_7^c\}$. Note that the analysis of this term is equivalent to analyzing the setup where $U^n$ is the side information at Node 0 and $(U^n, Z^n)$ is the side information at Node 2. Hence, $\P\{\Ec_8(L)\cap \Ec_7^c\} \to 0$ as $n \to \infty$ if
\begin{align*}
\Rt_3 - R_3 < I(V;Z|U) - \d(\e).
\end{align*}
\end{itemize}
We then obtain the rate region by eliminating $\Rt_3$ and $R_{l}$.

\noindent \textit{Achievability proof of Theorem~\ref{thm3}}

As with the case for the Triangular setting, the proof for this case follows the Cascade setting closely. We will therefore include only the additional steps. We have a change of notation from the Cascade setting. We will use $U_1$ instead of $U$
\subsection{Codebook Generation}
\begin{itemize}
\item Fix $p(x,y,z, u_1, u_2, \xh_1)= p(x,y,z)p(u_1|x,y)p(\xh_1|u_1,x,y)p(u_2|z,u_1)$. 
\item For each $u^n_1(l)$, generate $2^{nR_3}$ $U_2^n(l_3)$ sequences, $l \in [1:2^{n\Rt_3}]$, each according to $\prod_{i=1}^np(u_{2i}|u_{1i})$. Partition the set of $U^n_2$ into $2^{nR_3}$ bins, $\Bc_3(m_3)$.
\end{itemize}
\subsection{Encoding}
The additional encoding step is at Node 2. Node 2 looks for an index $L_3$ such that $(u^n_1(L), u^n_2(L, L_3), Z^n) \in \aep$. As before, if it finds more than one, it selects an index uniformly at random from the set of admissible indices. If it finds none, it outputs an index uniformly at random from $[1:2^{n\Rt_3}]$. It then outputs the bin index $m_3$ such that $L_3 \in \Bc_3(m_3)$. 
\subsection{Decoding}
Additional decoding is required at Node 0. Node 0 looks the index $\lh_3$ such that $(u_1^n(l), u_2^n(l, \lh_3), x^n, y^n) \in \aep$ and $\lh_3 \in \Bc_3(m_3)$. 
\subsection{Analysis of distortion}
Let $\Ec_{Cascade}$ denote the event that an error occurs in the forward Cascade path. In addition, we define the following error events.
\begin{itemize}
\item $\Ec_{TW-1}(\Lh):= \{(U^n_1(\Lh), U^n_2(\Lh, l_3), Z^n) \notin \aep$ for all $l_3 \in [1:2^{n\Rt_3}]\}$.
\item $\Ec_{TW-2}(\Lh):= \{(U^n_1(\Lh), U^n_2(\Lh, L_3), Z^n, X^n, Y^n) \notin \aep\}$.
\item $\Ec_{TW-3}(\Lh):= \{(U^n_1(\Lh), U^n_2(\Lh, \lh_3), X^n, Y^n) \in \aep$ for some $\lh_3 \in \Bc_3(M_3), \lh_3 \neq L_3\}$.
\end{itemize}
\begin{itemize}
\item $\P(\Ec_{TW-1}(\Lh) \cap \Ec_{Cascade}^c) = \P(\Ec_{TW-1}(L) \cap \Ec_{Cascade}^c)\to 0$ as $n \to \infty$ if
\begin{align*}
\Rt_3 > I(U_2;Z|U_1) + \d(\e).
\end{align*}
\item $\P(\Ec_{TW-2}(\Lh) \cap \Ec_{Cascade}^c) = \P(\Ec_{TW-2}(L) \cap \Ec_{Cascade}^c)\to 0$ as $n \to \infty$ by the strong Markov Lemma~\cite{Tung}.
\item $\P(\Ec_{TW-3}(\Lh) \cap \Ec_{Cascade}^c) = \P(\Ec_{TW-3}(L) \cap \Ec_{Cascade}^c)\to 0$ as $n \to \infty$ if
\begin{align*}
\Rt_3 - R_3 < I(U_2; X,Y|U_1) - \d(\e).
\end{align*}
\end{itemize}

Finally, eliminating $\Rt_3$ and $R_l$ gives us the required rate region.

\noindent \textit{Achievability proof of Theorem~\ref{thm4}}

The achievability proof for Two Way Triangular source coding combines the proofs of the Triangular source coding case and the Two-way cascade case. As it is largely similar to these proofs, we will not repeat it here. We will just mention that the codebook generation, encoding, decoding and analysis of distortion for the forward path from Node 0 to Node 2 follows that of the Triangular source coding case, while codebook generation, encoding, decoding and analysis of distortion for the reverse path from Node 2 to Node 0 follows that of the Two-way Cascade source coding case, with $(U_2, V)$ taking the role of $U_2$.

\section{Cardinality Bounds} \label{appen:2}
We provide cardinality bounds for Theorems~\ref{thm1}-\ref{thm4} stated in the paper. The main tool we will use is the Fenchel-Eggleston-Caratheodory Theorem~\cite{Eggleston}. 

\subsection{Proof of cardinality bound for Theorem~\ref{thm1}}

For each $x,y$, we have
\begin{align*}
f_{j}(p_{X,Y|U}(x,y|u)) = \sum_{u}p(u)p(x,y|u) = p(x,y). 
\end{align*}
We therefore have $|\Xc||\Yc| - 1$ continuous functions of $p(x,y|u)$. These set of equations preserves the distribution $p(x,y)$ and hence, by Markovity, $p(x,y,z)$. Next, observe that the following are similarly continuous functions of $p(x,y|u)$
\begin{align*}
I(U;X,Y|Z) &= H(X,Y|Z) - H(X,Y,Z|U) + H(Z|U), \\
I(X;\Xh_1, U|Y) &= H(X|Y) - H(X|U) + H(X, \Xh_1, Y|U), \\
\E d_1(X, \Xh_1) &= \sum_{x,\xh} p(x, \xh_1)d(x, \xh_1), \\
\E d_2(X, \Xh_2) &= \sum_{x,y,u} p(x, y,u)d(x, g_2(x,u)),
\end{align*}
These equations give us 4 additional continuous functions and hence, by Fenchel-Eggleston-Caratheodory Theorem, there exists a $U'$ with cardinality of $|\Xc||\Yc| + 3$ such that all the constraints are satisfied. Note that this construction does not preserve $p(\xh_1)$, but this does not change the rate-distortion region since the associated rate and distortion are preserved. 

\subsection{Proof of cardinality bound for Theorem~\ref{thm2}}
We will first give a bound for the cardinality of $U$. We look at the following continuous functions of $p(x,y|u)$.
\begin{align*}
f_{j}(p_{X,Y|U}(x,y|u)) &= \sum_{u}p(u)p(x,y|u) = p(x,y), \forall x,y \\ 
I(U;X,Y|Z) &= H(X,Y|Z) - H(X,Y,Z|U) + H(Z|U), \\
I(X;\Xh_1, U|Y) &= H(X|Y) - H(X|U) + H(X, \Xh_1, Y|U), \\
I(X,Y;V|U,Z) &= H(X,Y, Z|U) - H(Z|U) - H(X,Y,V,Z|U) + H(V,Z|U), \\ 
\E d_1(X, \Xh_1) &= \sum_{x,\xh} p(x, \xh_1)d(x, \xh_1), \\
\E d_2(X, \Xh_2) &= \sum_{x,y,u,v} p(x, y,u, v)d(x, g_2(x,u)).
\end{align*}
From these equations, there exists a $U'$ with $|\Uc'| \le |\Xc||\Yc| + 4$ such that the equations are satisfied. Note that the new $U'$ induces a new $V'$. For each $U' = u$, consider the following continuous functions of $p(x,y|u,v)$
\begin{align*}
p(x,y|u) &= \sum_{v}p(v|u)p(x,y|v,u), \\
I(X,Y;V|U=u, Z) &= H(X,Y|U=u,Z) - H(X,Y|V, U= u, Z), \\
\E(d_2(X, \Xh_2)|U=u) &= \sum_{x,y,v} p(x, y,v|u)d(x, g_2(x,u)).
\end{align*}
From this set of equations, we see that for each $U' = u$, it suffices to consider $V'$ such that $|\Vc'| \le |\Xc||\Yc| +1$. Hence, the overall cardinality bound on $V$ is $|\Vc|\le (|\Xc||\Yc| + 4)(|\Xc||\Yc| +1)$. The joint $p(x,y,z)$ is preserved due to the Markov Chain $(V,U)- (X,Y) - Z$.

\subsection{Proof of cardinality bound for Theorem~\ref{thm3}}
The cardinality bounds on $U_1$ follows similar analysis as in the Cascade source coding case. The proof is therefore omitted. For each $U_1 = u_1$, the following are continuous functions of $p(z|u_2, u_1)$,
\begin{align*}
p(z|u_1) &= \sum_{u_2}p(u_2|u_1)p(z|u_2, u_1), \\
I(U_2;Z|U_1=u_1, X,Y) &= H(Z|U_1=u_1, X,Y) - H(Z|U_1=u_1, U_2, X,Y), \\
\E(d_3(Z, \Zh)|U_1 = u_1) &= \sum_{x,y,z, u_2} p(x, y,z,u_2|u_1)d(z, g_3(x,y,u_1,u_2)). 
\end{align*}
From this set of equations, we see that for each $U_1 = u_1$, it suffices to consider $U_2'$ such that $|\Uc_2'| \le |\Zc|+1$. Hence, the overall cardinality bound on $U_2$ is $|\Uc_2|\le |\Uc_1|(|\Zc|+1)$. The joint $p(x,y,z)$ is preserved due to the Markov Chains $U_1 - (X,Y)- Z$ and $U_2- (Z,U_1) - (X,Y)$.

\subsection{Proof of cardinality bound for Theorem~\ref{thm4}}

The cardinality bounds follow similar steps to those for the first 3 theorems. For the cardinality bound for $|\Uc_2|$, we find a cardinality bound for each $U_1 = u_1$ and $V = v$. Details of the proof are omitted.

\section{Alternative characterizations of rate distortion regions in Corollaries~\ref{coro1} and~\ref{coro2}} \label{appen:3}
In this appendix, we show that the rate distortion regions in Corollaries~\ref{coro1} and~\ref{coro2} can alternatively be characterized by transforming them into equivalent problems found in~\cite{Permuter}, where explicit characterizations were given. We focus on the Cascade case (Corollary~\ref{coro1}), since the Triangular case follows by the same analysis.

Figure~\ref{fig6} shows the Cascade source coding setting which the optimization problem in Corollary~\ref{coro1} solves. 

\begin{figure} [!h] 
\begin{center}
\psfrag{x}[c]{$A+B$}
\psfrag{y}{$B$}
\psfrag{z}{}
\psfrag{r1}{$R_1$}
\psfrag{r2}{$R_2$}
\psfrag{x1}{$\Xh_1$}
\psfrag{x2}{$\Xh_2$}
\psfrag{n0}[l]{Node 0}
\psfrag{n1}[l]{Node 1}
\psfrag{n2}[l]{Node 2}
\scalebox{0.75}{\includegraphics{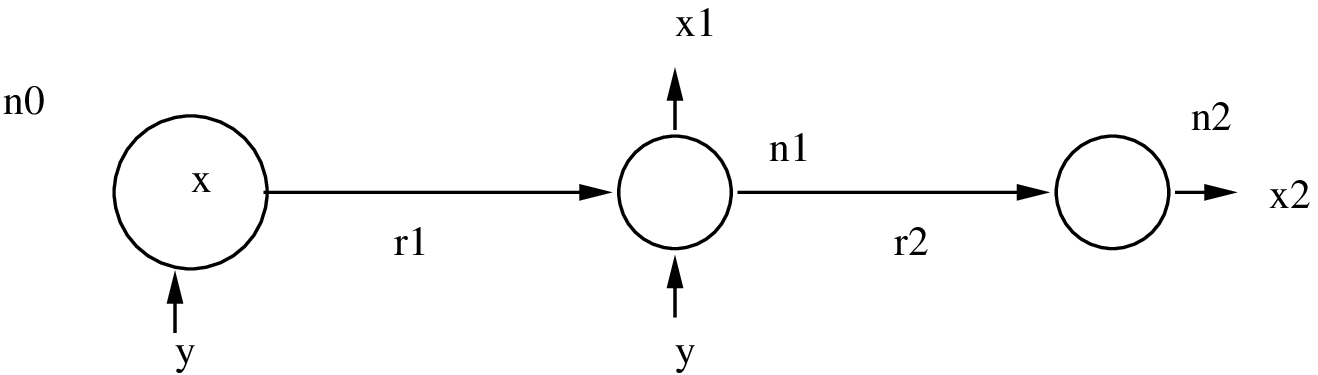}} 
\end{center}
\caption{Cascade source coding setting for the optimization problem in Corollary~\ref{coro1}. $\Xh_1$ and $\Xh_2$ are lossy reconstructions of $A+B$.} \label{fig6}
\end{figure} 

In~\cite{Permuter}, explicit characterization of the Cascade source coding setting in Figure~\ref{fig7} was given. 

\begin{figure} [!h] 
\begin{center}
\psfrag{x}[c]{$X$}
\psfrag{y}[c]{$Y = X+Z$}
\psfrag{z}{}
\psfrag{r1}{$R_1$}
\psfrag{r2}{$R_2$}
\psfrag{x1}{$\Xh_1$}
\psfrag{x2}{$\Xh_2$}
\psfrag{n0}[l]{Node 0}
\psfrag{n1}[l]{Node 1}
\psfrag{n2}[l]{Node 2}
\scalebox{0.75}{\includegraphics{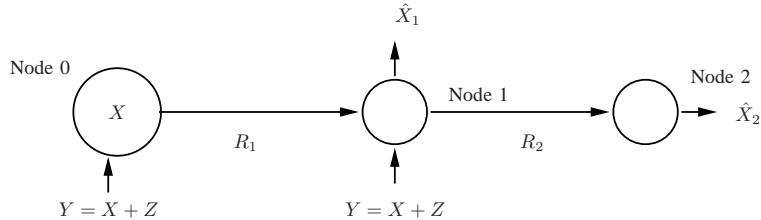}} 
\end{center}
\caption{Cascade source coding setting for the optimization problem in Corollary~\ref{coro1}. $\Xh_1$ and $\Xh_2$ are lossy reconstructions of $X$ and $Z$ is independent $X$.} \label{fig7}
\end{figure}

We now show that the setting in Figure~\ref{fig6} can be transformed into the setting in Figure~\ref{fig7}. First, we note that for the setting in Figure~\ref{fig7}, the rate distortion regions are the same regardless of whether the sources are $(X,Y)$ or $(X,\alpha Y)$ where $\alpha \neq 0$ since the nodes can simply scale $Y$ by an appropriate constant. 

Next, for Gaussian sources, the two settings are equivalent if we can show that the covariance matrix of $(X,\alpha Y)$ can be made equal to the covariance matrix of $(A+B, B)$. Equating coefficients in the covariance matrix, we require the following
\begin{align*}
\sigma_{X}^2 = \sigma_{A}^2 + \sigma_{B}^2, \\
\alpha \sigma_{X}^2 = \sigma_{B}^2, \\
\alpha^2(\sigma_{X}^2 + \sigma_{Z}^2) = \sigma_{B}^2.
\end{align*}

Solving these equations, we see that $\alpha = \sigma_{B}^2/(\sigma_{A}^2 + \sigma_{B}^2)$ and $\sigma_{Z}^2 = (\sigma_{B}^2 - \alpha^2 \sigma_{X}^2)/\alpha^2$. Since $(\sigma_{B}^2 - \alpha^2 \sigma_{X}^2) \ge 0$, this choice of $\sigma_{Z}^2$ is valid, which completes the proof.
\end{document}